\newcommand{\abs}[1]{\left\vert{#1}\right\vert}
\newcommand{\ind}{\mathbb{I}}
\renewcommand{\P}{\mathbb{P}}
\newcommand{\E}{\mathbb{E}}
\def \loc{\text{Loc}}
\def\m{m}
\def\hu{\text{\sf H}}
\def\bp{{{\mathbf{p}}}}
\def\bs{{{\mathbf{s}}}}
\def\vminus{{{\mathbf{v}}_{-i}}}
\newcommand{\Q}{\mathtt{q}}
\newcommand{\p}{\mathtt{r}}
\def\R{\small{\sf R}}
\def \Mat{\cal M}
\def \mech{\mathbf M}
\newcommand{\PP}{\mathsf{SPP}}
\newcommand{\PPH}{\PP_\hu}
\newcommand{\PPM}{\PP_{\Mat}}
\newcommand{\EG}{\mathsf{ESP}}
\newcommand{\Opt}{\mathsf{Opt}}
\newcommand{\OptM}{\Opt_\Mat}
\newcommand{\feas}{\cal F}
\newcommand{\MP}{\mathsf{MP}}
\newcommand{\T}{{\cal T}}
\newcommand{\ME}{\mathsf{ME}}
\newcommand{\UE}{\mathsf{UE}}
\newcommand{\UP}{\mathsf{UP}}
\newcommand{\thickhline}{%
    \noalign {\ifnum 0=`}\fi \hrule height 1pt
    \futurelet \reserved@a \@xhline
}
\newcolumntype{"}{@{\hskip\tabcolsep\vrule width 1pt\hskip\tabcolsep}}
\newenvironment{oneshot}[1]{\@begintheorem{#1}{\unskip}}{\@endtheorem}
\begin{document}
\RUNAUTHOR{Beyhaghi et al.}

\RUNTITLE{Improved Revenue Bounds for Posted-Price and Second-Price Mechanisms}
\TITLE{{Improved Revenue Bounds for Posted-Price and Second-Price Mechanisms}}

\ARTICLEAUTHORS{%
\AUTHOR{Hedyeh Beyhaghi,}
\AFF{Toyota Technological Institute at Chicago, Chicago, IL,\EMAIL{
hedyeh@ttic.edu}} 
\AUTHOR{Negin Golrezaei,}
\AFF{Sloan School of Management, Massachusetts Institute of Technology, Cambridge, MA, \EMAIL{golrezae@mit.edu}}
\AUTHOR{Renato Paes Leme, Martin P\'{a}l, and Balasubramanian Sivan}
\AFF{Google Research, New York, NY, \EMAIL{renatoppl, mpal, balusivan@google.com}}
}

\ABSTRACT{
	We study revenue maximization through sequential posted-price (SPP) mechanisms in single-dimensional settings with $n$ buyers and independent but not necessarily identical value distributions. We construct the SPP mechanisms by considering the best of two simple pricing rules: one that imitates the revenue optimal mchanism, namely the Myersonian mechanism, via the taxation principle and the other that posts a uniform price. Our pricing rules are rather generalizable and yield the first improvement over long-established approximation factors in several settings. We design factor-revealing mathematical programs that crisply capture the approximation factor of our SPP mechanism. In the single-unit setting, our SPP mechanism yields a better approximation factor than the state of the art prior to our work~\citep{azar2017prophet}. 
	In the multi-unit setting, our SPP mechanism yields the first improved approximation factor over the state of the art after over nine years~(\cite{yan2011mechanism} and \cite{TEGMM10}). Our results on SPP mechanisms immediately imply improved performance guarantees for the equivalent free-order prophet inequality problem. In the position auction setting, our SPP mechanism yields the first higher-than $1-1/e$ approximation factor. In eager second-price (ESP) auctions, our two simple pricing rules lead to the first improved approximation factor that is strictly greater than what is obtained by the SPP mechanism in the single-unit setting. 
}

\KEYWORDS{posted-price mechanisms, eager second-price auctions, multi-unit, position auctions, online advertising.} 

\maketitle
\section{Introduction}
\label{sec:intro}
{\color{black}The seminal work of \cite{myerson1981optimal} and a generalization by~\cite{archer2001truthful} established the revenue optimal mechanism in general single-dimensional settings. }This mechanism is optimal among all possible Bayesian incentive-compatible and interim individually rational mechanisms. This optimal mechanism, which is also referred to as the Myersonian mechanism, is a wonderful conceptual vehicle, but it is rarely used in practice due to its complex structure and strong dependence on buyer value distributions. {\color{black}In the Myersonian mechanism, how the auctioned item is allocated and what buyers pay are considerably complex to communicate to the buyers.} When one is forced to run a more natural but sub-optimal mechanism like sequential posted-price (SPP) mechanisms, one of the primary questions of interest is to lower bound the fraction of the optimal revenue that SPP mechanisms can obtain. In this paper, we develop new structural insights into the design of sequential posted-price (SPP) mechanisms to establish improved revenue approximation factors with respect to the optimal mechanism in single-dimensional settings

 In a general single-dimensional setting, there are $n$ buyers with independent but potentially non-identical value distributions as well as a feasibility constraint on which set of buyers can be simultaneously served. SPP mechanisms compute one price per buyer and approach buyers in the descending order of prices, making take-it-or-leave-it offers at the posted price. {\color{black}Running SPP mechanisms to determine allocation and payment satisfies numerous desired properties, including trivial game dynamics for the buyers, buyers not having to reveal their private values, sellers not having to assemble all buyers together to decide allocation/payment, etc. We refer the reader to~ \cite{chawla2010multi} for a more detailed discussion.  These properties make SPP mechanisms objects of both practical relevance and scientific interest.}

The focal point of our work is the development of two \emph{pricing rules} that compute the prices to be posted to buyers in {\color{black}an SPP} mechanism. {\color{black}We name our two pricing rules  \emph{Myersonian pricing rule}  {\color{black}(which we also refer to as the Myersonian SPP mechanism)} and the \emph{uniform pricing rule}.}
Our final SPP mechanism sets prices using one of them, depending on which one offers higher expected revenue for the distribution in hand. The Myersonian pricing rule, inspired by the Myersonian mechanism, sets the {\color{black} taxation-principle-prescribed} prices for each buyer. The taxation principle says that any deterministic incentive-compatible mechanism (which includes the Myersonian Mechanism, see Section~\ref{sec:taxation}) can be interpreted as {\color{black}an SPP} mechanism, except that the posted price for each buyer is a function of other buyers' values. Of course, this is not really {\color{black}an SPP} mechanism as the latter does not allow us to solicit buyer values. The twist in our Myersonian pricing rule is that for each buyer $i$, we \emph{sample} the values of other buyers $j \neq i$ from their distributions, and compute 
the posted price for $i$ that the {\color{black}taxation-principle} interpretation of the Myersonian mechanism would have yielded. We perform {\color{black}{fresh and independent sampling}} while computing the prices for 
different buyers. Thus, while the prices in the Myersonian mechanism are highly correlated across buyers, they are independent in the Myersonian pricing rule, crucially helping our analysis. 
The \emph{uniform} pricing rule posts a single (anonymous) price across all buyers.

{\color{black}The design of our two pricing rules is motivated by our observation that in the worst-case example for the Myersonian pricing rule, the uniform pricing rule performs almost optimally. This suggests that these two pricing rule can complement each other.} Building on this conceptual understanding, we write a novel factor-revealing mathematical program whose objective captures the optimal revenue and whose constraints enforce the revenue of the two aforementioned SPP mechanisms to be at most $1$. The resulting mathematical program lends itself to a clean solution, with the optimal objective value directly yielding the approximation factor. We apply these two pricing rules and our factor-revealing technique to many settings and obtain improved approximation factors. {\color{black}\mbox{Table \ref{table:results}}} summarizes our most important results in the different settings that we study and compares these to the best-known bounds prior to this work. We now discuss our results in detail.

\begin{table}
\footnotesize
\begin{center}
\makebox[\textwidth][c]{ 
\begin{tabular}{ |c|| c|c||cc|}\hline
  \multirow{2}{*}{Setting}      & \multicolumn{1}{c|}{Our Universal}                  &    Prior Universal  & \multicolumn{1}{c|} {Our n-dependent}     & Prior  n-dependent    \\
					    & \multicolumn{1}{c|}{Bound }                        & Bound                                       & \multicolumn{1}{c|} {Bound}  & Bound             \\ \hline
\multirow{2}{*}{$1$-unit SPP} & \multirow{2}{*}{$0.6543$}  &   \multirow{4}{*}{ \begin{tabular}{c}$ 0.6346$\\[0.3em]
                                      \citep{azar2017prophet}
                \end{tabular}} &  \multicolumn{1}{c|} {\multirow{2}{*}{\begin{tabular}{c}{See 3rd row of}\\[0.3em]
                                   Table \ref{tab:smalln}                \end{tabular}} }               
                 & \multirow{4}{*}{ \begin{tabular}{c}$0.6346$ for $n \ge 74$,\\
                                     \citep{azar2017prophet}\\
                                     $1-(1-\frac{1}{n})^n$   for $n < 74$\\
                                     \citep{chawla2010multi}
                \end{tabular}}

\\
& && \multicolumn{1}{c|} {} & 
   \\\cline{1-2}\cline{4-4}
   \multirow{2}{*}{$1$-unit ESP}& \multirow{2}{*}{$0.6620$} & &  \multicolumn{1}{c|} {See 4th row of } &        \\
   & &&\multicolumn{1}{c|} {Table \ref{tab:smalln}}&    \\\hline
    \multirow{2}{*} {$\hu$-unit SPP}  & \multirow{2}{*}{\begin{tabular}{c}{See }\\ 
                                   Table \ref{table:multi}                \end{tabular}}   & \multirow{2}{*}{$1- \frac{\hu^\hu}{\hu! e^{\hu}}$ 
                                   \citep{yan2011mechanism} }                               
                                                                       &  \multicolumn{2}{c|}{\multirow{4}{*}{-}}
                                 \\
       & & & &  \\ \cline{1-3}
     \multirow{2}{*} {Position Auctions} & \multirow{2}{*}{0.6543} &\multirow{2}{*}{-} &&
   \\
    & &&&\\ \hline
       \end{tabular}
  } 
\end{center}
	\caption{The bounds that we achieve, compared to the best-known bounds prior to this work. Followup work by~\cite{CSZ19} improves the best approximation factor for $1$-unit SPP and $1$-unit ESP to $0.669$.}
\label{table:results}
\end{table}
\textbf{SPP in Single-unit Settings.} The first approximation factor for SPP mechanisms in single-unit ($1$-unit) settings was established by ~\citet{chawla2010multi}, who show that SPP mechanisms obtain a $1-\frac{1}{e}$ fraction of the optimal revenue---that is, a $1-\frac{1}{e}$ approximation factor. Since the result of \cite{chawla2010multi}, the same $1-\frac{1}{e}$ approximation \big (or, more generally, the $1-(1-\frac{1}{n})^n$ approximation, where $n$ is the number of buyers\big), for SPP mechanisms was found to be obtainable with various techniques, including pipage rounding \citep{calinescu2011maximizing} and  correlation gap (\cite{agrawal2012price} and~\cite{yan2011mechanism}). The first improvement over this $1-(1-\frac{1}{n})^n$  bound was achieved recently by~\cite{azar2017prophet}, who show how to improve $1-(1-\frac{1}{n})^n$ to $1-\frac{1}{e} + 1/400 \approx 0.6346$ for $n \geq 74$, while still staying at $1-(1-\frac{1}{n})^n$ for  $n<74$. The $n$-dependent bounds obtained by~\cite{azar2017prophet} leads to a universal bound (i.e., valid 
for any number of buyers $n$)  of  $0.6346$, which was the best universal bound prior to this work.

Our pricing rules and factor-revealing technique enable us to provide an improved universal bound of $0.6543$ for SPP mechanisms in the single-unit setting. Note that the improvement from the universal bound of $0.6346$ by \cite{azar2017prophet} to $0.6543$  for SPP mechanisms is significant in light of the fact that SPP mechanisms cannot yield more than a $0.745$ fraction of the optimal revenue even when the valuations are i.i.d. (see~\cite{HK82} and also~\cite{correa2017posted}).\footnote{The results in \cite{HK82} and \cite{correa2017posted} are presented for the equivalent prophet inequalities setting. We discuss the prophet inequalities setting in Section \ref{sec:freeorderprophet}.} The worst-case universal bound of $0.6543$ (worst-case occurs when $n$ goes 
to infinity), is useful in settings where either there is significant uncertainty in the number of buyers or the number of buyers is rather large. For smaller $n$, our approximation factors are noticeably larger. Table \ref{tab:smalln} presents our improved $n$-dependent bounds for {\color{black}$n \in \{1, 2, \ldots, 10\}$} along with the value of $1-(1-\frac{1}{n})^n$ for comparison (the last row in the table for ESP will be explained later). 
  
  \begin{table}[h!]
\begin{center}\setlength\extrarowheight{1pt}
\footnotesize{
\begin{tabular}{ |c||c|c|c|c|c|c|c|c|c|c| }
\hline 
   $n$ & 1&2&3&4&5&6&7&8&9&10 \\
 \hline
 $1-(1-1/n)^n$ &$1.0000$&$0.7500$&$0.7037$&$0.6836$ &$0.6723$& $0.6651$ & $0.6601$&$0.6564$&$0.6536$ &$0.6513$\\ \hline
 $1$-unit SPP& $1.0000$  &$0.7586$&$0.7168$&$0.6990$&$0.6891$&$0.6828$&$0.6785$& $0.6753$&$0.6728$&$0.6709$\\ \hline
ESP &$1.0000$&$0.7611$&$0.7210$&$0.7040$&$0.6946$&$0.6887$&$0.6846$&$0.6815$&$0.6792$&$0.6774$\\\hline
\end{tabular}}
\end{center}
\vspace{0.1in}
  \caption{Approximation factors of SPP mechanisms and ESP auctions for different numbers of buyers $n$.}
\label{tab:smalln}
\end{table}

\textbf{SPP in $\hu$-unit Settings.} For the $\hu$-unit (multi-unit) setting, we beat the correlation-gap-generated factor of $1-\frac{\hu^{\hu}}{\hu!e^{\hu}}$ by \cite{yan2011mechanism} (the same bound as~\citeauthor{yan2011mechanism} was obtained in the independent work by~\cite{TEGMM10} without the correlation-gap machinery). The exact factor we obtain for different values of $\hu$ is provided in Table~\ref{table:multi}. Beating the known factor necessarily requires a deeper understanding of SPP mechanisms than using a black-box hammer like correlation gap. We obtain such {\color{black}understanding} via our two simple pricing rules. {\color{black}Analyzing our pricing rules in the $\hu$-unit setting, which is done by relating the variables of the Myersonian mechanism to  the variables of the two SPP mechanisms in a factor-revealing mathematical program,  is  technically challenging.} 
Overcoming this challenge yields some neat combinatorial lemmas. See, for example, Lemma~\ref{lm:equal}, where we relate the revenue of the Myersonian SPP mechanism to the optimal revenue.

{\color{black}
 \begin{table}[h]
\setlength{\extrarowheight}{4pt}
\centering
\fontsize{9}{9}\selectfont{{
\begin{tabular}{|c |c |c |c|c|c|c|c|c|c|c|c|c|} 
 \hline
 $\hu$ & 1 & 2 & 3 & 4 & 5 & 6 &7& 8 & 9 & 10\\ 
  \hline
 $1- \frac{\hu^\hu}{\hu! e^{\hu}}$ & 0.6321   & 0.7293  &  0.7760   & 0.8046 &   0.8245 &   0.8394  &  0.8510   & 0.8604 &   0.8682   & 0.8749\\ \hline
 {\color{black}Our Bound for the $\hu$-unit SPP}  & 0.6543   & 0.7427 &   0.7857   & 0.8125 &   0.8311   & 0.8454  &  0.8567   & 0.8656   & 0.8734   & 0.8807\\  \hline
\end{tabular}
}}
\vspace{1em} 
\caption{{The second row presents the best-known bound for the multi-unit setting prior to this work, and the third row presents our improved bound for $\hu\in\{1, 2,\ldots, 10\}$.} Approximation factors are applicable for all values of $n$.}
\label{table:multi}
\end{table}}

\textbf{SPP in Matroidal Settings.} We show that our improved bounds for the multi-unit settings leads to identical improved bounds for partition matroid settings (see Section~\ref{sec:partitionmatroid}). For a general matroid setting, our pricing rules yield an alternate set of prices to achieve the $1-\frac{1}{e}$ approximation from~\cite{yan2011mechanism} (see Section~\ref{sec:matroid}). We leave beating the $1-\frac{1}{e}$ factor for general matroids as an open question and believe that the techniques from this study could be of use in doing this. 

\textbf{SPP in Position Auction Settings.} Position auctions are ubiquitous in search advertising markets (\cite{edelman2007internet,varian2007position,ostrovsky2011reserve,athey2011position} and \cite{lucier2012revenue}). In a position auction setting, there are $n$ buyers (advertisers) and $n$ positions with different click-through rates, and the goal is to assign buyers to the positions. For the position auction setting, even a $1-\frac{1}{e}$ approximation was not known prior to this work, and we obtained a strictly larger than $1-\frac{1}{e}$ approximation factor of $0.6543$. In order to apply our technique to this setting, we show that the optimal position auction can be described as a combination of $n$ multi-unit auctions. This enables us to take advantage of our SPP mechanisms for multi-unit settings to design a novel SPP mechanism with $n^2$ prices {(a price for each buyer and position)}. We show that our SPP mechanism for the position auctions obtains a universal bound of $0.6543$. We further obtain improved bounds for a given vector of click-through rates; see Theorem \ref{thm:PA}.

\textbf{Eager Second-Price Auctions.}
As a further demonstration of the generalizability of our technique, we analyze eager\footnote{There are two different ways that personalized reserve prices can be applied in second-price auctions: lazy and eager \citep{dhangwatnotai2015revenue}. In the lazy version, we first determine the potential	winner and then apply the reserve prices. In the eager version, we first
	apply the reserve prices and then determine the winner. See Section \ref{sec:eager} for details.} second-price (ESP) auctions using our two pricing rules and factor-revealing technique. While~\citet{chawla2010multi} show that an ESP that uses the prices yielded by an SPP mechanism as reserve prices always obtains a weakly higher revenue than the SPP mechanism, no technique has been known to provide 
strictly better approximation factors for ESP than SPP. We use our factor-revealing technique to achieve this. Our universal improved bound for ESP in the $1$-unit setting is $0.6620$, which is strictly greater than that for the SPP mechanism in $1$-unit setting (i.e., $0.6543$). Our $n$-dependent bounds in the $1$-unit setting for ESP, which are presented in the last row of Table \ref{tab:smalln}, are also strictly greater than $n$-dependent bounds for the SPP mechanism. We note the best-known bound for ESP prior to this work was $0.6346$ by \cite{azar2017prophet}. Our improvement from $0.6346$ to $0.6620$ is significant because (i) the ESP auctions cannot obtain a fraction of the optimal revenue that is greater than $0.778$ (see~\cite{MS19}), and (ii) these auctions are run \emph{several billions} of times each day by ad exchanges to allocate ad slots.

\textbf{Connection to Free-order Prophet Inequalities.} Our improved bounds for the SPP mechanism in the single-unit, multi-unit, and partition matroid settings directly imply an identical improvement in the free-order prophet inequality problem in the corresponding settings due to the recent equivalence established by~\cite{correa2017back}. More details on this connection are presented in the Related Work section (Section~\ref{sec:related}).

\paragraph{Organization:} 
We discuss related work in Section~\ref{sec:related}.
Section~\ref{sec:model} formally introduces the model. Section~\ref{sec:single-unit} discusses the single-unit case, and Section~\ref{sec:multi-unit} presents our bounds for the $\hu$-unit setting.  Section~\ref{sec:extend} uses the $\hu$-unit result to derive results for the position auction and partition matroid environments and it also briefly discusses the general matroid setting. Our improved approximation factors for ESP auctions are presented in Section~\ref{sec:eager}.

\section{Related Work}
\label{sec:related}
{\color{black} Our work contributes to the literature on approximating the optimal auction with simple auction formats.}
As stated earlier, the structure of 
 the optimal mechanism can be complex when the value distributions are irregular and/or non-identical across buyers. Because of this, several papers have studied simpler auction formats, such as second-price auctions with (personalized) reserve prices (\cite{HR09, paes2016field, RW16, allouah2018prior}, and \cite{derakhshan2019lp}), boosted second-price auctions \citep{golrezaei2017boosted},  buy-it-now or take-a-chance (BIN-TAC) 
mechanisms \citep{celis2014buy}, and first-price auctions \citep{bhalgat2012online, balseiro2019contextual}, to name a few.

{\citet{HR09} study the question of approximating the optimal revenue via a second-price auction with personalized reserve prices {\color{black} when buyers' valuation distributions  are independent but non-identical}.} They show that for regular distributions the second-price auction with so-called monopoly reserve prices yields a $2$-approximation; however, for irregular distributions, no constant factor approximation is possible.~\citet{paes2016field} consider second-price auctions and study the question of computing the optimal personalized reserve prices in a correlated distribution setting. Further, they show that the problem is NP-complete. ~\citet{RW16} indicate that this problem is 
APX-hard for correlated distributions and provide a $\frac{1}{2}$-approximation. An improved approximation of $0.684$ is subsequently obtained by \cite{derakhshan2019lp}. We note that in a correlated distribution setting, the benchmark is not the optimal revenue; instead, it is the maximum revenue that second-price auctions with optimal reserve prices can obtain. {In the current study, we provide an improved approximation factor for eager second-price auctions in an independent distribution setting and {show that this auction format---despite its simple structure---performs well, even when the distributions are heterogeneous and irregular.} }

\subsection{Free-order Prophet Inequalities}\label{sec:freeorderprophet} As mentioned earlier, our improved bounds for the SPP mechanism in the single-unit, multi-unit, and partition matroid settings directly imply an identical improvement in the free-order prophet inequality problem in the corresponding settings. In the free-order prophet inequality problem, there are $n$ independent random variables
with known distributions. Upon inspecting a variable, a gambler learns its realized value and must choose between stopping and obtaining its value as a reward or abandoning that variable forever and continuing to inspect other variables. The gambler can choose the order with which he wants to inspect the variables. His goal is to maximize his reward by competing with a prophet that knows all the realized values of the variables. The set of variables that can be feasibly selected can be from any feasibility constraint (like single-unit, multi-unit, matroids, etc.). Recently, ~\cite{correa2017back} showed that any approximation factor for SPP mechanisms directly translates to the same approximation factor for free-order prophet inequalities in numerous environments, including matroid feasibility constraints. Thus, our first improvements in various settings directly imply a first improvement in the corresponding prophet inequality problem as well.

\textbf{Related Work on Prophet Inequalities.} The literature on prophet inequalities is vast (\cite{KS77,KS78}). Here, we provide a quick overview, focusing on the case where not more than one random variable can be selected (single-unit). There are three variants that are commonly studied: adversarial-order prophets, free-order prophets, and random-order prophets. In the free-order setting, the gambler can select the order of the random variables that he inspects. As stated earlier, due to the results of~\cite{correa2017back}, our SPP mechanism bound improvements in the $1$-unit and $\hu$-unit settings immediately yield the same improved bounds for free-order prophet settings. In the adversarial-order prophets version, the order is determined by an adversary. For this version,~\citet{KS78} show that when variables are independent but not necessarily identical, the gambler can obtain at least $\frac{1}{2}$ of the expected value obtained by a prophet; subsequently, ~\cite{samuel-cahn1984} showed the same with a single-threshold policy. In the random-order prophets version, the random variables arrive in a uniformly random order; the results for this version are discussed in the paragraph on subsequent work below. When the random variables are i.i.d., all the three variants (free, random, and adversarial) coincide. ~\citet{HK82} show that the gambler can obtain at least $1-\frac{1}{e}$ of the prophet's value and they also show examples that prove that one cannot obtain a factor beyond $\frac{1}{1.342}\sim 0.745$.~\cite{correa2017posted} show a matching $0.745$ approximation for the i.i.d. version. 
We highlight that this $0.745$ result is not applicable to our setting, as in our study, the buyer valuations are not i.i.d.

\textbf{Subsequent Work.} After the appearance of the first version of this paper, for the $1$-unit setting, ~\cite{CSZ19} obtain improved approximation factors for both the free-order prophet and random-order prophet problems, obtaining an approximation factor of  $0.669$. Further, they show that for the random-order prophets problem, it is not possible to obtain an approximation of over $\sqrt{3}-1~=0.732$, thereby separating it from the approximation of $0.745$ that was obtained for the i.i.d. case (see~\cite{correa2017posted}). We note here that while this approximation factor of $0.669$ for SPP mechanisms beats our $0.6543$ factor, {\color{black}the improved bound of \cite{CSZ19}, is restricted to the $1$-unit setting.} The only known results that beat the long-established approximation factors when selling more than one unit are the results in our paper. In addition, even in the $1$-unit setting, our $n$-dependent bounds are strictly greater than $0.669$ when $n\le 10$; see Table \ref{tab:smalln}.

\textbf{Posted Prices, Prophet Inequalities, and Generalizations.} 
The establishment of a connection between prophet inequalities and mechanism design was initiated by~\citet{HKS07}, who interpreted the prophet inequality algorithms as truthful mechanisms for online auctions. ~\cite{CHK07} obtain an approximation of $4$ for the single-buyer $n$-items unit-demand pricing problem by upper bounding the revenue of the multiparameter setting by that of the single-unit $n$-buyer single-parameter problem. \citet{chawla2010multi,chawla2010randomized} expand this connection and develop constant fraction approximations for several multiparameter unit-demand settings by establishing constant factor approximations to the corresponding single-parameter posted-price settings through connections to prophet inequalities. ~\citet{yan2011mechanism} makes a connection between the revenue of SPP mechanisms in the $\hu$-unit setting and the correlation gap for submodular functions~\citep{agrawal2012price}. For the  $\hu$-unit setting, as stated earlier, \citet{TEGMM10} establish the same bound in \citet{yan2011mechanism} without using the correlation-gap machinery. They further develop a PTAS for computing the optimal \emph{adaptive} SPP mechanisms in an $\hu$-unit single-parameter setting when $\hu$ reaches infinity. Recall that in our SPP mechanisms, the prices are not adaptive.

\section{Model}\label{sec:model}
{In this paper, we study the single-unit, multi-unit, matroidal, and position auction settings. In this section, we describe the mechanisms for the matroidal setting, which includes single-unit and multi-unit settings. The description of position auctions is provided when these results are discussed.} 

{\textbf{Buyers' Values.} There are $n$ buyers indexed by $i \in [n]$, where $[n] =\{1, 2, \ldots, n\}$. All the buyers demand, at most, one unit of the item. Buyer $i$ has a private value $v_i$ for receiving one unit of the item being sold, where $v_i$ is drawn independently from a publicly known distribution $F_i$. The value distributions are either continuous probability measures (with no atoms) or discrete probability measures with finite support. This minor restriction on distributions is for technical reasons and we comment on how this restriction is used toward the end of Section~\ref{sec:taxation}.}

{\textbf{General Feasibility Constraints.} Let $\feas$ be an arbitrary collection of subsets of $[n]$. We say that a mechanism has a feasibility constraint $\feas$  if the set of all buyers that simultaneously receive an allocation in the mechanism has to be a set in $\feas$. The feasibility constraints that we study are:}

\begin{itemize}
	\item { \textbf{Single-unit Setting.} $\feas$ is the collection of $n$ singleton sets and the empty set. Such a feasibility constraint is generated by the mechanism where there is only a single item to sell and, therefore, the set of allocated buyers must either be a singleton set or an empty one. We also refer to this setting as a $1$-unit setting.}
	\item {\textbf{Multi-unit Setting.} $\feas$ is the collection of subsets of $[n]$ of size at most $\hu\ge 1$. Such a feasibility constraint is generated by the mechanism having just $\hu$ units of an item to sell. We also refer to this setting as an $\hu$-unit. Note that the $\hu$-unit setting subsumes the $1$-unit setting.}
	\item {\textbf{Matroidal 	Setting.} $\feas$ is the collection of subsets of $[n]$ that include all the independent sets of a matroid. A matroid $\Mat(E, I)$ comprises a ground set of elements $E$ and a non-empty collection 	$I\subseteq 2^E$ of independent sets. A matroid $\Mat(E, I)$ satisfies the following two conditions. (i) If set $T\in I$, then any subset of $T$ is also in $I$. (ii) Given $S, T\in I$ with $|S|< |T|$ and element $e\in T\backslash S$, we have $S\cup \{e\} \in I$. 
	For our purposes, the ground set is the set of buyers, $E =[n]$, and $I = \feas$---that is, $I$ consists of all the feasible allocations. In particular, $T\notin I$ implies that the set, $T$, of buyers cannot be allocated simultaneously. Note that the single-unit and multi-unit settings are special cases of the matroidal feasibility constraint.}
\end{itemize}

\medskip
\textbf{Sequential Posted-Price Mechanisms $\PPM(\bp)$.}
For matroidal settings, the SPP mechanism is denoted by
$\PPM(\bp)$. { \color{black} Here, $\bp =(p_1, p_2, \ldots, p_n)$ is a vector of posted prices in $\PPM(\bp)$. Without loss of generality, we assume that $p_1\ge p_2\ge \ldots\ge p_n$.}\footnote{\color{black} When we re-order the prices to make them decreasing, we re-index the buyers accordingly.} The mechanism $\PPM(\bp)$ approaches buyers in decreasing order of the sellers' posted prices. If adding buyer $i$ to the already allocated set $S$ of buyers satisfies the feasibility constraint---that is, if $S \cup \{i\} \in \feas$---the mechanism offers price $p_i$ to buyer $i$. 
If the buyer accepts the offer---that is, $v_i\ge p_i$---buyer $i$ will be allocated to and charged a price of $p_i$, and $S$ will be updated to $S \cup \{i\}$. Otherwise, the mechanism proceeds to buyer $i+1$.

{ \color{black}For the special case of $\hu$-unit setting, we denote  $\PPM(\bp)$ by $\PPH(\bp)$.
When $\hu=1$, i.e., for the $1$-unit setting, we exclude the subscript $\hu$. Throughout the proof, with a slight abuse of notation, we denote the expected revenue of $\PPM(\bp)$ by $\PPM(\bp)$, where the expectation is with respect to (w.r.t.) the randomness in the buyers' values.}

\subsection{Optimal Revenue Benchmark}
\label{sec:taxation}
{Our benchmark, which we refer to as $\OptM$ (we exclude the subscript when it is evident from the context), is an incentive-compatible (truthful) and individually rational revenue-optimal
auction in the independent value setting. This {mechanism} was designed by \citet{myerson1981optimal}. For most of our results, the specific form of the optimal mechanism is irrelevant. We use only the fact that it is a deterministic truthful mechanism. Hence, the taxation principle provides a simple equivalent form of expressing such a mechanism; {\color{black}see the discussion after Theorem 2 in \cite{hammond1979straightforward}.} {Note that even when the distributions are not regular, the Myersonian mechanism can be {implemented} as a deterministic mechanism; see \cite{myerson1981optimal} and \cite{chawla2014bayesian}.}}

{The following lemma describes the taxation principle in any deterministic truthful mechanism. We do not prove it here, as the proof for this can be derived from any standard auction theory textbook; see, for example, Theorem 9.36 in \cite{NRTV07}.

{\begin{lemma}[Taxation {Principle}]\label{taxation}
Given a deterministic truthful mechanism $\mech$ in any one-dimensional independent private-value setting with arbitrary feasibility constraints (including single-unit, multi-unit, and matroids), there
are threshold functions $t_i(\vminus)$ for each buyer $i${,} that depend only on the bids of other buyers $\vminus = (v_j)_{j \neq i}${,} such that the allocation and payment of mechanism $\mech$ can be described in the following manner:
\begin{itemize}
\item if $v_i > t_i(\vminus)$, then  buyer $i$ is allocated and he is charged the
threshold $t_i(\vminus)$.
\item if $v_i = t_i(\vminus)$, then either  buyer $i$ is allocated and charged $t_i(\vminus)$ or is not allocated and not charged. 

\item if $v_i < t_i(\vminus)$, then  buyer $i$ is not allocated and not charged.
\end{itemize}
\end{lemma}}}

\medskip

{We note that the threshold function $t_i$ can be computed for any deterministic incentive-compatible mechanism. In such a mechanism, each buyer $i$ has a critical value $v_{\min}$---which depends only  on other buyers' values---such that he is allocated and pays $v_{\min}$ if his value $v_i > v_{\min}$. When $v_i < v_{\min}$, he does not get allocated and pays $0$. When $v_i = v_{\min}$, the mechanism can either allocate an item to buyer $i$ and charge him $v_{\min}$ or not allocate any item to him and charge $0$. The threshold function is defined as $t_i(\vminus) = v_{\min}$. We provide two examples  to illustrate how these thresholds are computed.} 

{\begin{example} [Second-price auction with $1$-unit and no reserve price] Here, the critical value for buyer $i$ is $t_i(\vminus) = \max_{j \neq i} v_j$, as in the second-price auctions with no reserve price, a buyer with the highest value/bid is allocated. 
\end{example}}

{\begin{example} [Optimal mechanism with $1$-unit and uniform distributions] Suppose that there are two buyers with values $v_1$ and $v_2$ drawn independently from the uniform distributions in [0,1] and [0,2], respectively. In the optimal mechanism, the item is allocated to the buyer with the highest non-negative virtual value.\footnote{The virtual value of buyer $i$ with value $v\sim F_i$ is given by $v- \frac{1-F_i(v)}{f_i(v)}$.} The virtual values of buyers $1$ and $2$ are $2v_1-1$ and $2v_2-2$, respectively. Thus, buyer $1$ is allocated when $v_1 \ge \max(0.5, v_2 - 0.5)$, and buyer $2$ is allocated when $v_2 \ge  \max(1, v_1 + 0.5)$. Therefore, the threshold functions are given by $t_1(v_2) = \max(0.5, v_2 - 0.5)$ and $t_2(v_1) = \max(1, v_1 + 0.5)$.
\end{example}}

Thresholds {$t_i(\vminus)$} are constructed in such a manner that the set of buyers with value strictly above thresholds can always be simultaneously allocated without violating any feasibility constraints (for example, in the $\hu$-unit setting, at most $\hu$ buyers strictly exceed their thresholds). However, it is possible that the set of buyers who weakly exceed their threshold cannot all be simultaneously allocated (for example, in the $\hu$-unit setting, more than $\hu$ buyers could weakly exceed their thresholds). In such a case, a tie-breaking rule must be determined to accurately describe the allocation rule of the mechanism.

{There are two important cases in which the issue of tie-breaking can be ignored. The first case is when the value distributions are independent and continuous, with no atoms: in this case, the probability that $v_i = t_i(\vminus)$ is zero. The second case is when the distributions have finite discrete support: here, the thresholds for any deterministic mechanism can be constructed\footnote{A deterministic mechanism, if it breaks ties, can only have a deterministic tie-breaking rule. Assume an arbitrary deterministic tie-breaking rule that could break ties differently at different value profiles. We show how to construct thresholds so that the outcome of allocating precisely to the buyers whose value weakly exceeds their threshold is identical to the outcome of the original mechanism. In particular, the set of buyers whose value weakly exceeds their threshold can be feasibly served. Consider the outcome of the original mechanism for buyer $i$ when the values of other buyers are $\vminus$. If $t_i(\vminus)$ is not in the support of  $F_i$ (buyer $i$'s distribution), we set buyer $i$'s threshold at $t_i(\vminus)$. Or, if $i$ is allocated in the original mechanism when his value is $t_i(\vminus)$, we set buyer $i$'s threshold at $t_i(\vminus)$. If buyer $i$ is not allocated in the original mechanism when his value is $t_i(\vminus)$, we set buyer $i$'s threshold at $t_i(\vminus)+\epsilon$, where  $\epsilon > 0$ is such that for any $x$ in the support of $F_i$, where $x > t_i(\vminus)$, we also have $x > t_i(\vminus)+\epsilon$. It now follows that regardless of the true value of buyer $i$, the buyer's allocation in the original mechanism is identical to the one obtained by allocating buyer $i$ exactly whenever his true value weakly exceeds the threshold constructed above.
} in such a way that the set of buyers with value weakly exceeding their threshold---that is, the set of buyers with $v_i \geq t_i(\vminus)$---can always be simultaneously allocated and each allocated buyer pays his threshold. Thus, there is no need to break ties. {In light of this, as stated earlier, our results hold for any continuous probability measures (with no atoms) and discrete probability measures with finite support.
}} 

\subsection{Definitions and Notation} \label{prelim:defn}
\textbf{Thresholds.} In the remainder of the paper, $t_i(\vminus)$ refers to the threshold of buyer $i$ corresponding to the optimal mechanism for the feasibility constraint under study (see 
Lemma~\ref{taxation}). Whenever it is evident from 
the context, we abbreviate $t_i(\vminus)$ or the function $t_i(\cdot)$ by $t_i$. 

\textbf{Re-sampled Thresholds.} We will often  
{refer to the} 
thresholds computed from independently re-sampled values: that is, for 
each buyer $i$, sample $v_{j,i}' \sim F_j$ for all $j \neq i$ and denoted by $t_i(\vminus')$ the re-sampled threshold where $\vminus' = 
(v_{1,i}',\dots,v_{i-1,i}',v_{i+1,i}',\dots,v_{n,i}')$. Observe that we do not reuse samples: for each buyer $i$, we freshly re-sample the values of all other buyers. We abbreviate 
$t_i(\vminus')$ by $t_i'$ whenever it is evident from the context. {Note that although for each $i$, the distribution of $t_i$ is the same as the distribution of $t_i'$, the values of $t_i'$, $i\in [n]$, are independent 
across $i$'s, while the $t_i$'s are correlated.} 

{\textbf{Myersonian Posted Prices.} This refers to the tuple of $n$ (random) posted prices, one per buyer computed from the Myersonian pricing rule. It consists of the re-sampled threshold $t_i'$ for each buyer $i$. Note that the thresholds depend on the feasibility constraint: this is because the optimal mechanism depends on the feasibility constraint $\feas$.}

{\textbf{Uniform Posted Price.} This refers to the highest-revenue-yielding uniform posted price. For the $1$-unit SPP setting, this is given by, $p^{\star} = \argmax_p p\cdot \P[\max_{i\in [n]} v_i \geq p]$. For the $\hu$-unit setting, the uniform price is given by $p^{\star} = \argmax_p p\cdot \E[\min(|S_p({\bf v})|, \hu)]$, where $S_p({\bf v})$ is the set of buyers with $v_i \geq p$.} 

{\textbf{Optimal Revenue.}
Let $s_i^{\star}(\tau ) = \P[v_i~ \geq ~t_i(\vminus) ~\geq ~{\tau}]$, $i\in [n]$, be the probability that {buyer} $i$ wins and pays at least $\tau$ in the optimal mechanism (note that $s_i^{\star}(\tau )$ depends on the feasibility constraints $\feas$), {where the probability is taken w.r.t. $\vminus$ and $v_i$.} Further, define $s^{\star}({\tau}) := \sum_{{i\in [n]}} s_i^{\star}({\tau})$. In the single-unit setting, $s^{\star}({\tau})$ represents the probability that the winner pays at least ${\tau}$. In the multi-unit setting and, more generally, the matroidal setting, $s^{\star}({\tau})$ is the expected number of buyers who receive the item and pay at least ${\tau}$. It follows immediately that $s^{\star}({\tau})$ is a weakly decreasing function whose integral defines the optimal revenue, denoted by $\Opt$:
\begin{align}\textstyle \Opt ~:=~ \int_0^\infty s^{\star}({\tau}) d{\tau}\,. 
\label{eq:opt}\end{align}
Writing the optimal revenue in terms of $s^\star(\cdot)$, which is one of our contributions, enables us to subsequently obtain approximation factors that hold for any regular and irregular value distributions.}

\section{Single-unit SPP Mechanisms}
\label{sec:single-unit}
In this section, we derive a universal bound for the SPP mechanism---that is, the bound holds for any value of the number of buyers $n$. Subsequently, in Section \ref{sec:small_n}, we obtain $n$-dependent bounds using the same principle and techniques as we we use for the universal bound. 
As stated earlier, for the case of $\hu=1$ that we study in this section, we exclude the $\hu$ subscript and denote the mechanism by only $\PP(\cdot)$.
\subsection{A Universal Bound}

The main result of this section is Theorem \ref{thm:pp1}, where we show that in a single-unit $n$-buyer setting with independent private values, there {exists a vector of prices} $\mathtt{\bp} =(\mathtt{p}_1, \mathtt{p}_2, \ldots, \mathtt{p}_n)$, such that $\PP(\mathtt{\bp}) ~\geq~0.6543 \cdot \Opt$, where $\Opt$---as defined in Equation \eqref{eq:opt}---is the expected optimal revenue and $\PP(\mathtt{\bp})$ is the expected revenue of an SPP mechanism with prices $\mathtt{\bp}$. To show this result, we take advantage of the Myersonian and uniform SPP mechanisms. Let $\MP$ denote the expected revenue of the Myersonian SPP mechanism, where the expectation {is taken w.r.t.} the randomness in both the re-sampled posted prices and the buyers' values. Further, let $\UP$ be the expected revenue of the SPP mechanism that posts the best uniform prices, where the expectation is taken w.r.t. the buyers’ values. The proof of Theorem~\ref{thm:pp1} indicates that $\max(\UP, \MP)$ is at least a $0.6543$ 
 fraction of the optimal revenue. Before presenting Theorem~\ref{thm:pp1} and its proof, we illustrate with a simple example how Myersonian and uniform pricing rules complement each other.
  
 \textbf{Myersonian and Uniform Pricing Rules Complement Each Other.} To illustrate how Myersonian and uniform pricing rules complement each other, we show that $\MP\ge (1-1/e)\cdot \Opt$ and present an example in which the Myersonian SPP mechanism obtains exactly a $1-1/e$ fraction of the optimal revenue, while the uniform SPP mechanism is almost optimal there.
{Define $\m({\tau})$ as the probability that the Myersonian SPP mechanism sells with a price of at least ${\tau}$, which is the probability that there is at least one buyer with $v_i ~\geq~ t_i' ~\geq~ {\tau}$. 
In Lemma \ref{lem:prob}, presented at the end of this section, we bound the probability that there is no buyer $i$ with $v_i ~\geq~ t_i' ~\geq~ {\tau}$ as a function of $s^\star(\tau)$; see Inequality (\ref{eq:z_0}). By invoking this inequality, we obtain 
$\m ({\tau}) \geq 
1-\exp{(-s^\star({\tau}))} \ge (1-\exp(-1))s^\star(\tau)$. 
Integrating this expression, we obtain
$$\textstyle \MP = \int_0^\infty m(\tau) d\tau \geq 
(1-e^{-1}) \int_0^\infty s^\star(\tau) d\tau = (1-e^{-1}) \Opt.$$}
Thus far, we have shown that the Myersonian SPP mechanism obtains an approximation factor of $1-1/e$. Next, we present an example that shows that this approximation factor is tight, but a uniform price performs almost optimally, suggesting that Myersonian and uniform pricing rules complement each other.
 Consider the setting where there are $n$ buyers whose values are independently drawn from the uniform distribution in $[1,1+\epsilon]$ for a tiny $\epsilon>0$. Then, the optimal mechanism is simply the second-price auction with a uniform reserve of $1$, and the uniform pricing scheme that posts a price of $1$ comes very close to this optimal mechanism. However, in the Myersonian SPP mechanism, each buyer is offered a random threshold that is the maximum of $n-1$ variables distributed uniformly in $[1,1+\epsilon]$. Thus, each buyer is above such a threshold with probability $1/n$. Since all buyers are independent, with probability $(1-1/n)^n \rightarrow 1/e$, no buyer is above the
threshold. Thus, $\MP$ is merely  a $1-1/e$ approximation for this particular choice of prices. On the other hand, a uniform price of $1$ gets a revenue of $1$ which is very close to optimal. 
  
\medskip
  
\begin{theorem}[Revenue Bound of SPP Mechanisms in Single-unit Settings]\label{thm:pp1}
In a $1$-unit $n$-buyer setting with independent private values, there {exists a vector of prices} $\mathtt{\bp} =(\mathtt{p}_1, \mathtt{p}_2, \ldots, \mathtt{p}_n)$, such that $
	 \PP(\bp) \geq \Opt\cdot\frac{1}{\text{\ref{lp:spm}}} = 0.6543 \cdot \Opt$, where $\Opt$ is the expected optimal revenue in a $1$-unit setting, $\PP(\bp)$ is the expected revenue of the SPP mechanism with prices  $\mathtt{\bp}$, and $\text{\small{\sf{FR}}} = \frac{1}{0.6543}$ is defined as the optimal objective of this factor-revealing mathematical program~\eqref{lp:spm}. 
\begin{align} 
\text{\small{\sf{FR}}} ~=~&{\max_{\{{s(\tau), \tau\ge 0\}}}} ~~\int_0^\infty 
s(\tau) d\tau \nonumber \\
& \begin{aligned}
\text{s.t.} ~~~ &0~ \le~ s(\tau) ~\le~ \min(1, 1/\tau)~~ & &\forall ~~ \tau \ge 0 \nonumber \\
& \int_0^\infty  f(s(\tau)) d\tau ~\le~ 1 \\
&\text{$s(\cdot)$ is weakly decreasing}\,.~~&\nonumber
\end{aligned}
 \label{lp:spm} \tag{{\small{\sf{FR}}}}
\end{align}
 Here,  $f(x) = ({1 - e^{-x}})$.
   \end{theorem}
   
\begin{proof}{Proof of Theorem \ref{thm:pp1}} {We first show that $\max(\MP, \UP) \geq \frac{1}{\ref{lp:spm}}\cdot \Opt$. Subsequently, we prove that $\frac{1}{\ref{lp:spm}} = 0.6543$ (recall that we use \ref{lp:spm} to denote factor-revealing). Without loss of generality, we assume that $\max(\MP, \UP) $ is normalized to one. (This can be done by scaling all values by a constant factor.\footnote{Assume that $\max(\MP,\UP)= w\ne 1$. Then, if one multiplies all the values by a factor of $1/w$, then the optimal revenue $\Opt$, $\MP$ and $\UP$ are all multiplied by the same factor $1/w$ on each value profile. Hence, the ratio of $\max(\MP,\UP)$ to $\Opt$ does not change. In other words, while not all mechanisms are scale-invariant, the three relevant mechanisms for us are all scale-invariant; the argument for this is straightforward based on the definitions of mechanisms.})
{The proof shows that $s^{\star}(\cdot)$ corresponding to the optimal mechanism is a feasible solution to Problem \eqref{lp:spm}. In light of this, the objective function of Problem \eqref{lp:spm} provides an upper bound on the optimal revenue (see Equation (\ref{eq:opt})); hence $\frac{1}{\ref{lp:spm}}$ is the resulting approximation factor.} 

\textbf{Lower Bounds on $\boldsymbol \UP$ (First Set of Constraints).} Consider the SPP mechanism that posts a uniform price of $\tau$ for every buyer. The revenue of this mechanism is equal to $\tau \cdot \P[\max_{i\in [n]} v_i \geq \tau]$, which is at least $\tau s^\star(\tau)$. (Recall that $s_i^{\star}(\tau ) = \P[v_i~ \geq ~t_i ~\geq ~{\tau}]$ and $s^\star(\tau) =\sum_{i\in [n]} s_i^{\star}(\tau)$ is the probability that there is at least one buyer $i$ with $v_i~ \geq ~t_i ~\geq ~{\tau}.$) Therefore, the revenue of the SPP mechanism with the uniform price of $\tau$ is at least $\tau s^\star(\tau)$ for every $\tau \geq0$; that is,
\begin{align}\max_{\tau \ge 0} ~\tau s^\star(\tau) \le \UP\le 1 \,, \label{eq:bound_UP} \end{align}
where the second inequality follows from the fact that the revenue of the SPP mechanism that used the best of Myersonian and uniform prices is normalized to one. {Inequality (\ref{eq:bound_UP})} leads to 
$0 ~\le~ s^\star({\tau})~ \le~ \min(1, 1/{\tau})$ for any $\tau\ge 0,$ confirming $s^\star({\tau})$ satisfies the first set of constraints. Note that here we use the fact that $s^\star(\tau) \leq 1$ for any $\tau \geq 0$---this follows from the two facts that (a) the optimal mechanism is feasible and (b) the thresholds that we construct to mimic the optimal mechanism are such that the set of buyers with values weakly above the threshold can all be simultaneously allocated (for the single-unit case, this implies that no more than one buyer will have a value weakly exceeding the threshold; see Section~\ref{sec:taxation}). To ensure that constructing such feasible thresholds is possible, we assume that the buyers' value distributions are either continuous probability measures (with no atoms) or discrete probability measures with  finite support. 

\textbf{Lower Bounds on $\boldsymbol \MP$ (Second Constraint).} Here, we show that $s^\star (\cdot)$ satisfies the second constraint, {\color{black}i.e.,  $\int_0^\infty  f(s^\star(\tau)) d\tau ~\le~ 1$.} Recall $\m({\tau})$ is the probability that the Myersonian SPP mechanism sells with a price of at least ${\tau}$, which is the probability that there is at least one buyer with $v_i ~\geq~ t_i' ~\geq~ {\tau}$. 
Then, we have
\begin{align}\textstyle \MP ~=~ \int_0^\infty \m ({\tau}) d{\tau} ~\le ~1\,,\label{eq:spm-single}\end{align}
where the inequality follows from $\max(\MP, \UP ) = 1$.  Let 
$Z_{\tau}= \sum_{i=1}^n\ind (v_i 
~\geq~ t_i' ~\geq ~ {\tau})$ be the number of  buyers with $v_i ~\geq~ t_i' ~\geq~ {\tau}$, where $\ind(\cdot)$ is an indicator function ($\ind(A) = 1$ when event $A$ occurs and zero otherwise). As stated earlier, the probability that $Z_{\tau} = 0$ is bounded in Lemma \ref{lem:prob}, which is presented at the end of this section.
By invoking Inequality (\ref{eq:z_0}) in this lemma, we obtain 
\begin{align}\textstyle \m ({\tau}) ~&\ge ~1-\exp{(-s^\star({\tau}))}
     ~=~ f(s^\star(\tau)).\label{eq:u_lower_bound}
\end{align}
{\color{black}Recall that $f(x) ~=~ {1 - \exp(-x)}$}. 
Inequalities (\ref{eq:spm-single}) and (\ref{eq:u_lower_bound})} {\color{black} together imply that $\int_0^\infty  f(s^\star(\tau)) d\tau ~\le~ 1$, which is the desired result.}

\textbf{Solving Problem \eqref{lp:spm}.} Next, we compute the objective value of Problem \eqref{lp:spm}. Define $g(x)= f(x)/ x$. Observe that the second constraint of Problem \eqref{lp:spm} can be written as $\int_0^\infty g(s(\tau)) s(\tau) d\tau ~\le~ 1$. Considering this, it is not difficult to guess the optimal solution of Problem \eqref{lp:spm}.
Since by the last set of constraints of Problem \eqref{lp:spm}, $s(\tau)$ is (weakly) decreasing in $\tau$ and $g(s(\tau))$ is decreasing in $s(\tau)$, we have $g(\tau)$ increasing in $\tau$. Thus, the optimal solution must satisfy that $s(\tau) = \min(1,1/\tau)$ whenever $\tau \leq \tau^{\star}$ and $s(\tau) = 0$ when $\tau > \tau^{\star}$, where {$\tau^{\star} > 1$} is the unique threshold for which $\int_0^{\tau^{\star}} \min(1,1/\tau) 
g(\min(1,1/\tau)) d\tau ~=~1$. This leads to
\begin{align*}
\int_0^{\tau^{\star}} \min(1,1/\tau) g(\min(1,1/\tau)) d\tau ~&= ~ \int_0^{1}  g(1) d\tau +\int_1^{\tau^{\star}} 
\frac{1}{\tau} \cdot g(1/\tau) d\tau \\
~&=~ (1 - e^{-1})+\int_1^{\tau^{\star}} (1-e^{-1/\tau}) d\tau ~=~ 1\,.
\end{align*}
By solving the above equation numerically, we obtain $\tau^{\star} = 1.696$ and the 
optimal solution to Problem \eqref{lp:spm} is given by
\[\int_0^{\infty} s(\tau) d\tau ~=~ \int_0^{1} d\tau  +\int_1^{\tau^{\star}} 
\frac{1}{\tau}  d\tau  ~= ~1+ \ln(\tau^{\star}) ~=~ 1.5283\,.\]
Hence, the SPP mechanism that selects the best of Myersonian and uniform pricing rules yields at least $1/1.5283 ~\approx ~ 0.6543$ of $\Opt$, which is the bound in Theorem~\ref{thm:pp1}.
$\blacksquare$
\end{proof}

\begin{lemma} \label{lem:prob}
Let $Z_{\tau}$ be the number of buyers with $v_i ~\geq ~ t_i' ~\ge ~ \tau$; 
that is,  $Z_{\tau}= \sum_{i=1}^n\ind (v_i 
~\geq~  t_i' ~\geq ~ {\tau})$.
Then,
 \begin{align} \P[Z_{\tau} = 0] ~&\le~ \Q_n(s^{\star}(\tau))~\le~ \lim_{n\rightarrow 
 \infty}\Q_n(s^{\star}(\tau)) ~=~ e^{-s^{\star}(\tau)} 
 \label{eq:z_0}
\\
 2 \P[Z_{\tau} = 0] + \P[Z_{\tau} = 1] ~&\le~\p_n(s^{\star}(\tau))~\le~ 
 \lim_{n\rightarrow 
  \infty}\p_n(s^{\star}(\tau)) ~=~ (2 + s^{\star}(\tau)) e^{-s^{\star}(\tau)}\,, 
  \label{eq:z_0_z_1}
 \end{align}
 where $\Q_n(y) = \left( 1 - \frac{y}{n} \right)^n $ and $\p_n(y) = 2 \left( 1 
 - \frac{y}{n} \right)^n + y
\left( 1 - \frac{y}{n} \right)^{n-1} $.
\end{lemma}

Proof of Lemma \ref{lem:prob} is given in Section \ref{sec:proof:lem:prob}.

\subsection{Improved n-Dependent Bounds} \label{sec:small_n}
In numerous marketplaces, including online advertising markets, the number of buyers is rather small---for example in the {\color{black}display-ads} setting, the auctions are usually thin (not very many bidders). Motivated by this, in this section, we obtain improved bounds for the single-unit SPP mechanisms when the number of buyers $n$ is small. The technique is similar to the one used for the universal bound in Theorem~\ref{thm:pp1}. {\color{black}(Recall that the bound presented in Theorem \ref{thm:pp1} holds for any 
number of buyers $n$.)} 
In Figure \ref{fig:bound}, we illustrate our improved bounds for the SPP mechanisms. The figure also shows our improved bounds for ESP auctions, which will be discussed in Section \ref{sec:esp:n}. We also depict the best-known bound for these mechanisms prior to this work---that is, $1-(1-1/n)^n$---which is given by \cite{chawla2010multi}. We observe that our n-dependent bounds for the SPP mechanisms improve the prior bound by up to $3\%$. Further, the revenue bounds increase as the number of buyers decreases. 
\begin{figure}[h!]
\begin{center}
\includegraphics[width=3.5in]{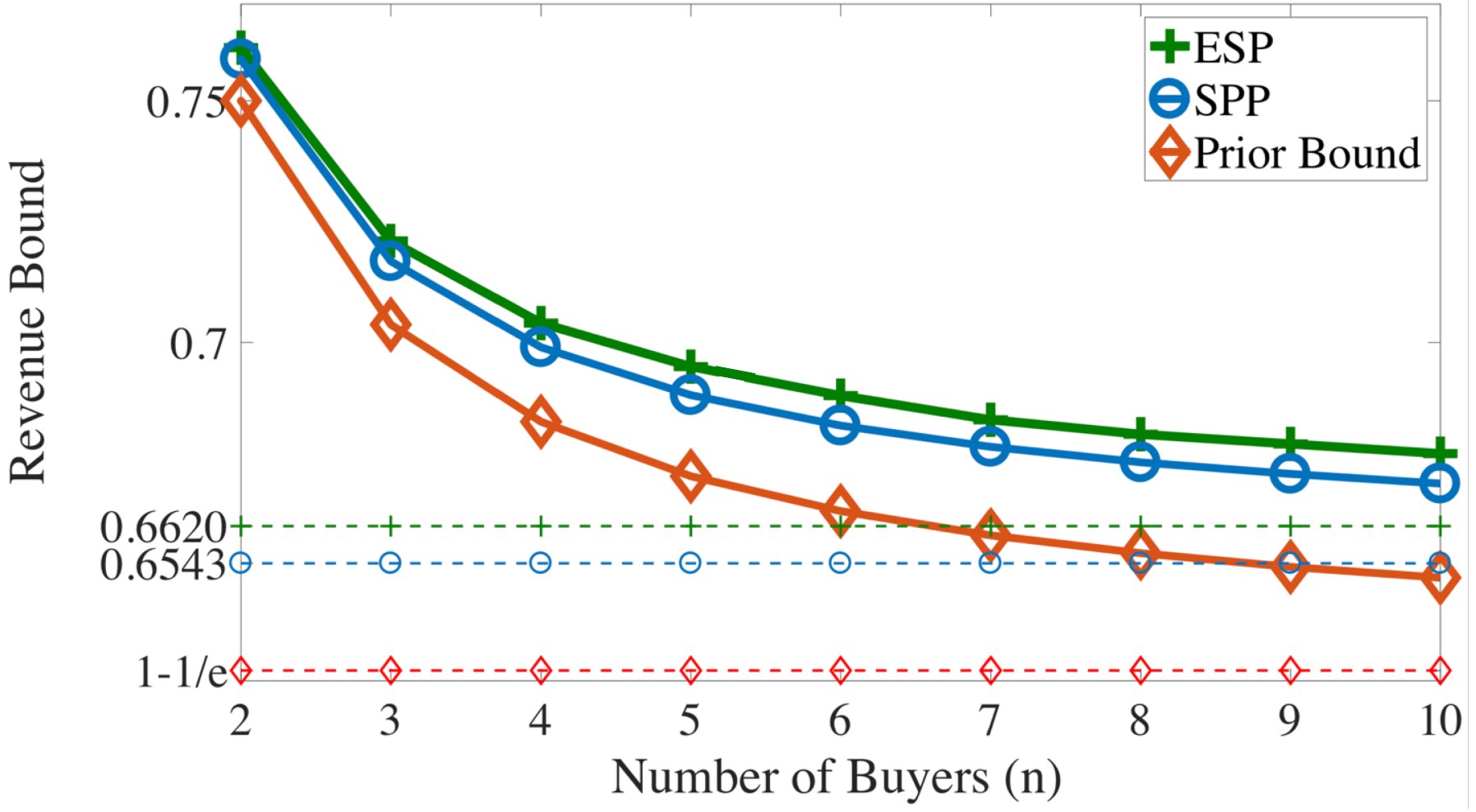} 
\caption{Comparing our bound with the best prior known bound for $n= 2, \ldots, 10$.  The red and green dashed curves represent the bounds in Theorems \ref{thm:pp1} and \ref{thm:pp2}, respectively. Recall that these bounds are valid for any $n\ge 1$. \label{fig:bound} }
\end{center}
\end{figure}

In this section, to highlight the  dependency of our bounds on the number of buyers $n$, we denote the revenue of SPP mechanisms with a vector of prices $\textbf{p} = (p_1, p_2, \ldots, p_n)$ by  $\PP_n(\textbf{p})$. We further denote the optimal revenue by $\Opt_n$.

The following Theorem~\ref{thm:pp_n} is the main result of this section. This theorem presents two approximation factors for the SPP mechanism with $n$ buyers: $\frac{1}{{{\text{\sf FR}}(n)}}$ and $\frac{1}{{{\text{\sf FR-d}}(n,k)}}$, where $k$ can be any positive integer value. See the statement of the theorem for the definition of ${{\text{\sf FR}}(n)}$ and ${{\text{\sf FR-d}}(n,k)}$. The first approximation factor ($\frac{1}{{{\text{\sf FR}}(n)}}$) is obtained by constructing a factor-revealing mathematical program (\ref{lp:spm_n_1}) that is very similar to that in Theorem \ref{thm:pp1}, thereby confirming the generalizability of our technique. However, evaluating $\frac{1}{{{\text{\sf FR}}(n)}}$ is not easy, as the mathematical program is non-linear.\footnote{The first approximation factor is provided  to help  readers see the unity across different settings.}
Further, in order to evaluate the approximation factor, we present a discretized version of this non-linear mathematical program, which is an easy-to-solve linear program (LP). This discretized LP yields an approximation factor of $\frac{1}{{\text{\sf FR-d}}(n,k)}$, where $k$ captures the granularity of discretization (the larger the $k$, the better the discretization). This quantity $\frac{1}{{\text{\sf FR-d}}(n,k)}$ can be easily computed for any $k$; see Table \ref{table:pp_finite_n}. (The ``d" in ${{\text{\sf FR-d}}(n,k)}$ stands for discretization.) We remark that the approximation factor of $\frac{1}{{\text{\sf FR-d}}(n,k)}$ is valid for any value of $k$; that is, Theorem \ref{thm:pp_n} presents a \emph{strong} factor-revealing mathematical program.\footnote{Factor-revealing mathematical programs that are not strong  present a valid bound only when $k$ goes to infinity.} Since larger values of $k$ imply more granular discretization, as seen in Table \ref{table:pp_finite_n}, our bound improves as $k$ increases.   

\begin{theorem}[$n$-Dependent Revenue Bound of SPP Mechanisms in Single-unit Settings] \label{thm:pp_n}
In a $1$-unit $n$-buyer setting with independent  private values, there {exists a vector of prices} $\mathtt{\bp} =(\mathtt{p}_1, \mathtt{p}_2, \ldots, \mathtt{p}_n)$, such that 
\begin{itemize}
\item 
	 \textbf{Non-discretized Bound.}   $\PP_n(\bp) \geq \Opt_n\cdot \frac{1}{{{\text{\sf FR}}(n)}}$, and 
	 \item \textbf{Discretized  Bound.} 
	 $\PP_n(\bp) \geq \Opt_n\cdot \frac{1}{{\text{\sf FR-d}}(n,k)}$ for any positive integer $k$,
\end{itemize}
 where $\Opt_n$ is the expected optimal revenue in a $1$-unit setting, $\PP_n(\bp)$ is the expected revenue of the SPP mechanism with prices $\mathtt{\bp}$, and $\text{\small{\sf{FR}}}(n)$ and ${{\text{\sf FR-d}}(n,k)}$ are defined as

\begin{multicols}{2}\setlength{\columnseprule}{1pt}
\noindent
  \begin{align*}
  &\text{\small{\sf{FR}}}(n) ~=~
  \\   
&  \begin{aligned}
\quad \max_{\{{s(\tau), \tau\ge 0\}}}& ~~\int_0^\infty 
s(\tau) d\tau\\
\text{s.t.} \quad &0~ \le s(\tau) \le \min(1, 1/\tau)~ & &\forall\tau \ge 0   \\[8pt]
& \int_0^\infty  \big(1-\Q_n(s(\tau))\big) d\tau ~\le~ 1 \\[8pt]
&\text{$s(\cdot)$ is weakly decreasing}\,.&
\end{aligned}
 \label{lp:spm_n_1} \tag{{{\small\sf{FR-n}}}}
\end{align*} 
\begin{align*}
&{\text{\small \sf FR-d}}(n,k)~=~ \\
  & \quad\begin{aligned}
 \max_{w} &
\sum_{i\in [k]} w_i & &\\
  \text{s.t.}\quad&{\sum_{i =j+1}^{k}} w_i \frac{ \mathtt{s}_j }{\mathtt{s}_i} ~\leq~ 1 & \forall j \in [k-1]& \\
  &\sum_{i=1}^k w_i \frac{1-\Q_n(\mathtt{s}_i)}{\mathtt{s}_i} ~\leq ~1&& \\
    & w_i ~\geq~ 0. &\forall i \in [k]&&
\label{lp:spm_n_dist} 
\end{aligned}
\tag{{\sf {\small{FR-n-d}}}}
  \end{align*}
\end{multicols}
Here, $\Q_n(y) = \left( 1 - \frac{y}{n} \right)^n $ and $\mathtt{s}_i = i/k$, $i\in [k]$. Furthermore, for $n\in [10]$ and $k\in\{200, 400, 800, 1600\}$, our approximation factors of $\frac{1}{{{\text{\sf FR-d}}(n,k)}}$ are presented in Table \ref{table:pp_finite_n}.
\end{theorem}
\begin{table}[h]
\setlength{\extrarowheight}{4pt}
\centering
\fontsize{9}{9}\selectfont{{
\begin{tabular}{|c|c|c||c|c||c|c||c|c||c|c|}
\hline
n & k & $\frac{1}{{\text{\sf FR-d}}(n,k)}$ &n & $\frac{1}{{\text{\sf FR-d}}(n,k)}$&n & $\frac{1}{{\text{\sf FR-d}}(n,k)}$ &n & $\frac{1}{{\text{\sf FR-d}}(n,k)}$ &n & $\frac{1}{{\text{\sf FR-d}}(n,k)}$  \\ \hline
\multirow{ 4}{*}{1} & 200 & {1.0000} & \multirow{ 4}{*}{2} & 0.7585 & \multirow{ 4}{*}{3} & 0.7167 &\multirow{ 4}{*}{4} &0.6988 &\multirow{ 4}{*}{5} & 0.6889  \\ 
& 400 & 1.0000 && 0.7586 &&0.7168 &&0.6989 && 0.6890\\
& 800 & {1.0000} && {0.7586} && {0.7168} && {0.6990} 
&&{0.6891}\\
&1600 & \textbf{1.0000} && \textbf{0.7586} && \textbf{0.7168} && \textbf{0.6990} && \textbf{0.6891}
\\ \hline  \hline   
\multirow{ 4}{*}{6} & 200 & 0.6826 &\multirow{ 4}{*}{7} & 0.6782 & \multirow{ 4}{*}{8} &0.6750 &\multirow{ 4}{*}{9} & 0.6726 &\multirow{ 4}{*}{10} & 0.6706  \\
& 400 & 0.6827 &&0.6784&&0.6752 &&0.6727 && 0.6708 \\
& 800 & {0.6828} && {0.6784} && {0.6752} &&{0.6728} && {0.6708}  \\ 
& 1600 & \textbf{0.6828} && \textbf{0.6785} && \textbf{0.6753} &&\textbf{0.6728} && \textbf{0.6709}\\
\hline
\end{tabular}
}}
\vspace{1em} 
\caption{$\frac{1}{{\text{\sf FR-d}}(n,k)}$ for $n \in [10]$ and $k = 200, 
400, 800$, and $1600$. For each $n\in [10]$, the maximum value of 
$\frac{1}{{\text{\sf FR-d}}(n,k)}$ (among the $k$'s considered) is boldfaced. 
\label{table:pp_finite_n}}
\end{table}

In the following account, we  present only  the proof of the first approximation factor---that is, the non-discretized bound of $\frac{1}{{{\text{\sf FR}}(n)}}$---and an overview on how the discretized LP \eqref{lp:spm_n_dist} is constructed. We provide the proof of the second approximation factor in Section \ref{sec:proof:thm:pp_n-dis}. 
\medskip 

\begin{proof}
{Proof of the Non-discretized Bound in Theorem \ref{thm:pp_n}.} 
 We use the same ideas as in the proof of Theorem~\ref{thm:pp1}. As the only difference, here, we bound $\MP$ using the $n$-dependent bound of Lemma~\ref{lem:prob}, namely, $\Q_n(\tau)$, rather than the $n$-independent limiting versions of these quantities (see Equation~\eqref{eq:z_0}) used in Theorem~\ref{thm:pp1}. In other words, in Equation \eqref{eq:u_lower_bound} in the proof of Theorem~\ref{thm:pp1}, instead of bounding $m(\tau)$ with $~1-\exp{(-s^\star({\tau}))}$, we bound it with $1- \Q_n(s^\star (\tau))$. Note that by Lemma~\ref{lem:prob}, $\exp{(-s^\star({\tau}))}$ and $\Q_n(s^\star (\tau))$ are the $n$-independent and $n$-dependent upper bounds, respectively, on the probability that there is no buyer $i$ with $v_i ~\geq ~ t_i' ~ \geq ~ \tau $. This leads to the second constraint in Problem \eqref{lp:spm_n_1}, as desired. The remainder of the proof is the same as that of Theorem~\ref{thm:pp1}; thus, it is omitted.
$\blacksquare$  \end{proof}

We now discuss how  Program \eqref{lp:spm_n_dist} is related to  Program \eqref{lp:spm_n_1}. 
 Recall 
that by Equation (\ref{eq:opt}), the optimal revenue is $\int_{0}^{\infty} 
s^\star(\tau) d\tau$, where $s^\star(\tau)$ is the probability that the optimal mechanism 
sells at a
price of at least $\tau$. We define 
$0 = \uptau_k \leq \uptau_{k-1} \leq \ldots \leq \uptau_1 \leq
\uptau_0 = \infty$ such that $\uptau_j =\inf\{\tau: s^\star(\tau) \le j/k\}$, $j\in [k-1]$. That is, we partition the range of $\tau =[0, \infty]$ using $s^\star(\cdot)$. Thus, as stated earlier, $k$ determines 
the granularity of our 
discretization. 
We then write 
\begin{align}\Opt ~=~ \sum_{i\in [k]} w^\star_{i}\,, \quad \quad \text{where} \quad w_{i}^\star ~=~ \int_{\uptau_{i}}^{\uptau_{i-1}} s^\star(\tau) d\tau\,. \label{eq:w}\end{align}
As we show in the proof of Theorem \ref{thm:pp_n}, $w_{i}^\star$, $i\in [k]$, is a feasible solution to Program \eqref{lp:spm_n_dist}. Consequently, the objective function of Program \eqref{lp:spm_n_dist} provides an upper bound on the optimal revenue. The first and second set of constraints in Problem \eqref{lp:spm_n_dist} can be viewed as the discretized version of the first and second set of constraints in Problem \eqref{lp:spm_n_1}, respectively. 
This discretization that we develop can likely be used in many other settings. 

\section{{Multi-unit SPP Mechanisms}}
\label{sec:multi-unit}
{In this section, we show that our pricing rules as well as our proof techniques generalize to the multi-unit settings. The principles underlying our pricing rules and the structure of the overall proof remain the same, while the actual proof itself is rather involved, requiring a few neat combinatorial lemmas.

In the multi-unit setting, there are $\hu \ge 1$ identical units of an item and $n$ unit-demand buyers. Similar to the previous section, the value of buyer $i\in [n]$ for being allocated is independently drawn from distribution $F_i$, where distributions are public knowledge. Note that under the SPP mechanism with price $\bp$, we approach buyers in decreasing order of prices and offer them a take-it-or-leave-it offer. Then, we continue selling the items until either the supply runs out or all buyers have been approached.  

{The following is the main result of this section.}

\begin{theorem}[Revenue Bound of SPP Mechanisms in Multi-unit Settings]\label{thm:pp_multi}
	In an $\hu$-unit $n$-buyer setting with independent private values, there {exists a vector of prices} $\mathtt{\bp} = (\mathtt{p}_1, \mathtt{p}_2, \ldots, \mathtt{p}_n)$, such that $
	 \PP_{\hu}(\bp) \geq \Opt_{\hu}\cdot \frac{1}{\text{\ref{lp:spmk}}}$, where $\Opt_{\hu}$ is the expected optimal revenue in an $\hu$-unit setting, $\PP_{\hu}(\bp)$ is the expected revenue of the SPP mechanism with prices $\mathtt{\bp}$, and {\sf{\small {FR-Multi}}({\hu})} is: 
  \begin{align} 
{\small{\text{\sf{\small FR-Multi}}({\hu})}} ~=~&{\max_{\{{s(\tau), \tau\ge 0\}}}} ~~\int_0^\infty 
s(\tau) d\tau \nonumber  \\
&  \begin{aligned}
\text{s.t.} ~~~ &0~ \le~ s(\tau) ~\le~ \min({\hu}, 1/\tau)~~ & &\forall ~~ \tau \ge 0 \nonumber \\
& \int_0^\infty  f_{\hu}(s(\tau)) d\tau ~\le~ 1\\
&\text{$s(\cdot)$ is weakly decreasing}\,.~~&\nonumber
\end{aligned}
\label{lp:spmk} 
 \tag{{{\small{\sf{FR-Multi({\hu})}}}}}
\end{align}
Here,  ${f_{\hu}(x) =  {\hu}- e^{-x} \sum_{i=0}^{{\hu}-1} ({\hu}-i)\frac{x^i}{i!} }$.  Our bound is greater than the best-known bound prior to this work---that is, $\frac{1}{{\text{\ref{lp:spmk}}}} > 1- \frac{{\hu}^{\hu}}{{\hu}! e^{\hu}}
$. 
\end{theorem}

 For $\hu\in [10], $ our approximation factor of $\frac{1}{{\text{\ref{lp:spmk}}}}$ is presented in Table \ref{table:multi} in Section~\ref{sec:intro}. The previous bound of $1- \frac{\hu^\hu}{\hu! e^{\hu}}$ was obtained via the correlation gap~\citep{yan2011mechanism} and independently by~\cite{TEGMM10} without using the correlation gap. Observe that when $\hu= 1$, the bound in Theorem \ref{thm:pp_multi} is the same as that in Theorem \ref{thm:pp1}. The proof of Theorem \ref{thm:pp_multi} is presented in Section \ref{sec:proof_multi}. We evaluate the performance of the 
$\PP$ mechanism that selects the best of Myersonian and uniform prices. 
{\color{black}Let $s^\star(\tau) = \sum_{i\in [n]} s_{i}^\star(\tau)$ be the expected number of buyers who are allocated the item  in the optimal mechanism and pay at least $\tau$, where $s_i^{\star}(\tau ) = \P[v_i~ \geq ~t_i~\geq ~{\tau}]$ and $t_i= t_i(\vminus)$'s are the thresholds in the optimal mechanism in the $\hu$-unit setting. 
To obtain the result, we show that $s^\star(\cdot)$ is a feasible solution to Problem {\text{\eqref{lp:spmk}}}.} Precisely, we first derive a lower bound on the revenue of the SPP mechanism with a uniform price, where this lower bound leads to the first set of the constraints in Problem {\text{\eqref{lp:spmk}}}. Then, we establish a lower bound on the revenue of the SPP mechanism with Myersonian prices as a function of $s^\star(\tau)$. This lower bound leads to the second constraint in Problem {\text{\eqref{lp:spmk}}}. 
Establishing this lower bound, which is presented in Lemma~\ref{lm:equal}, is one of the most challenging aspects of the proof. 

\begin{lemma}[Lower Bound of Myersonian SPP Mechanisms in Multi-unit Settings]\label{lm:equal} Consider the $\hu$-unit setting. 
Let ${s^\star(\tau )} = \sum_{i =1}^n \P[ v_i~ \geq ~t_i ~\geq ~{\tau}]$.  
Then, $m(\tau)$, which is the expected number of units that the Myersonian SPP mechanism sells with a price of at least ${\tau}$, satisfies the following inequality.
$$m(\tau) \geq \hu- \sum_{i=0}^{\hu-1} (\hu-i)\left( \begin{array}{c}
n  \\
i \end{array} \right)\frac{s^\star(\tau)^i}{n^i}(1-\frac{s^\star(\tau)}{n})^{n-i}.$$
\end{lemma}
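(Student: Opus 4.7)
The plan is to express $m(\tau)$ as the expectation of a truncated sum of independent Bernoullis and then compare that Poisson-binomial to a Binomial of the same mean via a Hoeffding-type concavity inequality. Concretely, define $X_i = \ind[v_i \ge t_i' \ge \tau]$ and $N = \sum_{i=1}^n X_i$. Since the Myersonian-price SPM processes buyers in decreasing order of $t_i'$, every buyer with $t_i' \ge \tau$ is approached strictly before any buyer whose personalized price falls below $\tau$; therefore, the number of units sold at a price at least $\tau$ equals $\min(N, \hu)$ (if fewer than $\hu$ of these high-priced buyers accept then all of them are served, otherwise exactly the first $\hu$ accepting ones are served, all at prices $\ge \tau$). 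Hence $m(\tau) = \E[\min(N, \hu)]$.

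Because the resampled thresholds $t_1',\ldots,t_n'$ are computed from independent fresh resamples, one batch per buyer, they are mutually independent and each is independent of every original $v_j$. Together with the marginal identity $(v_i, t_i) \stackrel{d}{=} (v_i, t_i')$, this makes $X_1,\ldots,X_n$ independent Bernoullis with $\P[X_i = 1] = s_i(\tau)$, so $N$ is a Poisson-binomial with mean $s(\tau)$. A direct expansion yields $\E[\min(\tilde N, \hu)] = \hu - \sum_{i=0}^{\hu-1}(\hu-i)\binom{n}{i}(s(\tau)/n)^i(1 - s(\tau)/n)^{n-i}$ for $\tilde N \sim \mathrm{Bin}(n, s(\tau)/n)$, so the lemma is equivalent to $\E[\min(N, \hu)] \ge \E[\min(\tilde N, \hu)]$. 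I plan to obtain this from the classical fact that the Binomial minimizes $\E[\phi(\cdot)]$ among Poisson-binomials with a prescribed mean for every concave $\phi$, applied with the concave truncation $\phi(x) = \min(x, \hu)$. A short self-contained proof is the two-coordinate exchange: replacing the parameters $p, q$ of two Bernoulli summands by their common average $(p+q)/2$ changes $\E[\phi(X+Y+Z)]$, conditional on any value $k$ of the remaining summands, by $\tfrac{(p-q)^2}{4}\bigl(\phi(k)+\phi(k+2)-2\phi(k+1)\bigr) \le 0$ by concavity, so iterated averaging drives all parameters to $s(\tau)/n$ without ever increasing $\E[\phi]$.

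The main obstacle I anticipate lies in the first step: one must justify carefully that the feasibility constraint $|S| \le \hu$ never causes the mechanism to skip a buyer with $t_i' \ge \tau$ except after $\hu$ earlier (and hence also high-priced) buyers have already accepted, so that $\min(N,\hu)$ is an exact count of units sold at price $\ge \tau$ rather than a one-sided bound. The concavity-majorization step is routine but still needs to be instantiated for the specific truncation $\min(\cdot,\hu)$ and for Bernoulli (rather than arbitrary bounded) summands.
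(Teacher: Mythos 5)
Your proof is correct and follows the same skeleton as the paper's: write $m(\tau)=\E[\min(N,\hu)]$ where $N$ is the (Poisson-binomial) number of buyers with $v_i\ge t_i'\ge\tau$, note that the right-hand side of the lemma is exactly $\E[\min(\tilde N,\hu)]$ for a binomial $\tilde N$ with the same mean $s(\tau)$, and establish $\E[\min(N,\hu)]\ge\E[\min(\tilde N,\hu)]$ by a two-coordinate exchange that averages a pair of success probabilities. The difference is in how the exchange step is carried out. The paper works directly with the polynomial $P_n(s_1,\dots,s_n)=\sum_{i=0}^{\hu-1}(\hu-i)\P[N=i]=\hu-\E[\min(N,\hu)]$ and shows, by an explicit case analysis grouping monomials according to whether $s_1$ and $s_2$ enter as $s_j$ or as $1-s_j$ (with separate cases depending on how many of the other coordinates appear in the ``first location''), that $P_n(s_1+\delta,s_2-\delta,\dots)-P_n(s_1,\dots)$ is a concave quadratic in $\delta$, hence maximized at the midpoint. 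Your version replaces this bookkeeping with the single conditional identity that averaging $p,q$ changes $\E[\phi(X+Y+k)]$ by $\tfrac{(p-q)^2}{4}\bigl(\phi(k)+\phi(k+2)-2\phi(k+1)\bigr)$ and then invokes concavity of $\phi=\min(\cdot,\hu)$; this is shorter and strictly more general, yielding the Hoeffding-type extremality of the binomial for every concave $\phi$ rather than just the truncation at $\hu$. Two minor points to tidy: the claim that iterated averaging ``drives all parameters to $s(\tau)/n$'' needs a one-line limiting or compactness argument (the values are monotone along the iteration and converge by continuity), a point the paper's own proof also glosses over at its claimed maximizer; and your justification that $m(\tau)$ equals $\E[\min(N,\hu)]$ exactly---because every buyer with $t_i'\ge\tau$ is approached before every buyer with $t_i'<\tau$---is actually more careful than the paper, which asserts $m(\tau)=\sum_{i=1}^{\hu-1}i\,\P[y=i]+\hu\,\P[y\ge\hu]$ without comment.
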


\medskip 
We present the proof of Lemma \ref{lm:equal} in Section \ref{sec:equal}. In this lemma, we express $m(\tau)$ in the form of $n$-degree polynomials with $O(n^{\hu})$ terms and $n$ variables; by carefully grouping the terms in the polynomial, we show that the polynomial is minimized when all of its variables are equal. This yields the desired inequality. } 

{Theorem \ref{thm:pp_multi} also establishes that our bound $\frac{1}{{\text{\ref{lp:spmk}}}}$ is strictly better than the best-known bound prior to this work---that is, $1- \frac{{\hu}^{\hu}}{{\hu}! e^{\hu}}$. To show this result, in Lemma \ref{lem:LP_multi}, we first characterize {\text{\ref{lp:spmk}}}: 

\begin{lemma}[{\color{black}Characterization} of {\text{\ref{lp:spmk}}}] \label{lem:LP_multi} Consider any positive integer $\hu>1$.
 Let $\tau^{\star}> \frac{1}{{\hu}}$ be the unique solution of the following equation
 \begin{align*}
\int_{1/{\hu}}^{\tau^{\star}} \left({\hu}- e^{-1/\tau} \sum_{i=0}^{{\hu}-1} \frac{({\hu}-i)}{\tau ^i i!}\right)  d\tau ~=~ \frac{{\hu}^{\hu}}{{\hu}! e^{\hu}}\,.\end{align*}
Then, \ref{lp:spmk}, defined in Theorem \ref{thm:pp_multi}, {is given by} $1 + \ln({\hu}\tau^{\star})$.
 \end{lemma}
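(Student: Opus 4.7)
The plan is to mirror the proof of the single-unit case in Claim~\ref{lem:LP} and guess that the optimizer of \ref{lp:spmk} saturates the pointwise cap on an initial interval and is zero afterwards:
\[
\sigma^\star(\tau) = \min(\hu,\,1/\tau) \text{ for }\tau \le \tau^\star, \qquad \sigma^\star(\tau) = 0 \text{ for } \tau > \tau^\star,
\]
with $\tau^\star > 1/\hu$ chosen to make the nonlinear integral constraint bind. The intuition is that with $c(x) := x f_{\hu}(x) = \hu - e^{-x}\sum_{i=0}^{\hu-1}(\hu-i)\,x^i/i!$, concavity of $c$ together with $c(0)=0$ makes the ``efficiency'' $x/c(x) = 1/f_{\hu}(x)$ monotone increasing in $x$, so one should saturate $\sigma$ at the cap $u(\tau):=\min(\hu,1/\tau)$ wherever it is used and set it to zero elsewhere; monotonicity then pins down the ``on'' set as $[0,\tau^\star]$.

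The first technical step is to verify concavity of $c$. After a short telescoping manipulation, differentiation gives the clean identity $c'(x) = e^{-x}\sum_{i=0}^{\hu-1} x^i/i!$, and a second differentiation yields $c''(x) = -e^{-x} x^{\hu-1}/(\hu-1)! \le 0$. Given this, I would make the ``saturate-or-zero'' claim rigorous through a Lagrangian argument: for $\lambda = 1/f_{\hu}(1/\tau^\star)$, the function $g_\lambda(\sigma) := \sigma - \lambda c(\sigma)$ is convex (since $c$ is concave), so on $[0, u(\tau)]$ it attains its maximum at an endpoint; the endpoint $u(\tau)$ wins iff $u(\tau)/c(u(\tau)) \ge \lambda$, which, since $1/f_{\hu}(u(\tau))$ is monotone in $\tau$, is equivalent to $\tau \le \tau^\star$. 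The resulting pointwise maximizer equals $\sigma^\star$, is automatically weakly decreasing, and---by the choice of $\tau^\star$ below---makes the cost constraint tight, so Lagrangian weak duality delivers optimality.

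It remains to identify $\tau^\star$ and compute the objective. Plugging $\sigma^\star$ into $\int_0^\infty \sigma(\tau) f_{\hu}(\sigma(\tau))\,d\tau = 1$ and splitting at $\tau = 1/\hu$, the piece on $[0, 1/\hu]$ has the constant integrand $c(\hu)$, and the telescoping identity $\sum_{i=0}^{\hu-1}(\hu-i)\hu^i/i! = \hu^{\hu}/(\hu-1)!$ gives a contribution of exactly $1 - \hu^{\hu}/(\hu!\,e^{\hu})$. Consequently the $[1/\hu, \tau^\star]$ piece must equal $\hu^{\hu}/(\hu!\,e^{\hu})$, which after substituting $c(1/\tau) = \hu - e^{-1/\tau}\sum_{i=0}^{\hu-1}(\hu-i)/(\tau^i i!)$ is precisely the defining equation for $\tau^\star$ stated in the lemma. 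The objective then evaluates to $\int_0^{1/\hu}\hu\,d\tau + \int_{1/\hu}^{\tau^\star}(1/\tau)\,d\tau = 1 + \ln(\hu\tau^\star)$, as claimed. The main obstacle is the optimality argument, because of the nonlinearity of the cost combined with the monotonicity constraint; the crux is to recognize that concavity of $c$ with $c(0)=0$ yields monotone efficiency $\sigma/c(\sigma)$, which makes the Lagrangian step clean and reduces everything else to bookkeeping through the Poisson-like identities that appear when differentiating and integrating $c$.
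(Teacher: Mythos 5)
Your proposal is correct, and while it follows the same overall skeleton as the paper's proof (guess the saturate-then-zero form of the optimizer, pin down $\tau^{\star}$ by making the nonlinear constraint bind, split the integral at $1/\hu$, and use the identity $\frac{1}{\hu}\sum_{i=0}^{\hu-1}(\hu-i)\frac{\hu^i}{i!}=\frac{\hu^{\hu}}{\hu!}$ to recover the defining equation and the value $1+\ln(\hu\tau^{\star})$), it differs in how the two key sub-steps are justified. First, the paper establishes that $f_{\hu}$ is decreasing in a separate Lemma~\ref{lm:monotonicity} by directly differentiating $f_{\hu}$ and taming the resulting sum with several telescoping manipulations; you instead prove concavity of $c(x)=xf_{\hu}(x)$ via the clean identities $c'(x)=e^{-x}\sum_{i=0}^{\hu-1}x^i/i!$ and $c''(x)=-e^{-x}x^{\hu-1}/(\hu-1)!$, and then invoke the standard fact that $c$ concave with $c(0)=0$ makes $c(x)/x$ decreasing. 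This is shorter and, as a bonus, the concavity is exactly what powers your second difference: the paper simply \emph{asserts} that monotonicity of $f_{\hu}(\sigma(\tau))$ in $\tau$ forces the optimal $\sigma$ to saturate the cap up to a threshold and vanish afterwards, whereas you make this rigorous with a Lagrangian weak-duality argument (convexity of $\sigma\mapsto\sigma-\lambda c(\sigma)$ forces an endpoint optimum pointwise, the multiplier $\lambda=1/f_{\hu}(1/\tau^{\star})$ selects the threshold, and tightness of the constraint at $\sigma^{\star}$ closes the duality gap). Your route buys a genuinely complete optimality proof and a more economical monotonicity argument; the paper's buys a self-contained elementary computation that does not require introducing duality. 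One small point worth making explicit if you write this up: existence and uniqueness of $\tau^{\star}$ follows because $c(1/\tau)>0$ for all finite $\tau$ and $c(1/\tau)\sim 1/\tau$ as $\tau\to\infty$ (since $c'(0)=1$), so the integral in the defining equation is continuous, strictly increasing, and diverges.
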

 
 The proof of Lemma \ref{lem:LP_multi} is presented in Section \ref{sec:proof:lem:LP_multi}. {\color{black}In that proof, we show that for any positive integer $\hu$, $g_{\hu}(x) := f_{\hu}(x)/x= \frac{{\hu}- e^{-x} \sum_{i=0}^{{\hu}-1} ({\hu}-i)\frac{x^i}{i!} }{x}$ is decreasing in $x$. We note that showing this monotonicity result is a non-trivial task because of the combinatorial term $\sum_{i=0}^{{\hu}-1} ({\hu}-i)\frac{x^i}{i!}$ in $f_{\hu}(x)$. Then, we show that $\ln({\hu} \tau^*)<\frac{{\hu}^{\hu}}{{\hu}! e^{\hu}}$. 
This inequality and Lemma \ref{lem:LP_multi} ensure that $\frac{1}{{\text{\ref{lp:spmk}}}} = \frac{1}{1+\ln({\hu} \tau^*)} > 1- \frac{{\hu}^{\hu}}{{\hu}! e^{\hu}} $, thereby confirming that our bound beats the best-known bound prior to this work.}

\section{{Position Auction, Partition Matroid, and General Matroid  SPP Mechanisms}}\label{sec:extend}
{The improved bounds of the multi-unit setting lead to improved approximations for position auctions and partition matroids settings, which we describe below. For general matroids, while we do not obtain improvements over the known approximation factor of $1-1/e$, we provide an alternate SPP mechanism using our pricing rules and obtain the same $1-1/e$ approximation factor.}
\subsection{{Position Auctions}}

The position auction (PA) setting captures the allocation constraints in search advertisements; see, for example, \cite{varian2007position} and \cite{athey2011position}. In this setting, there are $n$
positions characterized by click-through-rates $1\ge \alpha_1 \geq \alpha_2 \geq 
\hdots \geq \alpha_n \ge 0$ and $n$ buyers with private value-per-click equal to $v_1, \hdots,
v_n$, where $v_i$ is independently drawn from distribution $F_i$. If buyer $i$ is allocated to position $j$ and charged a certain payment amount $\pi_i$ in return for clicks, then his expected utility will be
$u_i = \alpha_j v_i - \pi_i$. Note that we describe the mechanism in terms of expected payment $\pi_i$ and not payment per click. If a buyer $i$ is allocated $x_i$ clicks in expectation and is charged
$\pi_i$, this is equivalent to charging him $\pi_i / x_i$ per click.

{An auction for the PA setting elicits bids from buyers and returns a (possibly randomized) allocation from buyers to positions and an (expected) payment for each buyer. Let $\mathcal A$ be the set of all feasible allocations from buyers to positions. Allocation $a\in \mathcal A$ is feasible if no more than one buyer is assigned to a position and no buyer is assigned to more than one position. Further, allocation $a$ can be represented by $J_i(a)$, $i\in [n]$, where $J_i(a)$ is the position that buyer $i$ is assigned under allocation $a$. Any direct mechanism $\mech$ can be described by its allocation and payment rules which we denote by $(\bf{y}, \boldsymbol{\pi})$, where ${\bf{y}}: \R^n\rightarrow [0,1]^{|\mathcal A|}$ and ${\boldsymbol{\pi}}: \R^n\rightarrow \R^n$. Here, $y_a(\hat {\bf{v}})$ is the probability that allocation $a$ is selected and $\pi_i(\hat{\bf{v}})$ is buyer $i$'s payment, given buyers' report $\hat {\bf{v}}$. Then, a direct mechanism is truthful if buyer $i$ prefers to reveal his true value; that is, for $v_i, \hat v_i$, we have 
\[\E\left[\sum_{a\in \mathcal A}y_a(v_i, {\bf{v}}_{-i}) \alpha_{J_i(a)} v_i -\pi_i(v_i, {\bf{v}_{-i}})\right] \ge \E\left [\sum_{a\in \mathcal A}y_a( \hat v_i, {\bf{v}}_{-i}) \alpha_{J_i(a)} v_i -\pi_i(\hat v_i, {\bf{v}}_{-i})\right]\,, \]
where the expectation is w.r.t. value of all the buyers except for buyer $i$---that is, ${\bf{v}}_{-i}$.
We let $x_i(\hat v_i, {{\bf{v}}_{-i}}) = \sum_{a\in \mathcal A}y_a(\hat v_i, {\bf{v}}_{-i}) \alpha_{J_i(a)}$ as the expected number of clicks of buyer $i$ when he reports $\hat v_i$ and other buyers are truthful.
Then, the mechanism is truthful if, for any $v_i, \hat v_i$, we have $\E[x_i(v_i, {\bf{v}}_{-i} )v_i -\pi(v_i, {\bf{v}_{-i}})]\ge \E[x_i(\hat v_i, {\bf{v}}_{-i})v_i -\pi(\hat v_i, {\bf{v}}_{-i})]$.}

{Observe that the truthfulness condition is expressed as a function of the expected number of clicks and payments. Given this, one may want to describe a mechanism in the PA settings as a mapping between a vector of values ${\bf v} =(v_1, \hdots, v_n)$ and expected clicks ${\bf x}(\cdot)$ and payments $\boldsymbol{\pi} (\cdot)$. However, there is a caveat: For a given vector of expected clicks $\bf x$, there may not always exist a randomized allocation. To make it clear, consider the following example. Assume that $n=2$ and let $\alpha_1 = 1$ and $\alpha_2 = 0.5$. Then, there is no randomized allocation that results in the expected clicks of ${\bf x}= (0.8, 0.8)$. This is so because $\sum_{i\in [2]}x_i$ exceeds the total available click-through-rates---that is, $\sum_{i\in [2]}x_i> \sum_{i\in [2]}\alpha_i = 1.5$.
Lemma 1 in \cite{ feldman2008truthful} nicely generalizes this observation and provides necessary and sufficient conditions under which a vector of expected clicks is feasible.}  

{\begin{lemma}[Feasibility of Expected Clicks -- Lemma 1 in \cite{ feldman2008truthful}]\label{lemma:pa_feasibility} In a PA setting with click-through-rates $\alpha_1 \geq \alpha_2 \geq \hdots \geq \alpha_n$, a vector of expected clicks ${\bf x} =(x_1, \ldots, x_n)$ is feasible---that is, there exists a randomized allocation $y\in [0,1]^{|\mathcal A|}$ such that $x_i=\sum_{a\in \mathcal A}y_{a}\alpha_{J_i(a)}$ for any $i\in [n]$, if and only if the following inequalities are satisfied:
  \begin{equation}\label{eq:pa_feasiblity}
  \sum_{i \in S} x_i \leq \sum_{j=1}^{\abs{S}} \alpha_j, \qquad \forall S
  \subseteq [n]\,.
  \end{equation}
\end{lemma}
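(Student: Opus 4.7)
The plan is to prove each direction separately. Necessity is the easy direction: I would fix any deterministic allocation $a \in \mathcal{A}$ and observe that the buyers in $S$ occupy at most $|S|$ distinct positions, so $\sum_{i \in S} \alpha_{J_i(a)}$ is bounded by the sum of the $|S|$ largest click-through rates, namely $\sum_{j=1}^{|S|}\alpha_j$. Averaging this bound with weights $y_a$ (which sum to one) immediately yields $\sum_{i \in S} x_i \leq \sum_{j=1}^{|S|}\alpha_j$.

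The sufficiency direction is the substantive one, and the plan is to recognize the feasibility polytope as a polymatroid. I would define $f: 2^{[n]} \to \mathbb{R}_{\geq 0}$ by $f(S) := \sum_{j=1}^{|S|}\alpha_j$ and check that $f$ is monotone and submodular: the marginal $f(S \cup \{i\}) - f(S) = \alpha_{|S|+1}$ depends only on the cardinality of $S$, and the nonincreasing ordering of the $\alpha_j$'s ensures that this marginal is weakly decreasing in $|S|$, giving the diminishing-returns property. This identifies
\[
P := \{\mathbf{x} \in \mathbb{R}_{\geq 0}^n : \sum_{i \in S} x_i \leq f(S) \text{ for all } S \subseteq [n]\}
\]
as the independence polytope of a polymatroid.

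The second step is to appeal to Edmonds' characterization of the vertices of such a polytope: every vertex of $P$ is a \emph{greedy vector} indexed by a permutation $\pi$ of $[n]$ and a cutoff $0 \leq k \leq n$, with the coordinate $\pi(i)$ equal to $\alpha_i$ for $i \leq k$ and $0$ for $i > k$. Crucially, each greedy vertex is exactly the expected-click vector of the deterministic allocation that places buyer $\pi(i)$ in position $i$ for $i \leq k$ and leaves the remaining buyers unassigned. Hence any $\mathbf{x} \in P$ admits a convex decomposition into click vectors of allocations in $\mathcal{A}$, and the coefficients of that decomposition provide the required probabilities $\mathbf{y}$.

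The main obstacle is the vertex characterization step. While Edmonds' theorem delivers this at once, a self-contained argument would proceed by induction on $n$: I would sort the buyers so that $x_1 \geq \ldots \geq x_n$ and look at the tight constraints. If some prefix constraint $\sum_{i=1}^k x_i = \sum_{j=1}^k \alpha_j$ is tight, I would split the problem into the tight block and its complement and recurse on each piece; otherwise, I would perturb $\mathbf{x}$ in an appropriate direction (e.g., transferring a small mass from the largest $x_i$ to an unassigned $\alpha$) until some constraint becomes tight, each time strictly reducing the combinatorial complexity. For brevity, I would prefer to cite Edmonds' theorem directly, since it is standard and this lemma is auxiliary to the position auction results.
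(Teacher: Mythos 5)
Your proof is correct, but it follows a genuinely different route from the paper: the paper does not prove this lemma at all, instead citing it as Lemma~1 of \citet{feldman2008truthful}, and notes that the construction of the randomized allocation there is carried out via classical preemptive-scheduling algorithms on uniform machines (the level-algorithm line of \citealp{horvath1977level} and \citealp{gonzalez1978preemptive}). Your argument instead observes that $f(S)=\sum_{j=1}^{|S|}\alpha_j$ is a cardinality-based monotone submodular function (marginals $\alpha_{|S|+1}$ are nonincreasing precisely because the $\alpha_j$ are sorted), so the constraint set is a polymatroid, and then invokes Edmonds' greedy characterization of its vertices; since each greedy vertex $(x_{\pi(1)},\dots,x_{\pi(k)})=(\alpha_1,\dots,\alpha_k)$ is exactly the click vector of the deterministic assignment of buyer $\pi(i)$ to slot $i$, Carath\'eodory-type convex decomposition of any feasible $\mathbf{x}$ into vertices yields the required distribution $\mathbf{y}$ over $\mathcal{A}$. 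Your necessity direction (bounding $\sum_{i\in S}\alpha_{J_i(a)}$ by the $|S|$ largest rates for each deterministic $a$ and averaging over $y_a$) is the standard one and is fine, modulo the implicit and harmless assumption $\alpha_j\ge 0$ to handle unassigned buyers. What the scheduling route buys is an explicit polynomial-time construction of the randomized allocation realizing $\mathbf{x}$, which matters if one wants to actually run the mechanism; what your route buys is brevity and a structural explanation of why the prefix-sum inequalities are the right ones. Either is acceptable here, though if you cite Edmonds you should state the vertex characterization precisely (vertices are indexed by an ordered subset, with zeros outside it), since that is where the whole sufficiency argument lives.
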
}

{As indicated in \cite{feldman2008truthful}, constructing randomized allocation $y$ for any valid vector of expected clicks is closely related to the classical scheduling problems, which are studied in \cite{horvath1977level} and \cite{gonzalez1978preemptive}. For more details, see \cite{feldman2008truthful}.}

{Lemma \ref{lemma:pa_feasibility} enables us to describe a (truthful) PA mechanism as a mapping between a vector of values ${\bf v} =(v_1, \hdots, v_n)$ and expected clicks ${\bf x}$ and payments $\boldsymbol{\pi}$, such that for any $\bf v$, the expected clicks satisfy Inequalities (\ref{eq:pa_feasiblity}). With this in mind, in the following account, we first present the optimal mechanism for the PA settings. In particular, we show that the optimal mechanism can be expressed as a function of the allocation and payment rules of the optimal mechanism in the $j$-unit setting with the same value distributions, where $j \in [n]$. 
Inspired by this, we then present an SPP mechanism for the PA settings that builds on our proposed SPP mechanisms for the $j$-unit setting (see Section \ref{sec:multi-unit}). As our main result, we show that our SPP mechanism for the PA settings obtains an approximation of $0.6543$ to the optimal revenue. While $0.6543$ is a universal bound for any click-through-rates, we obtain an improved bound of $\sum_{j=1}^n f_j /
\text{\sf{FR-Multi}}(j)$, where $f_j$ is the fraction of the optimal revenue that can be attributed to position $j$ (precisely defined later in this section) and $1/\text{\sf{FR-Multi}}(j)$ is the approximation ratio proved for our SPP mechanisms in the $j$-unit settings.}

{Next, we characterize the optimal mechanism. For $j\in [n]$ and $i\in [n]$, let $x_i^j({\bf{v}}) \in \{0,1\}$ and $\pi_i^j(\bf{v}) \in \R_+$ be the allocation and payments in the optimal $j$-unit mechanism when buyers' value is ${\bf v}=(v_1, \ldots, v_n)$. It is easy to see that $x_i^1({\bf{v}}) \leq x_i^2({\bf{v}}) \leq \hdots \le x_i^n({\bf{v}})$ for every buyer $i$. Consequently, there is a position $J_i$, such that $x_i^{j}({\bf{v}}) = 1$ for any $j \geq J_i$, and zero otherwise. This is so because in the $j$-unit optimal mechanism, items are allocated to at most $j$ bidders with the highest non-negative (ironed) virtual values. Thus, if a buyer $i$ is allocated in the $j$-unit optimal mechanism, he is also allocated in the $j'$-unit optimal mechanism, where $j'> j$. Now, consider the following auction in the PA settings that assigns position $J_i$ to buyer $i$; that is, the buyer gets the expected click of $x_i({\bf{v}}) = \sum_{j\in [n]} (\alpha_j - \alpha_{j+1}) x_i^j({\bf{v}}) = \alpha_{J_i}$ and is charged $\pi_i({\bf{v}}) = \sum_{j\in [n]} (\alpha_j - \alpha_{j+1}) \pi_i^j(\bf{v})$. The next lemma reveals that this auction is indeed optimal in the PA settings.}  

{\begin{lemma} [Optimal Mechanism in Position Auction Settings]\label{lemma:pa_structure}
  For $j\in [n]$ and $i\in [n]$, let $x_i^j({\bf{v}}) \in \{0,1\}$ and $\pi_i^j(\bf{v}) \in \R_+$ be the
 allocations and payments in the $j$-unit optimal mechanism, when buyers' value is ${\bf v}=(v_1, \ldots, v_n)$. Then, the mechanism for the PA
 settings with the following rules is optimal:
 $$x_i({\bf{v}}) = \sum_{j\in [n]} (\alpha_j - \alpha_{j+1}) x_i^j({\bf{v}}) \quad \text{and} \quad \pi_i({\bf{v}}) = \sum_{j\in [n]} (\alpha_j - \alpha_{j+1}) \pi_i^j(\bf{v})\,,$$  
  where $\alpha_{n+1} = 0$.  
\end{lemma}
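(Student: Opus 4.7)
The plan is to decompose the optimal position‑auction mechanism into $n$ parallel optimal $j$‑unit Myerson mechanisms by rewriting the click‑through‑rates as telescoping sums $\alpha_j = \sum_{k\ge j} \beta_k$ with $\beta_k := \alpha_k - \alpha_{k+1} \ge 0$ (and $\alpha_{n+1}=0$). First, I would apply Myerson's theorem in the PA setting: among all BIC/IIR mechanisms, expected revenue is upper bounded by expected (ironed) virtual welfare $\E\bigl[\sum_i \phi_i(v_i)\,x_i(\mathbf{v})\bigr]$, where $x_i(\mathbf{v})$ is the expected number of clicks allocated to buyer $i$, subject to the feasibility conditions of Lemma~\ref{lemma:pa_feasibility}.

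Next, I would rewrite $\alpha_{J_i} = \sum_{k=1}^n \beta_k\,\mathbf{1}[J_i \le k]$, so that any deterministic PA assignment is encoded by nested indicators $\tilde x_i^k := \mathbf{1}[J_i \le k]$ satisfying $\sum_i \tilde x_i^k \le k$ for every $k$. Virtual welfare then splits as $\sum_k \beta_k \sum_i \phi_i(v_i)\,\tilde x_i^k(\mathbf{v})$. Since $\beta_k \ge 0$, this upper bound is maximized by maximizing each inner sum separately, and the maximum of $\sum_i \phi_i(v_i)\,\tilde x_i^k$ subject to $\sum_i \tilde x_i^k \le k$ is exactly the $k$‑unit optimal Myerson allocation $x_i^k(\mathbf{v})$. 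The key observation is that these allocations are automatically nested: the $k$‑unit Myerson auction selects the $k$ buyers with the highest nonnegative (ironed) virtual values, so with consistent tie‑breaking $x_i^1(\mathbf{v}) \le x_i^2(\mathbf{v}) \le \cdots \le x_i^n(\mathbf{v})$ pointwise, so $x_i^k$ really does correspond to ``buyer $i$ occupies a position $\le k$''.

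To confirm that the proposed rule $x_i(\mathbf{v}) = \sum_j \beta_j\,x_i^j(\mathbf{v})$ yields a valid PA expected‑clicks vector, I would verify the inequalities of Lemma~\ref{lemma:pa_feasibility}: for any $S \subseteq [n]$, a short telescoping calculation using $\sum_i x_i^j \le j$ gives
\begin{align*}
\sum_{i \in S} x_i(\mathbf{v}) \;\le\; \sum_{j=1}^n \beta_j \min(|S|, j) \;=\; \sum_{j=1}^{|S|} \alpha_j,
\end{align*}
so Lemma~\ref{lemma:pa_feasibility} guarantees a randomized allocation realizing these expected clicks. Payment correctness is then immediate by linearity of Myerson's payment identity: since $(x_i^j, \pi_i^j)$ satisfies the identity for the $j$‑unit problem, the pair $(x_i, \pi_i)$ with $\pi_i(\mathbf{v}) = \sum_j \beta_j\,\pi_i^j(\mathbf{v})$ satisfies it for the PA allocation, so the mechanism is BIC and IIR. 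Because its virtual welfare pointwise attains the maximum feasible virtual welfare, its expected revenue matches Myerson's upper bound and the mechanism is optimal.

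The main subtlety will be establishing nestedness $x_i^1 \le x_i^2 \le \cdots \le x_i^n$ in full generality: for irregular distributions this requires the deterministic implementation of Myerson's mechanism via ironed virtual values together with a consistent tie‑breaking rule, as cited in Section~\ref{sec:model}. Everything else---the virtual‑welfare decomposition, the telescoping feasibility check, and the linearity of the payment identity---is essentially bookkeeping.
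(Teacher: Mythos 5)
Your proposal is correct and follows essentially the same route as the paper's proof: both rest on the nestedness $x_i^1(\mathbf{v})\le\cdots\le x_i^n(\mathbf{v})$ of the $j$-unit Myerson allocations, the telescoping identity $\sum_{j}(\alpha_j-\alpha_{j+1})x_i^j(\mathbf{v})=\alpha_{J_i}$, and linearity of the Myerson payment rule. The only difference is one of emphasis: the paper simply asserts that the optimal PA auction assigns positions assortatively by ironed virtual value, whereas you derive that fact by decomposing the virtual-welfare upper bound across the layers $\beta_k=\alpha_k-\alpha_{k+1}$ and invoking the $k$-unit optima, which is a slightly more self-contained justification of the same step.
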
}

{The proof of Lemma \ref{lemma:pa_structure} is presented in Section \ref{sec:proof:lem:pos}. This lemma enables us to view an auction for the PA setting as a combination of the multi-unit auctions. It provides the following decomposition of the optimal revenue:
$$\textstyle \E[\sum_{i\in [n]} \pi_i ({\bf{v}})] = \sum_{j\in [n]} (\alpha_j - \alpha_{j+1}) \cdot
\E[\sum_{i\in [n]} \pi_i^j({\bf{v}})]\,,$$
where the expectation is w.r.t. buyers' value. Note that the $j$-th term is the contribution of the $j$-unit auction---that is, the $j$-th position, to the total revenue. Precisely, we 
 define
\begin{align}\label{eq:f}f_j = \frac{(\alpha_j - \alpha_{j+1}) \cdot \E[\sum_{i\in [n]} \pi_i^j({\bf{v}})]}{\E[\sum_{i\in [n]}
\pi_i({\bf{v}})] }\end{align} as the fraction of the optimal revenue that can be attributed to position $j\in [n]$.\medskip

\textbf{SPP Mechanisms for Position Auction Settings.}
Motivated by the structure of the optimal mechanism in Lemma \ref{lemma:pa_structure}, we propose the following SPP mechanism: For each $j=1, \ldots, n$, we run the SPP mechanism with the best of Myersonian and uniform prices for $j$-unit settings, as described in Section \ref{sec:multi-unit}. Let $\tilde{x}_{i}^j({\bf{v}})$ and $\tilde{\pi}_{i}^j({\bf{v}})$, $i\in [n]$, be the outcome of the SPP mechanism in the $j$-unit settings.} {That is, $\tilde{x}_{i}^j({\bf{v}})$ is one if buyer $i$ gets an item in the SPP mechanism for $j$-unit setting, and zero otherwise. Further, $\tilde{\pi}_{i}^j({\bf{v}})$ is buyer $i$'s payment in that auction. 
We then define 
  \begin{align}\tilde{x}_i({\bf{v}}) = \sum_{j\in [n]} (\alpha_j - \alpha_{j+1}) \tilde{x}_i^j({\bf{v}}) \quad \text{and} \quad 
  \tilde{\pi}_i({\bf{v}}) = \sum_{j\in [n]} (\alpha_j - \alpha_{j+1}) \tilde{\pi}_i^j({\bf{v}})\label{eq:spm_PA}\end{align} as the expected number of clicks and the expected payment of buyer $i$ in the SPP mechanism for the position auction when buyers' value is $\bf{v}$. 
} 

{At first glance, it may not be obvious that the described mechanism is an SPP mechanism for the PA settings. However, in fact, the mechanism can be explained as an SPP mechanism with $n^2$ prices. Let $p_i^j$ be the best of Myersonian and uniform prices for buyer $i$ in the SPP mechanism with $j$ units. Then, for each position $j =1, 2, \ldots, n$, we approach buyers sequentially in decreasing order of their prices $p_i^j$ and offer them the expected number of clicks of $(\alpha_j-\alpha_{j+1})$ at price of $p_i^j(\alpha_j-\alpha_{j+1})$. We stop when either $j$ buyers accept our offer or we have approached them all.}

{One can think about $(\alpha_j-\alpha_{j+1})$ as the extra clicks that a buyer obtains when he is moved from position $j+1$ to position $j$. That being said, when a buyer accepts the offer associated with position $j$, this does not imply that he will be assigned to position $j$. Put differently, when we approach buyers, we do not offer them a particular position; instead, we offer them an (extra) expected number of clicks. This enables us to run the SPP mechanisms in parallel or sequentially in an arbitrary order, as the SPP mechanisms for different positions are independent of each other. Because of this, after we run all the SPP mechanisms, we may have a buyer $i$ who has accepted two offers, one for position $1$ and one for position $3$. This implies that in this SPP mechanism for the PA setting, the buyer obtains expected clicks of $(\alpha_1-\alpha_2)+(\alpha_3-\alpha_4)$ at the expected price of $p_i^1(\alpha_1-\alpha_2)+p_i^3(\alpha_3-\alpha_4)$.  }

{Thus, it is evident that the mechanism is truthful, in the sense that when each buyer $i$ is approached for the $j$-unit auction at price $p_i^j$, he accepts the offer when his value-per-click $v_i$ is greater than or equal to $p_i^j$. This is so because (i) when $v_i\ge p_i^j$, the extra utility that the buyer enjoys from accepting the offer---that is, $(\alpha_j-\alpha_{j+1})(v_i-p_i^j)$---is non-negative and vice versa, and (ii) accepting or rejecting an offer does not influence other offers.  }

{Next, we show that  $(\tilde{x}_1({\bf{v}}), \hdots, \tilde{x}_n({\bf{v}}))$, defined in Equation (\ref{eq:spm_PA}) is a valid vector of expected clicks, in the sense that it satisfies the feasibility conditions in
 (\ref{eq:pa_feasiblity}). Recall that the  feasibility conditions in (\ref{eq:pa_feasiblity}) are necessary and sufficient to have  a randomized allocation $y({\bf{v}})\in [0,1]^{|\mathcal A|} $ such that for any $i\in [n]$, we have $\tilde x_i({\bf{v}})=\sum_{a\in \mathcal A}y_{a}({\bf{v}})\alpha_{J_i(a)}$.   }

{\begin{lemma}[Feasibility of Expected Clicks in the SPP Mechanism]\label{lem:feasible}
  Suppose that for any $i, j\in [n]$,  $x_i^j \in [0,1]$ and for any $j \in [n]$, we have $\sum_{i\in [n]} x_i^j \leq j$. Then, $x_i = \sum_{j\in [n]}
  (\alpha_j - \alpha_{j+1}) x_i^j$, satisfies the feasibility conditions in (\ref{eq:pa_feasiblity}). 
\end{lemma}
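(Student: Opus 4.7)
Fix an arbitrary $S \subseteq [n]$ with $|S| = k$. The plan is to substitute the definition of $x_i$, swap the order of summation, bound the inner sum in a slack-free way, and then recognize the resulting expression as a telescoping/Abel sum equal to $\sum_{j=1}^k \alpha_j$.

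First I would write
\[
\sum_{i \in S} x_i \;=\; \sum_{i \in S} \sum_{j=1}^n (\alpha_j - \alpha_{j+1}) x_i^j \;=\; \sum_{j=1}^n (\alpha_j - \alpha_{j+1}) \sum_{i \in S} x_i^j.
\]
The key observation is that $(\alpha_j - \alpha_{j+1}) \ge 0$ by monotonicity of the click-through-rates (with $\alpha_{n+1}=0$), so I can bound each inner sum independently. The hypotheses give two upper bounds on $\sum_{i \in S} x_i^j$: since $x_i^j \le 1$ we have $\sum_{i \in S} x_i^j \le |S| = k$, and since $\sum_{i \in [n]} x_i^j \le j$ and $S \subseteq [n]$ with $x_i^j \ge 0$ we have $\sum_{i \in S} x_i^j \le j$. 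Hence $\sum_{i \in S} x_i^j \le \min(k, j)$, and the whole expression is upper bounded by
\[
\sum_{j=1}^n (\alpha_j - \alpha_{j+1}) \min(k, j).
\]

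The next step is to show this equals exactly $\sum_{j=1}^k \alpha_j$. Split the sum at $j = k$: for $j \le k$ the minimum is $j$, for $j > k$ it is $k$. Abel summation on the first piece gives $\sum_{j=1}^k j(\alpha_j - \alpha_{j+1}) = \sum_{j=1}^k \alpha_j - k\alpha_{k+1}$, while the second piece telescopes to $k(\alpha_{k+1} - \alpha_{n+1}) = k\alpha_{k+1}$. Adding yields $\sum_{j=1}^k \alpha_j$, which is the desired bound for the set $S$. Since $S$ was arbitrary, the feasibility conditions in (\ref{eq:pa_feasiblity}) all hold.

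There is no real obstacle here; the only thing to be careful about is to use both bounds on $\sum_{i \in S} x_i^j$ simultaneously (the per-coordinate bound $x_i^j \le 1$ and the aggregate bound $\sum_i x_i^j \le j$) — using just one of them is not enough. Nonnegativity of the weights $\alpha_j - \alpha_{j+1}$ (from $\alpha_1 \ge \cdots \ge \alpha_n \ge \alpha_{n+1} = 0$) is what makes the term-by-term bounding legitimate.
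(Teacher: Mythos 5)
Your proof is correct and follows essentially the same route as the paper's: swap the order of summation, bound $\sum_{i\in S} x_i^j$ by $\min(j,|S|)$, and evaluate $\sum_j (\alpha_j-\alpha_{j+1})\min(j,|S|)=\sum_{j=1}^{|S|}\alpha_j$. You simply spell out two details the paper leaves implicit — that both the per-coordinate bound $x_i^j\le 1$ and the aggregate bound are needed for the $\min$, and that nonnegativity of $\alpha_j-\alpha_{j+1}$ justifies the term-by-term bounding.
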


\begin{proof}{Proof of Lemma \ref{lem:feasible}}
  For any subset $S \subseteq [n]$, we have $$\textstyle\sum_{i \in S} x_i = \sum_{j\in[n]}
  \left( (\alpha_j - \alpha_{j+1}) \sum_{i \in S} x_i^j \right) \leq \sum_{j\in [n]} (\alpha_j -
  \alpha_{j+1}) \min(j, \abs{S}) = \sum_{j=1}^{\abs{S}} \alpha_j\,,$$
  where the inequality holds because $\sum_{i\in [n]} x_i^j \leq j$. 
  The above equation verifies Condition  (\ref{eq:pa_feasiblity}) and completes the proof. 
$\blacksquare$ \end{proof}

We now present the approximation factor for our SPP mechanism. 

\begin{theorem}[Revenue Bound of SPP Mechanisms in Position Auction Settings]\label{thm:PA}
  Our SPP mechanism for PA Settings defined above is a $\sum_{j\in [n]} \frac{f_j} 
  {\text{{\sf{FR-Multi}}}(j)} \geq \frac{1} {\text{{\sf{FR-Multi}}}(1)}  = 0.6543$-approximation,
  where $f_j$'s, defined in Equation (\ref{eq:f}), are the fraction of the optimal revenue attributed to position $j$, and $\text{\sf{\small{FR-Multi}}}(j)$ is defined in Theorem \ref{thm:pp_multi}.
\end{theorem}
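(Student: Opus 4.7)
The plan is to reduce Theorem~\ref{thm:PA} to a weighted combination of the multi-unit bounds from Theorem~\ref{thm:pp_multi}. First I would verify that the mechanism described via~\eqref{eq:spm_PA} is a valid IC-IR mechanism for the PA setting. Truthfulness holds because each buyer faces a sequence of independent take-it-or-leave-it offers, and each offer of expected click $\alpha_j-\alpha_{j+1}$ at per-click price $p_i^j$ is accepted iff $v_i\ge p_i^j$; acceptance of one offer does not affect any other. Feasibility of the combined allocation vector $(\tilde x_1({\bf v}),\ldots,\tilde x_n({\bf v}))$ follows directly from Lemma~\ref{lem:feasible}, since each $\tilde x_i^j({\bf v})\in\{0,1\}$ and the $j$-unit SPM selects at most $j$ winners.

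Second, I would decompose both the benchmark and the SPM revenue linearly in the increments $(\alpha_j-\alpha_{j+1})$. By Lemma~\ref{lemma:pa_structure},
\[
\mathsf{Opt}_{\mathrm{PA}} \;=\; \sum_{j\in[n]} (\alpha_j-\alpha_{j+1})\cdot \mathsf{Opt}_j,
\qquad \mathsf{Opt}_j=\E\!\left[\textstyle\sum_i \pi_i^j({\bf v})\right],
\]
and by~\eqref{eq:spm_PA} the proposed SPM's expected revenue equals $\sum_{j\in[n]} (\alpha_j-\alpha_{j+1})\cdot \tilde R^j$, where $\tilde R^j=\E[\sum_i \tilde\pi_i^j({\bf v})]$ is the revenue of the best-of-Myersonian-and-uniform SPM when run as a stand-alone $j$-unit mechanism on the same product distribution. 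Applying Theorem~\ref{thm:pp_multi} to each term gives $\tilde R^j \;\ge\; \mathsf{Opt}_j/\text{\sf{LP-SPM}}(j)$.

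Third, dividing by $\mathsf{Opt}_{\mathrm{PA}}$ and using the definition of $f_j$ in~\eqref{eq:f}, the approximation ratio of our SPM is at least
\[
\frac{\sum_{j\in[n]}(\alpha_j-\alpha_{j+1})\mathsf{Opt}_j/\text{\sf{LP-SPM}}(j)}{\sum_{j\in[n]}(\alpha_j-\alpha_{j+1})\mathsf{Opt}_j} \;=\; \sum_{j\in[n]} \frac{f_j}{\text{\sf{LP-SPM}}(j)}.
\]
Since $\sum_j f_j=1$, the universal bound of $0.6543$ then reduces to showing $1/\text{\sf{LP-SPM}}(j)\ge 1/\text{\sf{LP-SPM}}(1)=0.6543$ for every $j\ge 1$. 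This monotonicity is where I would be the most careful: one can argue it directly from the LP, since as $\hu$ grows the constraint $\sigma(\tau)\le\min(\hu,1/\tau)$ relaxes while the second constraint weakens because $f_\hu(\sigma(\tau))$ decreases in $\hu$ for fixed $\sigma(\tau)$ (each term $(\hu-i)x^i/i!$ grows faster than $\hu$). Hence the feasible region of $\text{\sf{LP-SPM}}(\hu)$ contains that of $\text{\sf{LP-SPM}}(1)$ via the embedding $\sigma\mapsto\sigma$, so $\text{\sf{LP-SPM}}(\hu)\ge \text{\sf{LP-SPM}}(1)$, i.e.\ $1/\text{\sf{LP-SPM}}(\hu)\le$... wait, I want the opposite direction; the value of the maximization LP grows but we are taking its reciprocal, so the approximation ratio $1/\text{\sf{LP-SPM}}(\hu)$ would \emph{shrink}. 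The correct argument is instead the numerical observation encoded in Table~\ref{table:multi}: each $1/\text{\sf{LP-SPM}}(\hu)$ is strictly larger than $1-\hu^\hu/(\hu!\,e^\hu)$, and the latter is itself monotonically increasing in $\hu$ starting from $1-1/e>0.6543$ at $\hu=2$; combined with $1/\text{\sf{LP-SPM}}(1)=0.6543$, this yields $1/\text{\sf{LP-SPM}}(j)\ge 0.6543$ for all $j\ge 1$, completing the proof.

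The only real obstacle is this last monotonicity/lower-bound comparison across $\hu$; everything else is a linearity-of-expectation argument combined with feasibility of the SPM's allocation. I expect the cleanest way to phrase the final step is to cite the tail-bound $1/\text{\sf{LP-SPM}}(\hu)>1-\hu^\hu/(\hu!\,e^\hu)\ge 1-1/e>0.6543$ proved in Theorem~\ref{thm:pp_multi} for $\hu\ge 2$, and treat the $\hu=1$ case as tight.
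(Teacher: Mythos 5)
Your proof follows essentially the same route as the paper's: decompose both the optimal PA revenue and the SPM's revenue linearly in the increments $(\alpha_j-\alpha_{j+1})$ via Lemma~\ref{lemma:pa_structure} and Equation~(\ref{eq:spm_PA}), apply Theorem~\ref{thm:pp_multi} to each $j$-unit term, and invoke the definition of $f_j$; the paper's proof is exactly this chain of (in)equalities, and it does not even spell out the final comparison $\sum_j f_j/\text{\sf{LP-SPM}}(j)\ge 1/\text{\sf{LP-SPM}}(1)$ that you attempt to justify. One correction to that last step, though: your chain $1/\text{\sf{LP-SPM}}(\hu) > 1-\hu^\hu/(\hu!\,e^\hu)\ge 1-1/e>0.6543$ is false as written, since $1-1/e\approx 0.6321<0.6543$. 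The repair is to anchor at $\hu=2$ rather than $\hu=1$: for $\hu\ge 2$ one has $1-\hu^\hu/(\hu!\,e^\hu)\ge 1-2/e^2\approx 0.7293>0.6543$ because that sequence is increasing in $\hu$, while $\hu=1$ gives exactly $1/\text{\sf{LP-SPM}}(1)=0.6543$; combined with $\sum_j f_j=1$ this yields the claimed universal bound. (Your self-caught detour about LP feasible-region monotonicity was indeed pointing the wrong way and is rightly discarded.)
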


\begin{proof}{Proof of Theorem \ref{thm:PA}}
  Let $(x_i(\cdot),\pi_i(\cdot))$, $i\in [n]$, represent the  expected number of clicks and payment in the optimal PA mechanism and let $(\tilde{x}_i(\cdot),\tilde{\pi}_i(\cdot))$, $i\in [n]$, be
  the expected number of clicks and payment in our SPP mechanism for the PA settings. Finally, for $j, i\in[n]$, let $(x_i^j(\cdot),\pi_i^j(\cdot))$, and $(\tilde x_i^j(\cdot),\tilde \pi_i^j(\cdot))$  be  their respective decompositions in terms of multi-unit auctions (see Lemma \ref{lemma:pa_structure} and Equation (\ref{eq:spm_PA})). Then, we have
  \begin{align*}\E\Big[\sum_{i\in [n]} \tilde{\pi}_i ({\bf v})\Big] &= \sum_{j\in [n]} (\alpha_j - \alpha_{j+1})
  \E\left[\tilde{\pi}_i^j({\bf v})\right] \\
  &\geq \sum_{j\in [n]} (\alpha_j - \alpha_{j+1}) \frac{
    \sum_{i\in [n]} \E[\pi_i^j({\bf v})]}{ \text{\small{\sf{FR-Multi}}}(j) } =\sum_{j\in [n]} \frac{f_j}{\text{\small{\sf{FR-Multi}}}(j) }  
  \E\Big[\sum_{i\in [n]}
\pi_i({\bf{v}})\Big]\,, \end{align*}
where the inequality follows from Theorem \ref{thm:pp_multi}, which provides an approximation factor for the SPP mechanism in the $j$-unit setting, and the last equation holds because of the definition of $f_j$, provided in Equation (\ref{eq:f}). The bound of $\frac{1} {\text{{\sf{FR-Multi}}}(1)}$ can be  obtained by observing that $\sum_{j\in [n]} \frac{f_j}{{\text{\small{\sf{FR-Multi}}}}(j) } \ge \frac{1} {\text{{\sf{FR-Multi}}}(1)}$. $\blacksquare$  
\end{proof}}

\subsection{{Partition Matroid Settings}}\label{sec:partitionmatroid}
The partition matroid feasibility constraint is defined by a partition of the set of buyers $[n]$ into $[n] = S_1 \cup S_2 \cup \dots \cup S_k$ that is publicly available, and from each set $S_i$, at most, $\hu_i$ buyers are allowed to be allocated. Such feasibility constraints could arise from, for example, legal/policy restrictions that prevent more than $\hu_i$ buyers from a particular geographical region $i$ from receiving an allocation.

The definition of SPP mechanism is provided in Section~\ref{sec:model} for general matroids. {Just as in the multi-unit setting, we choose between Myersonian posted prices (MP) and uniform posted price (UP) in order to obtain the better expected revenue-yielding mechanism when we select our prices---we do this on a per set basis here. In other words, for each set $S_i$, we choose between MP and UP based on revenue in order to select the prices for that set. It is immediately apparent that the SPP mechanism approximation factor for partition matroids strictly exceeds $\min_i (1 - \frac{\hu_i^{\hu_i}}{\hu_i!e^{\hu_i}})$ and matches the numbers in Table~\ref{table:multi} for the smallest $\hu_i$.}

\subsection{ {Matroid Settings}}\label{sec:matroid}
{From the techniques of~\cite{yan2011mechanism}, it follows that our Myersonian SPP mechanism already obtains a $1-1/e$ approximation to $\OptM$. \citeauthor{yan2011mechanism} establishes that the expected revenue of the optimal mechanism (Myerson's mechanism) is upper bounded by $\E_W[f(W)]$, where $W$ is the set of winners in the optimal mechanism and $f(\cdot)$ is the weighted matroid rank function, which happens to be a monotone submodular function. ~\citeauthor{yan2011mechanism} also shows that the revenue of any SPP mechanism (including our Myersonian SPP mechanism) is $\E_S[f(S)]$, where $S$ is the set of buyers that exceed their posted prices and $f(\cdot)$ is the same weighted matroid rank function. The commonality between $S$ and $W$ is that if a buyer $i$ is an element of $W$, with probability $q_i$, the buyer $i$ is an element of $S$ with the same probability $q_i$. This follows from how the thresholds in the Myersonian  prices are constructed from Myerson's mechanism itself, using the taxation principle. The difference between $S$ and $W$ is that the elements of $S$ are independently selected (recall that we re-sample other buyers' values when selecting the threshold for each buyer), whereas the elements of $W$ are correlated. A beautiful result regarding submodular functions~\citep{Vondrak07, ADSY10} states that the correlation gap of submodular functions is, at most, $\frac{e}{e-1}$. In other words, $ \frac{\E_{S\sim D}[f(S)]}{\E_{S\sim I(D)}[f(S)]} \leq \frac{e}{e-1}$, where $D$ is an arbitrary joint distribution over the ground set of elements over which the submodular function $f(\cdot)$ is defined, and $I(D)$ is an independent distribution over the ground set with the same marginals as $D$. This indicates that our Myersonian SPP mechanism obtains at least a $1-\frac{1}{e}$ fraction of $\OptM$, thereby proving Proposition~\ref{prop:matroid}.
\begin{proposition}[Revenue Bound of SPP Mechanisms in General Matroid Settings]
	\label{prop:matroid}
In an $n$-buyer setting with independent private values and any matroidal feasibility constraints, the Myersonian SPP mechanism obtains a revenue of at least $(1-\frac{1}{e})\cdot\OptM$.
\end{proposition}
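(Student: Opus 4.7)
The plan is to follow the correlation-gap argument of~\cite{yan2011mechanism}, but applied to our particular Myersonian posted price mechanism whose prices are constructed from the resampled thresholds $t_i'$ coming from Myerson's optimal mechanism. First, I would recall the weighted matroid rank function: given weights $w_1,\dots,w_n\ge 0$, define $f(T)=\max_{I\in\feas,\,I\subseteq T}\sum_{i\in I} w_i$. This is a monotone submodular function of $T\subseteq[n]$. The weights that will be useful to us are ironed virtual values evaluated along the allocation rule of Myerson, or equivalently the threshold payments averaged appropriately; either formulation leads to the same bound.

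Next, I would set up the two sides of the comparison. On the optimal side, let $W$ be the (random) set of winners in Myerson's matroid mechanism. Then by Myerson's lemma one has $\OptM\le \E_W[f(W)]$, because the matroid rank of any subset of $[n]$ upper bounds the payment that can be extracted from the winners under the matroid constraint. On the SPM side, let $S=\{i:v_i\ge t_i'\}$ be the (random) set of buyers whose values exceed their resampled Myersonian thresholds $t_i'$. Because the $t_i'$ are independent across $i$ (other buyers' values are freshly resampled for each $i$), the set $S$ is a product distribution over $[n]$. The key observation, which follows directly from the taxation principle and the fact that $t_i'$ is distributed exactly like the true threshold $t_i(\vminus)$, is that $\P[i\in S]=\P[i\in W]=:q_i$; so $S$ has the same marginals as $W$ but with independent coordinates. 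The revenue of the SPM, when it processes buyers greedily by the matroid (i.e., tries to add each buyer in turn if feasibility holds), is at least $\E_S[f(S)]$: greedy on a matroid achieves the optimum, so the matroid-optimal independent subset of those buyers who accept realizes expected payment $\E_S[f(S)]$, and Myersonian posted prices collect this in expectation.

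The final step invokes the correlation gap for monotone submodular set functions~\cite{Vondrak07, ADSY10}: for any joint distribution $D$ over $2^{[n]}$ with marginals $q_i$, letting $I(D)$ denote the product distribution with the same marginals, one has $\E_{T\sim I(D)}[f(T)]\ge (1-1/e)\,\E_{T\sim D}[f(T)]$. Applying this with $D$ being the distribution of $W$ yields
\[
\PPM\;\ge\;\E_S[f(S)]\;\ge\;(1-\tfrac{1}{e})\,\E_W[f(W)]\;\ge\;(1-\tfrac{1}{e})\,\OptM,
\]
which is the desired bound.

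The main obstacle, and the only subtle step, is verifying that the SPM really collects at least $\E_S[f(S)]$ in expected revenue with $f$ chosen consistently so that the same $f$ upper bounds $\OptM$. The care needed is in defining the weights so that (i) $f$ is a weighted matroid rank function (hence monotone submodular), (ii) $\OptM=\E_W[\sum_{i\in W} w_i]$ or at most $\E_W[f(W)]$, and (iii) the greedy over accepting buyers in decreasing price order used by our SPM is exactly the matroid greedy that realizes $f(S)$. Once these three are reconciled (by using Myerson's threshold payments as the weights and exploiting that prices $t_i'$ are ordered compatibly with matroid greedy), the correlation-gap inequality closes the argument without further work.
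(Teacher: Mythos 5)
Your proposal is correct and follows essentially the same route as the paper: upper bound $\OptM$ by $\E_W[f(W)]$ for the weighted matroid rank function $f$, lower bound the Myersonian SPM's revenue by $\E_S[f(S)]$ where $S$ is the independently-sampled acceptance set with the same marginals as $W$ (via the taxation principle and resampling), and close with the $\frac{e}{e-1}$ correlation gap for monotone submodular functions from~\cite{Vondrak07, ADSY10}. The ``subtle step'' you flag (choosing the weights so that the same $f$ serves both bounds) is exactly the part the paper also delegates to~\citet{yan2011mechanism}, so your treatment matches the paper's level of detail.
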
}

\section{Eager Second-price Auctions}\label{sec:eager}
In this section, we show that our pricing rules also lead to improved approximation bounds for the $1$-unit ESP auctions. {\color{black}The central proof technique and principles are similar to what was used  earlier for our SPP mechanisms.} We note that here, we focus on  \emph{eager} second-price
auctions as opposed to {\color{black}\emph{lazy} second-price auctions}.  The lazy second-price auctions {neither dominate nor are dominated by} ESP auctions for general correlated distributions but are within a factor of $2$ of each other~\citep{paes2016field}. Further, \citet{paes2016field} show that the optimal revenue from the ESP auction dominates the optimal revenue from the {\color{black}{lazy} second-price auctions} when the 
value distributions are independent. Motivated by this, we  study ESP auctions in the current work. We note that it is known from an example in Section 4 of \cite{ronen2001approximating} that it is impossible to obtain an approximation that is better than $1/2$ for optimal revenue via lazy second-price auctions. We now proceed to formally define ESP auctions. 

\textbf{{Eager Second-Price Auctions $\EG(\bp)$}}
\begin{itemize}
\item [-] {Each buyer $i\in[n]$ submits his bid, which is equal to his value $v_i$.} 
\item [-] {All the buyers with value $v_i < p_i$ are  eliminated first. Let
$S =\{i: v_i ~{\ge} ~ p_i\}$ be the set of all the buyers who clear their reserve
prices.}
 \item[-] The item is then allocated to buyer $i^{\star} =
\arg\max_{i\in S} v_i$, who has the highest value among all buyers in set
$S$, and he pays $\max (p_{i^{\star}},\, \max_{i\in S, i\ne
i^{\star}}v_i)$. For other buyers, their payment is zero.  
\end{itemize}
\medskip

Note that ESP auctions are truthful in the dominant strategy sense. Thus, for each buyer $i$, his bid is equal to his value, regardless of the submitted bids of other buyers. 

Lemma  \ref{lemma:rev_pp_eager} shows that our bounds for the SPP mechanisms in the single-unit setting---presented in Theorems \ref{thm:pp1} and \ref{thm:pp_n}---are also valid bounds for ESP auctions. This lemma is an important observation regarding the revenue of  $\EG(\bp)$  and  $\PP(\bp)$ made by \cite{chawla2010multi}.

\begin{lemma}[ESP vs SPP -- Theorem 32 in arxiv Version v2 of~\cite{chawla2010multi}]\label{lemma:rev_pp_eager}
{In an $n$-buyer setting, for any vector of prices $\bp = (p_1, \hdots, p_n)$ and any value distributions, the revenue of $\EG(\bp)$ is at least the revenue of $\PP(\bp)$ in single-unit settings.}
\end{lemma}

 However, ESP auctions can earn higher revenue than SPP mechanisms by leveraging the second-highest bid. In this section, we show how to exploit the second-highest bid to obtain an improved 
bound for ESP auctions.

\subsection{A Universal Bound for Eager Second-price Auctions}
{\color{black}Theorem \ref{thm:pp2} presents a universal bound for eager second-price auctions.} Similar to Theorem \ref{thm:pp_n}, this theorem presents two approximation factors: $\frac{1}{\text{\sf{FR-ESP}}}$ and $\frac{1}{\text{\sf{FR-ESP-d}}(k)}$. See the statement of the theorem for the definition of $\text{\sf{FR-ESP}}$ and $\text{\sf{FR-ESP-d}}(k)$. We obtain the first approximation factor by establishing a factor-revealing mathematical program using the decision variable $s(\cdot)$. The second approximation factor is derived by discretizing the aforementioned mathematical program. As in the earlier theorems, the discretization here is solely for the purposes of evaluating the approximation factor.\medskip

\begin{theorem}[Revenue Bound of ESP Auctions in Single-unit Settings]\label{thm:pp2}
In a single-unit $n$-buyer setting with independent private values, there {exists a vector of prices} $\mathtt{\bp} =(\mathtt{p}_1, \mathtt{p}_2, \ldots, \mathtt{p}_n)$, such that \begin{itemize}\item \textbf{{Non-discretized Bound.}} $\EG(\mathtt{\bp}) ~\geq~
	 \Opt \cdot \frac{1}{{\text{\sf 
FR-ESP}}}$, and 
\item \textbf{{Discretized Bound.}} $\EG(\mathtt{\bp}) ~\geq~\Opt \cdot \frac{1}{{\text{\sf \small{FR-ESP-d}}}(k)}$ for any positive integer $k$,
\end{itemize}
where $\Opt$ is the expected optimal revenue, $\EG(\mathtt{\bp})$ is the expected revenue of an ESP auction with personalized reserve prices $\mathtt{\bp}$, and $\text{\sf 
FR-ESP}$ and ${{\text{\sf \small{FR-ESP-d}}}(k)}$ are defined as
 \begin{multicols}{2} \setlength{\columnseprule}{1pt}
\noindent
\begin{align*} 
&\text{\sf{FR-ESP}} ~=~\\ &{\max_{\{{s(\tau), \tau\ge 0\}}}} ~\int_0^\infty 
s(\tau) d\tau\quad \text{s.t.} \nonumber\\[8pt]
&  \begin{aligned}
 &\int_{\T_x}^\infty  (2-2e^{-s(\tau)} - s(\tau)e^{-s(\tau)}) d\tau \\[8pt]
 &~~+\int_0^{\T_x} ( x +  (1-e^{-s(\tau)})) d\tau \le2~~~~  \forall x\in [0,1] \nonumber \\
& \int_0^\infty  f(s(\tau)) d\tau ~\le~ 1\\[8pt]
&\text{$s(\cdot)$ is weakly decreasing}\,,~~&\nonumber
\end{aligned}
 \label{lp:esp} \tag{\small{\sf{FR-ESP}}}
\end{align*}
\begin{align*}
&{\text{\sf \small{FR-ESP-d}}}(k)= \\
& \max_w \sum_{i\in [k]} w_i \quad \text{s.t.} \nonumber \\
&\begin{aligned}   &\sum_{i=1}^j  w_i 
\frac{2(1-e^{-\mathtt{s}_i}) -
 \mathtt{s}_ie^{-\mathtt{s}_i}}{\mathtt{s}_i} \\ 
\quad &~~~+ \sum_{i=j+1}^{k} w_i \frac{ \mathtt{s}_j+(1-e^{-\mathtt{s}_i}) }{\mathtt{\mathtt{s}}_i}
    ~\leq~ 2,  &\forall j \in [k] \\
   & \sum_{i\in [k]} w_i \frac{1-e^{-\mathtt{s}_i}}{\mathtt{s}_i}~ \leq ~ 1 & \\
   & w_i \geq 0\,. &\forall i \in [k]
\end{aligned}
\label{lp:esp:d} \tag{\small{\sf FR-ESP-d}}
\end{align*}
\end{multicols}
Here, $f(x) = ({1 - e^{-x}})$, for any $x\in [0,1]$,  $\T_x~=~  
\inf\{\tau: s^\star(\tau) \le  x\}$, and  $\mathtt{s}_i = i/k$, for $i\in [k]$. 
Further, setting $k=3200$, the approximation factor is 
$\frac{1}{\text{\small{\sf FR-ESP-d}}(3200)} = 0.6620$. 
\end{theorem}

The proof of Theorem \ref{thm:pp2} is provided in Section \ref{sec:proof:claim}.
{The proof of the first approximation factor---that is, the non-discretized bound---is similar to that of Theorem \ref{thm:pp1}. We consider an ESP auction that chooses the best of 	{Myersonian and uniform reserve prices.} (The definition of Myersonian and uniform ESP auctions is presented in Section \ref{sec:proof:claim}.) By constructing a factor-revealing mathematical program, we show that the ratio of the  maximum revenue from the Myersonian ESP auction and uniform ESP auction to the optimal revenue $\Opt$ is at least $\frac{1}{\text{\sf FR-ESP}}$.}  Observe that Problem \eqref{lp:esp} is similar to Problem \eqref{lp:spm}. The main difference between these two problems is their first set of constraints: the first set of constraints in Problem \eqref{lp:esp} is obtained by lower-bounding the sum of the revenue of the ESP with the Myersonian pricing rule and the revenue of the ESP with a uniform price of $\T_x$, where $\T_x~=~  
\inf\{\tau: s^\star(\tau) \le  x\}$ and $x\in [0,1]$. (Recall that the first set of constraints in Problem \eqref{lp:spm} is obtained by lower-bounding the revenue of the SPP mechanism with the uniform pricing rule.)

The solution to Problem \eqref{lp:esp} cannot be easily obtained, as it also depends on $\T_x~=~  
\inf\{\tau: s^\star(\tau) \le  x\}$. Thus, to solve this problem, we again use our discretization technique to convert it to an easy-to-solve LP. This leads to the our discretized bound $\frac{1}{{\text{\sf \small{FR-ESP-d}}}(k)}$. Table \ref{table:reg} presents the values of 
$\frac{1}{\text{\sf \small{FR-ESP-d}}(k)}$ for  different values of $k$. Since 
$\frac{1}{\text{\sf \small{FR-ESP-d}}(k)}$ is a valid approximation factor for every $k$, 
it follows that $\frac{1}{\text{\sf \small{FR-ESP-d}}(3200)} = 0.6620$ is a valid 
approximation factor. As earlier, parameter $k$ determines the precision of our discretization. 

\begin{table}[h]
\setlength{\extrarowheight}{4pt}
\centering
\fontsize{9}{9}\selectfont{{
\begin{tabular}{|c |c |c |c|c|c|c|c|} 
 \hline
 $k$ & 50 & 100 & 200 & 400 & 800 & 1600 &3200 \\ 
 \hline
 $\frac{1}{\text{\sf FR-ESP-d}(k)}$ & 0.6606 & 0.6613 & 0.6617 &0.6618 &  0.6619 &0.6620 & 0.6620\\ \hline
\end{tabular}
}}
\vspace{1em} 
\caption{$\frac{1}{\text{\sf FR-ESP-d}(k)}$ for different values of $k$. }
\label{table:reg}
\end{table}

\subsubsection{Improved n-Dependent Bounds for Eager Second-Price Auctions}\label{sec:esp:n}
ESP auctions are widely used in the online advertising market. In this market, as stated earlier, because of targeting and the heterogeneous preferences of buyers (advertisers), the number of buyers is rather small. Motivated by this, in the following theorem, we present improved $n$-dependent bounds for ESP auctions.  
The gap between $n$-dependent bounds and our universal bounds is bigger when $n$ is smaller; see Table \ref{tab:smalln}.
  
\begin{table}[h]
\setlength{\extrarowheight}{4pt}
\centering
\fontsize{9}{9}\selectfont{{
\begin{tabular}{|c|c|c||c|c||c|c||c|c||c|c|}
\hline
n & k & $\frac{1}{{\text{\sf FR-ESP-d}}(n,k)}$ &n & $\frac{1}{{\text{\sf FR-ESP-d}}(n,k)}$&n  & $\frac{1}{{\text{\sf FR-ESP-d}}(n,k)}$ &n & $\frac{1}{{\text{\sf FR-ESP-d}}(n,k)}$ &n  & $\frac{1}{{\text{\sf FR-ESP-d}}(n,k)}$   \\ \hline
\multirow{ 3}{*}{1} & 200 & {1.0000} & \multirow{ 3}{*}{2} & 0.7610 & \multirow{ 3}{*}{3} & 0.7207 &\multirow{ 3}{*}{4} &0.7038 &\multirow{ 3}{*}{5} & 0.6944   \\ 
& 400 & 1.0000 && 0.7611 &&0.7209  &&0.7039 && 0.6945\\
& 800 & {1.0000}  && {0.7611} &&  {0.7209 } && {0.7040} 
&&{0.6946}\\
& 1600 & \textbf{1.0000}  && \textbf{0.7611} &&  \textbf{0.7210 } && \textbf{0.7040} 
&&\textbf{0.6946}
\\ \hline   \hline      
\multirow{ 4}{*}{6} & 200 & 0.6884 &\multirow{ 4}{*}{7} & 0.6843  & \multirow{ 4}{*}{8}  &0.6813 &\multirow{ 4}{*}{9} &  0.6790  &\multirow{ 4}{*}{10} & 0.6771   \\
& 400 & 0.6886 &&0.6844&&0.6814 && 0.6791 && 0.6773 \\
& 800 & {0.6886} && {0.6845} && {0.6815} &&{ 0.6792} && {0.6774}  \\ 
& 1600 & \textbf{0.6887} && \textbf{0.6846} && \textbf{0.6815} &&\textbf{0.6792} && \textbf{0.6774}\\\hline
\end{tabular}
}}
\vspace{1em} 
\caption{$\frac{1}{{{\text{\sf FR-ESP-d}}(n,k)}}$ for $n \in [10]$ and  $k = 200, 400, 800$, and  1600.  \label{table:esp_finite_n}}
\end{table}

\begin{theorem}[$n$-Dependent Revenue Bound of ESP Auctions in Single-unit Settings] \label{thm:esp_n} 
In a single-unit $n$-buyer setting with independent private values, there {exists a vector of prices} $\mathtt{\bp} =(\mathtt{p}_1, \mathtt{p}_2, \ldots, \mathtt{p}_n)$, such that
 \begin{itemize}\item \textbf{{Non-discretized Bound.}} $
	\EG_n(\mathtt{\bp}) ~\geq~
	\Opt_n\cdot \frac{1}{{\text{\sf \small{FR-ESP}}}(n)}$, and 
\item \textbf{{Discretized Bound.}} $\EG_n(\mathtt{\bp}) ~\geq~\Opt_n\cdot \frac{1}{{\text{\sf \small{FR-ESP-d}}}(n,k)}$ for any positive integer $k$,
\end{itemize}
	 where $\Opt_n$ is the expected optimal revenue, $\EG_n(\mathtt{\bp})$ is the expected revenue of an ESP auction with personalized reserve prices  $\mathtt{\bp}$, and $\text{\sf{FR-ESP}}(n)$ and  ${\text{\sf \small{FR-ESP-d}}}(n,k)$ are defined as
	  \begin{multicols}{2} \setlength{\columnseprule}{1pt}
\noindent
\begin{align*} 
&\text{\sf{FR-ESP}}(n) ~=~\\[4pt] &{\max_{\{{s(\tau), \tau\ge 0\}}}} ~\int_0^\infty 
s(\tau) d\tau\quad \text{s.t.} \nonumber\\[4pt]
&  \begin{aligned}
 &\int_{\T_x}^\infty  (2-\p_n(s(\tau))) d\tau \\[4pt]
 &~~+\int_0^{\T_x} \big( x +  (1-\Q_n(s(\tau)))\big) d\tau \le2~~~~  \forall x\in [0,1] \nonumber \\[6pt]
& \int_0^\infty  \big(1-\Q_n(s(\tau))\big) d\tau ~\le~ 1\\[4pt]
&\text{$s(\cdot)$ is weakly decreasing}\,,~~&\nonumber
\end{aligned}
 \label{lp:esp_n_non_dis} \tag{\small{\sf{FR-ESP-n}}}
\end{align*}
\begin{align*}
&{\text{\sf \small{FR-ESP-d}}}(n,k)~=~ \\
& \max_{w} \sum_{i\in [k]} w_i \quad \text{s.t.} \nonumber \\
&\begin{aligned}   &\sum_{i=1}^j  w_i \frac{2-\p_n(\mathtt{s}_i)}{\mathtt{s}_i} \\ 
\quad &~~~+  \sum_{i=j+1}^{k} w_i \frac{ \mathtt{s}_j + (1-\Q_n(\mathtt{s}_i))}{\mathtt{s}_i}
    ~\leq~ 2,  &\forall j \in [k] \\
   & \sum_{i=1}^k w_i \frac{1-\Q_n(\mathtt{s}_i)}{\mathtt{s}_i} ~\leq ~1 & \\
     & w_i \geq 0\,. &\forall i \in [k]
\end{aligned}
\label{lp:esp_n} \tag{\small{\sf FR-ESP-n-d}}
\end{align*}
\end{multicols}
	 
Here, for any $x\in [0,1]$,  $\T_x~=~  
\inf\{\tau: s^\star(\tau) \le  x\}$, $\mathtt{s}_i = i/k$, $i\in [k]$,  $\Q_n(y) = \left( 1 - \frac{y}{n} \right)^n $, and $\p_n(y) =  2 \left( 1 - \frac{y}{n} \right)^n + y
\left( 1 - \frac{y}{n} \right)^{n-1}$. 
\end{theorem}

\medskip

{\color{black}For $n 
\in [10]$ and  $k \in \{200, 400, 800, 1600\}$, the approximation factor of  Theorem \ref{thm:esp_n}, i.e., $\frac{1}{{\text{\sf FR-ESP-d}}(n,k)}$, is presented in Table  
\ref{table:esp_finite_n}.} The proof of Theorem \ref{thm:esp_n} is similar to the proof of Theorem \ref{thm:pp2}; thus, it is omitted. The only difference between the proofs is that here we provide tighter lower bounds for the revenue of the Myersonian and uniform ESP auctions using Lemma \ref{lem:prob}. This lemma provides \textit{n}-independent and \textit{n}-dependent bounds. The \textit{n}-independent bounds---that is, $2e^{-\mathtt{s}_i} +
 \mathtt{s}_ie^{-\mathtt{s}_i}$ and  $e^{-\mathtt{s}_i}$---were used in Theorem \ref{thm:pp2}, while the \textit{n}-dependent bounds---that is, $\p_n(\mathtt{s}_i)$ and $\Q_n(\mathtt{s}_i)$---are used in Theorem \ref{thm:esp_n} to obtain an improved approximation factor (see Equations~\eqref{eq:z_0} and~\eqref{eq:z_0_z_1} in Lemma~\ref{lem:prob} to see how these quantities relate).  Observe that if in Problem \eqref{lp:esp_n}, we  replace $\p_n(\mathtt{s}_i)$ and $\Q_n(\mathtt{s}_i)$, respectively, with
$2e^{-\mathtt{s}_i} +
 \mathtt{s}_ie^{-\mathtt{s}_i}$ and  $e^{-\mathtt{s}_i}$, we recover 
 Problem \eqref{lp:esp:d}. 

\newpage

\medskip
\ECHead{Appendix}
\section{Other Proofs}
\subsection{Proof of Lemma  \ref{lem:prob}}\label{sec:proof:lem:prob}
\begin{oneshot}{\textbf{Lemma~\ref{lem:prob}}}
Let $Z_{\tau}$ be the number of buyers with $v_i ~\geq ~ t_i' ~\ge ~ \tau$; 
that is,  $Z_{\tau}= \sum_{i=1}^n\ind (v_i 
~\geq~  t_i' ~\geq ~ {\tau})$.
Then,
 \begin{align} \P[Z_{\tau} = 0] ~&\le~ \Q_n(s^{\star}(\tau))~\le~ \lim_{n\rightarrow 
 \infty}\Q_n(s^{\star}(\tau)) ~=~ e^{-s^{\star}(\tau)}
\tag{\ref{eq:z_0}} 
\\
 2 \P[Z_{\tau} = 0] + \P[Z_{\tau} = 1] ~&\le~\p_n(s^{\star}(\tau))~\le~ 
 \lim_{n\rightarrow 
  \infty}\p_n(s^{\star}(\tau)) ~=~ (2 + s^{\star}(\tau)) e^{-s^{\star}(\tau)}\,,
\tag{\ref{eq:z_0_z_1}} 
 \end{align}
 where $\Q_n(y) = \left( 1 - \frac{y}{n} \right)^n $ and $\p_n(y) = 2 \left( 1 
 - \frac{y}{n} \right)^n + y
\left( 1 - \frac{y}{n} \right)^{n-1} $.
\end{oneshot}
\medskip
\begin{proof}{{Proof of Lemma~\ref{lem:prob}}} 
Define $z_i~=~\ind(v_i ~\geq ~ t_i' ~ \geq ~ \tau)$. Then,
 $Z_{\tau}$ can be written in the following manner:
$Z_{\tau} = \sum_{i\in [n]}z_i\,,$
where $z_i$'s  are independent $0/1$ Bernoulli random variables with $\E[z_i] = 
s_i(\tau)$. This implies that $\E[Z_{\tau}] = \sum_{i\in [n]} \E[z_i] = \sum_{i 
\in [n]} s_i(\tau) =  s(\tau)$. Then,
\[\P[Z_{\tau} =0] ~=~ \prod_{i\in [n]} \P[z_i = 0]~=~\prod_{i\in [n]} (1-s_{i}(\tau)) ~\le~ \left( 1 - \frac{\sum_{i\in [n]}s_{i}(\tau)}{n} \right)^n ~\le~
 e^{-\sum_{i\in [n]} s_i(\tau)} ~=~ e^{-s(\tau)}\,,\]
 where the first inequality follows from the fact that  for any sequence $a_1, a_2, \ldots, a_n$, we have $\prod_{i\in [n]} a_i \le \big(\frac{\sum_{i\in [n]} a_i}{n}\big)^n$. By definition of $\Q_n(\cdot)$, the above equation leads to {Inequality (\ref{eq:z_0}),} which is the first desired result. 
 
Next, we show Inequality  (\ref{eq:z_0_z_1}). That is, we show   $2 \P[Z_{\tau} = 0] + \P[Z_{\tau} = 1] ~\le~\p_n(s(\tau))~\le~ (2 + s(\tau)) 
 e^{-s(\tau)}$. We begin by observing that the l.h.s. of this equation 
can be written as a symmetric polynomial in $s_1(\tau), \hdots, s_n(\tau)$, namely,
\[2 \P[Z_{\tau} = 0] + \P[Z_{\tau} = 1] ~=~  2 \prod_{i\in [n]} (1-s_i(\tau)) + \sum_{i\in [n]} s_i(\tau)
\prod_{j \neq i} (1-s_j(\tau))\,.\]
In the remainder of the proof, to ease the notation, we denote $s_i(\tau)$, $i\in [n]$, by $s_i$.
Define polynomial $ P_n(s_1, \ldots,s_n) ~:=~ 2 \prod_{i\in [n]} (1-s_i) + \sum_{i\in [n]} s_i
\prod_{j \neq i} (1-s_j)$. To provide an upper bound on $2 \P[Z_{\tau} = 0] + 
\P[Z_{\tau} = 1]$, we show that subject to the constraint $\sum_{i\in [n]} s_i 
= s(\tau)$,
the value of the polynomial $ P_n$ is maximized when $s_1=s_2=\dots=s_n = s(\tau)/n$. 

In order to prove this, consider a point $\bs=(s_1,\dots, s_n)$, such that $\sum_{i\in [n]} s_i = s(\tau)$. Pick any pair of coordinates (without loss of generality, $1$ and $2$) and consider increasing one and decreasing the other. Now, note that $ P(s_1+\delta, s_2-\delta, s_3, \dots,s_n)$ is a quadratic function of $\delta$. It is not difficult to verify that the quadratic coefficient is negative. Then, considering the fact that $ P$ is symmetric, it follows that the maximum in this direction is achieved at $\delta$, such that $s_1+\delta = s_2-\delta$---that is, when the two coordinates are equal, for \emph{every} profile of values for the remaining coordinates. Since this argument holds for any pair of coordinates, it follows that $ P(\bs)$ is maximized when all coordinates are equal---that is,  $s_i = s(\tau)/n$ for
$i\in [n]$. 

Thus far, we have established that 
\[2 \P[Z_{\tau} = 0] + \P[Z_{\tau} = 1] ~\le~  P_n\Big(\frac{s(\tau)}{n}, \ldots,\frac{s(\tau)}{n}\Big) ~=~  \p_n(s(\tau))\,,  \]
where the equality follows from the definitions of $P_n$ and $\p_n$. The above equation yields the first desired inequality in (\ref{eq:z_0_z_1}). For the second inequality, we observe that
$$ P_n\Big( \frac{s(\tau)}{n}, \hdots, \frac{s(\tau)}{n} \Big) ~=~ 
 P_{n+1}\Big( \frac{s(\tau)}{n}, \hdots, \frac{s(\tau)}{n}, 0 \Big) ~\leq~
 P_{n+1}\Big( \frac{s(\tau)}{n+1}, \hdots, \frac{s(\tau)}{n+1}, \frac{s(\tau)}{n+1} \Big)\,.  $$
In particular,
$$ P_n\Big( \frac{s(\tau)}{n}, \hdots, \frac{s(\tau)}{n} \Big)  ~\leq~ \lim_{k
\rightarrow \infty}   P_k\Big( \frac{s(\tau)}{k}, \hdots, \frac{s(\tau)}{k}
\Big) ~=~ (2 + s(\tau)) e^{-s(\tau)}\,.$$
$\blacksquare$
\end{proof}

\subsection{Proof of Discretized Bound of Theorem \ref{thm:pp_n}}\label{sec:proof:thm:pp_n-dis}
\begin{oneshot}{\textbf{Theorem~\ref{thm:pp_n} ($n$-Dependent Revenue Bound of SPP Mechanisms in Single-unit Settings).}}
In a $1$-unit $n$-buyer setting with independent  private values, there {exists a vector of prices} $\mathtt{\bp} =(\mathtt{p}_1, \mathtt{p}_2, \ldots, \mathtt{p}_n)$ such that 
\begin{itemize}
\item 
	 \textbf{Non-discretized Bound.}   $\PP_n(\bp) \geq \Opt_n\cdot \frac{1}{{{\text{\sf FR}}(n)}}$, and 
	 \item \textbf{Discretized  Bound.} 
	 $\PP_n(\bp) \geq \Opt_n\cdot \frac{1}{{\text{\sf FR-d}}(n,k)}$ for any positive integer $k$,
\end{itemize}
 where $\Opt_n$ is the expected optimal revenue in a $1$-unit setting,  $\PP_n(\bp)$ is the expected revenue of the SPP mechanism with prices $\mathtt{\bp}$, and $\text{\small{\sf{FR}}}(n)$ and ${{\text{\sf FR-d}}(n,k)}$ are defined as
 
\begin{multicols}{2}\setlength{\columnseprule}{1pt}
\noindent
  \begin{align*}
  &\text{\small{\sf{FR}}}(n) ~=~
  \\   
&  \begin{aligned}
\quad \max_{\{{s(\tau), \tau\ge 0\}}}& ~~\int_0^\infty 
s(\tau) d\tau\\
\text{s.t.} \quad &0~ \le s(\tau) \le \min(1, 1/\tau)~ & &\forall\tau \ge 0   \\[8pt]
& \int_0^\infty  \big(1-\Q_n(s(\tau))\big) d\tau ~\le~ 1 \\[8pt]
&\text{$s(\cdot)$ is weakly decreasing}\,.&
\end{aligned}
 \label{lp:spm_n_1} \tag{{{\small\sf{FR-n}}}}
\end{align*} 
\begin{align*}
&{\text{\small \sf FR-d}}(n,k)~=~ \\
  & \quad\begin{aligned}
 \max_{w} &
\sum_{i\in [k]} w_i & &\\
  \text{s.t.}\quad&{\sum_{i =j+1}^{k}} w_i \frac{ \mathtt{s}_j }{\mathtt{s}_i} ~\leq~ 1 & \forall j \in [k-1]& \\
  &\sum_{i=1}^k w_i \frac{1-\Q_n(\mathtt{s}_i)}{\mathtt{s}_i} ~\leq ~1&& \\
    & w_i ~\geq~ 0. &\forall i \in [k]&&
\label{lp:spm_n_dist} 
\end{aligned}
\tag{{\sf {\small{FR-n-d}}}}
  \end{align*}
\end{multicols}
Here, $\Q_n(y) = \left( 1 - \frac{y}{n} \right)^n $ and  $\mathtt{s}_i = i/k$, $i\in [k]$. Further, for $n\in [10]$ and $k\in\{200, 400, 800, 1600\}$,  our approximation factors of $\frac{1}{{{\text{\sf FR-d}}(n,k)}}$ are presented in Table \ref{table:pp_finite_n}.
 \end{oneshot}

\medskip
\begin{proof}{{Proof of Discretized Bound of Theorem~\ref{thm:pp_n}}}
{To show the result, we verify that $w_{i}^\star$s, defined in Equation (\ref{eq:w}), satisfy the constraints of Problem \eqref{lp:spm_n_dist}. Similar to the proof of Theorem \ref{thm:pp1}, we normalize the revenue of the  SPP mechanism that selects the best of the Myersonian and uniform pricing rules to one; that is, $\max (\MP, \UP) =1$.}

\textbf{First Set of Constraints.} {Here, we show that $w_{i}^\star$'s satisfy the first set of constraints.}  
Define $\T_x~=~  \inf\{\tau: s^\star(\tau) \le  x\}$, $x\in [0,1]$. With a slight abuse of notation, let $\UP_x$ be the revenue of the SPP mechanism that posts a uniform price of $\T_x$ for all 
buyers. By definition of the uniform SPP mechanism, we have $\UP~\ge~ \max_{x\in [0,1]} ~ \UP_x$.   
 Define $u_x(\tau)$ as the probability that the SPP mechanism with uniform price $\T_x$ sells with a price of at least
$\tau$. Then, $\UP_x = \int_{\tau =0}^{\infty} u_x(\tau) d\tau$.  
Next, we bound $\UP_x$ by bounding $u_x(\tau)$. For $\tau \leq \T_x$, we bound $u_x(\tau)$ by 
\begin{align}u_x(\tau) ~ \geq~ s^\star(\T_x)~ \geq ~ x\,,\quad \tau \le \T_x\,. \label{eq:u_x_1_1}\end{align}
 This bound holds because (i) while the SPP mechanism with uniform price $\T_x$ can sell the item with a price of at least
$\tau$ if there exists at least one buyer $i$ with value $v_i \geq \T_x$, the 
optimal mechanism can sell at a price of at least $\T_x$ only if there is at least one buyer $i$ with $v_i~ \ge ~ t_i ~\ge ~ \T_x$. Consequently, $u_x(\tau) ~ \geq~ s^\star(\T_x)$, and (ii) by definition of $\T_x$,  we 
have $s^\star(\T_x)~ \geq ~ x$; to see this, recall that $\T_x~=~  
\inf\{\tau: s^\star(\tau) \le  x\}$. Thus, when $\T_x\in \{\tau: s^\star(\tau) \le  x\}$, by 
monotonicity  of $s^\star(\tau)$, it must be the case that $s^\star(\T_x) =  x$. Further, 
if $\T_x\notin \{\tau: s^\star(\tau) \le  x\}$, we have $s^\star(\T_x) >  x$. Thus, 
$s^\star(\T_x)~ \geq ~ x$. Then, by {Inequality (\ref{eq:u_x_1_1}),} and our assumption that $\max (\MP, \UP)= 1$, we have 
\[1 ~\ge~ \UP_x ~ \ge~  \int_0^{\T_x} x d\tau ~=~  \int_0^{\T_x} \frac{x}{s^\star(\tau)} s^\star(\tau) d\tau\,. \]
In the following, we set $x$ to $\mathtt{s}_j = j/k$.  Then, we obtain
{\[ 1~\ge ~\int_0^{\T_x} \frac{\mathtt{s}_j}{s^\star(\tau)} s^\star(\tau) d\tau  ~= ~ \sum_{i=j+1}^{k} \int_{\uptau_{i}}^{\uptau_{i-1}}  \frac{\mathtt{s}_j}{s^\star(\tau)} s^\star(\tau) d\tau~\ge  \sum_{i =j+1}^{k} w_{i}^\star \frac{ \mathtt{s}_j }{\mathtt{s}_i}\,,\]} 
\noindent where the first equality follows from the definitions of $\uptau_i$'s and $\T_x$, and the second inequality follows from the definition of $w_{i}^\star$ and  the fact that 
$s^\star(\cdot)$ is weakly decreasing.  (Recall that $0 = \uptau_k \leq \uptau_{k-1} \leq \ldots \leq \uptau_1 \leq
\uptau_0 = \infty$ such that $\uptau_j =\inf\{\tau: s^\star(\tau) \le  j/k\}$, $j\in [k-1]$, and  $\Opt ~=~ \sum_{i\in [k]} w^\star_{i}$, where $ w_{i}^\star ~=~ \int_{\uptau_{i}}^{\uptau_{i-1}} s^\star(\tau) d\tau$.) Note that the above equation verifies the first set of constraints. 

\textbf{Second Constraint.}  {Here, we show that $w_{i}^\star$'s satisfy the second set of constraints.} Let $\m(\tau)$ be the probability
that the Myersonian SPP mechanism sells with a price of at least $\tau$. Then, by construction of the  prices, $t_i'$'s, in this mechanism, we obtain 
$\m(\tau)~ = ~1-\P[Z_{\tau} = 0]$, 
where $Z_{\tau}$ is the number
of buyers who satisfy  $v_i ~\geq ~ t_i' ~ \geq ~ \tau$. This implies that 
{\begin{align}\nonumber 1~&\ge~\MP~= ~ \int_{0}^{\infty} (1-\P[Z_{\tau} = 0]) 
d\tau 
~\ge ~ \int_{0}^{\infty} (1-\Q_n(s(\tau)))  d\tau \\\nonumber
~&=~  \int_{0}^{\infty} \frac{(1-\Q_n(s(\tau)))}{s(\tau)} s(\tau)  d\tau ~\ge~ \sum_{i=1}^k w_{i}^\star \frac{1-\Q_n(\mathtt{s}_i)}{\mathtt{s}_i}\,, \end{align}}
where the second inequality follows from Lemma \ref{lem:prob}, and third inequality follows from Lemma \ref{lem:monotone_1}, where we show that $\frac{1-\Q_n(y)}{y}$ is decreasing in $y$. The above equation confirms that $w_i^{\star}$s satisfy the second constraint of Problem \eqref{lp:spm_n_dist} and completes the proof.
$\blacksquare$
\end{proof}

\begin{lemma} \label{lem:monotone_1}
  Function $(x,y) \mapsto \frac{1}{y} (x + 1-\Q_n(y))$ 
is decreasing in $y \in [0,1]$ for every 
positive integer $n$ and every $x \geq 0$, 
where $\Q_n(y) = (1-\frac{y}{n})^n$. 
\end{lemma}
\begin{proof}{Proof of Lemma \ref{lem:monotone_1}}
The derivative of this function w.r.t. $y$ is given by 
\begin{align*}
\frac{\partial\left( \frac{1}{y} (x + 1-\Q_n(y))\right)}{\partial y}~=~ \dfrac{-x-1+y(1-\frac{y}{n})^{n-1}+(1-\frac{y}{n})^n}{y^2} = 
\dfrac{(1-\frac{y}{n})^{n-1}\left( -\frac{x+1}{(1-\frac{y}{n})^{n-1}}+y+1-\frac{y}{n} \right)}{y^2}\,.
\end{align*}
To show that  $\frac{\partial\left( \frac{1}{y} (x + 
1-\Q_n(y))\right)}{\partial y}~\le~ 0$, we verify that 
$-\frac{x+1}{(1-\frac{y}{n})^{n-1}}+y+1-\frac{y}{n} ~\le ~0$.
For $y < 1$ and $x \ge 0$, we have
\begin{align*}
\dfrac{x+1}{(1-\frac{y}{n})^{n-1}}&~\ge~ (x+1) \left( 1+\frac{y}{n}\right)^{n-1}~\ge~ (x+1)\left(1+\frac{n-1}{n}y  \right)~\ge~ 1+\frac{n-1}{n}y\,.
\end{align*}
The last inequality implies that $\frac{1}{y} (x + 1-\Q_n(y))$ is decreasing in 
$y$.
$\blacksquare$
\end{proof} 

\subsection{{Proof of Theorem \ref{thm:pp_multi}}} \label{sec:proof_multi}
\begin{oneshot}
{\textbf{Theorem~\ref{thm:pp_multi} {(Revenue Bound of SPP Mechanisms in Multi-unit Settings).}}}
 In an $\hu$-unit $n$-buyer setting with independent  private values, there {exists a vector of prices} $\mathtt{\bp} =(\mathtt{p}_1, \mathtt{p}_2, \ldots, \mathtt{p}_n)$ such that $
	 \PP_{\hu}(\bp) \geq \Opt_{\hu}\cdot \frac{1}{\text{\ref{lp:spmk}}}$, where $\Opt_{\hu}$ is the expected optimal revenue in an $\hu$-unit setting,  $\PP_{\hu}(\bp)$ is the expected revenue of the SPP mechanism with prices $\mathtt{\bp}$, and {\sf{\small {FR-Multi}}({\hu})} is: 
  \begin{align} 
{\small{\text{\sf{\small FR-Multi}}({\hu})}} ~=~&{\max_{\{{s(\tau), \tau\ge 0\}}}} ~~\int_0^\infty 
s(\tau) d\tau \nonumber  \\
&  \begin{aligned}
\text{s.t.} ~~~ &0~ \le~ s(\tau) ~\le~ \min({\hu}, 1/\tau)~~ & &\forall ~~ \tau \ge 0 \nonumber \\
& \int_0^\infty  f_{\hu}(s(\tau)) d\tau ~\le~ 1\\
&\text{$s(\cdot)$ is weakly decreasing}\,.~~&\nonumber
\end{aligned}
\label{lp:spmk} 
 \tag{{{\small{\sf{FR-Multi({\hu})}}}}}
\end{align}
Here,  ${f_{\hu}(x) =  {\hu}- e^{-x} \sum_{i=0}^{{\hu}-1} ({\hu}-i)\frac{x^i}{i!} }$. Our bound is greater than the best-known bound prior to this work---that is, $\frac{1}{{\text{\ref{lp:spmk}}}} > 1- \frac{{\hu}^{\hu}}{{\hu}! e^{\hu}}
$. 
\end{oneshot}
\medskip

\begin{proof}{Proof of Theorem \ref{thm:pp_multi}}
{In the first part of the proof, we show that the SPP mechanism with the best of the Myersonian and uniform prices in  ${\hu}$-unit settings yields the desired bound. In the second part of the proof, we show that our bound outperforms the best-known bound prior to this study. 
 
 \textbf{First Part (Showing the Bound).} We begin by revisiting the definition of Myersonian and uniform prices for the SPP mechanism.}  

 \textbf{{Myersonian SPP Mechanism.}} {Approach the buyers in decreasing  order of their Myersonian prices---that is, the re-sampled thresholds $t_i'$ (defined in Section~\ref{prelim:defn}), and {allocate to} the first ${\hu}$ buyers whose values $v_i$ exceeds their threshold $t_i'$. With a slight abuse of notation, let $\MP$ denote the expected revenue of this mechanism, where the expectation {is taken w.r.t.} the randomness in both the re-sampled posted prices and the buyers' values.}
 
\textbf{{Uniform SPP Mechanism.}} {Approach buyers in an arbitrary order, and 
 {allocate to} the first $\hu$ buyers whose value exceeds the price $p^{\star} = 
 \argmax_p p\cdot \sum_{i=1}^{{\hu}} \P[v_{(i)} \geq p]$, where $v_{(i)}$ is the $i$-th highest value. Equivalently, $p^{\star} = 
 \argmax_p p\cdot \E[\min(|S_p(\mathbf{v})|, \hu)]$, where $S_p(\mathbf v)$ is the set of buyers with $v_i \geq p$. With a slight abuse of notation, let $\UP$ be the 
 expected revenue of this mechanism, where the expectation is taken with respect to the buyers' values.}

{ 
   As usual, without loss of generality, we assume that $\max(\MP, \UP) =1$ and show that $\Opt_{\hu} \le \max(\MP, \UP)  \cdot {\text{\ref{lp:spmk}}} $. We prove this result by showing that the function $s^\star(\cdot)$  corresponding to the optimal mechanism is a feasible solution to Problem \eqref{lp:spmk}. 
   } 

\textbf{Lower Bounds on $\boldsymbol \UP$ (First Set of Constraints).} 
The revenue of the SPP mechanism that posts a price of $\tau$ for every buyer is equal to $\tau \cdot  \sum_{i=1}^{{\hu}} \P[v_{(i)} \geq \tau]$, which is at least  $\tau s^\star(\tau)$. Therefore, $\UP \geq \tau s^\star(\tau)$ for every $\tau \geq 0$---that is, 
\begin{align}\max_{\tau \ge 0} ~\tau s^\star(\tau) ~\le~ \UP~\le~ 1\,, \label{eq:bound_UP_2}  \end{align}
where the second inequality follows from $\max(\MP, \UP) =1$. Equation 
(\ref{eq:bound_UP_2}) results in
\begin{align} \label{eq:st_ineq_2}
&0 ~\le~ s^\star({\tau})~ \le~ \min({\hu}, 1/{\tau})~ \quad\forall\tau\ge 0\,.
\end{align}
In the inequality, we also use the fact that  $s^\star(\tau)$  is at most ${\hu}$. {Note that Equation (\ref{eq:st_ineq_2}) is the first set of 
constraints in Problem \eqref{lp:spmk}.} 

\textbf{Lower Bounds on $\boldsymbol \MP$ (Second Constraint).}  We define $m(\tau)$ as 
the expected number of units that the Myersonian  SPP mechanism
sells with a price of
at least ${\tau}$. This yields
$\textstyle \MP ~=~ \int_0^\infty \m ({\tau}) d{\tau} ~\le ~1$, 
where the inequality follows from $\max(\MP, \UP ) =1$. 
Next, we present a lower bound on $\MP$. Let $Z_{\tau}$ be the number of buyers with  $v_i 
~\geq~  t_i' ~\geq ~ {\tau}$; that is, $Z_{\tau}= \sum_{i=1}^n\ind (v_i 
~\geq~  t_i' ~\geq ~ {\tau})$. 
 Then,
\[m(\tau) ~=~ \sum_{i =1}^{{\hu}-1} i\P[Z_{\tau} = i] +{\hu} \P[Z_{\tau}\ge {\hu}] ~=~\sum_{i =1}^{{\hu}-1} i\P[Z_{\tau} = i] +{\hu} (1-\sum_{i=0}^{{\hu}-1}\P[Z_{\tau}= i])\,.\]
This leads to $m(\tau) ={\hu}-  \sum_{i=0}^{{\hu}-1} ({\hu}-i) \P[Z_{\tau} =i]$. 
Invoking Lemma \ref{lm:equal}, we obtain
\begin{align}m(\tau) &~=~{\hu}-  \sum_{i=0}^{{\hu}-1} ({\hu}-i)\left( \begin{array}{c}
n  \\
i \end{array} \right)\frac{s^\star(\tau)^i}{n^i}(1-\frac{s^\star(\tau)}{n})^{n-i} ~ \ge~ {\hu}- e^{-s^\star(\tau)} \sum_{i=0}^{{\hu}-1} ({\hu}-i)\left( \begin{array}{c}
n  \\
i \end{array} \right)\frac{s^\star(\tau)^i}{n^i} =  f_{\hu}(s^\star (\tau))\,,\nonumber 
\end{align}
where  
$f_{\hu}(x) = {{\hu}- e^{-x} \sum_{i=0}^{{\hu}-1} ({\hu}-i)\frac{x^i}{i!} }$.\medskip

{ \textbf{Second Part (Beating the Best-known Bound).} We first invoke Lemma \ref{lem:LP_multi} to write  \ref{lp:spmk} in the following manner: $1 + \ln({\hu}\tau^{\star})$, where $\tau^*> \frac{1}{{\hu}}$ is the unique solution of the following equation:
 \begin{align}
\int_{1/{\hu}}^{\tau^{\star}}  \left({\hu}- e^{-1/\tau} \sum_{i=0}^{{\hu}-1} \frac{({\hu}-i)}{\tau ^i i!}\right)  d\tau ~=~ \frac{{\hu}^{\hu}}{{\hu}! e^{\hu}}\,. \label{eq:char_tau}\end{align}
To show the result, we verify that 
$\ln({\hu} \tau^*)<\frac{{\hu}^{\hu}}{{\hu}! e^{\hu}}\,.$
This ensures that $\frac{1}{{\text{\ref{lp:spmk}}}} = \frac{1}{1+\ln({\hu} \tau^*)} > 1- \frac{{\hu}^{\hu}}{{\hu}! e^{\hu}}
$, which is the desired result.
We begin by simplifying the summation in the l.h.s. of Equation (\ref{eq:char_tau}). Observe that 
 \begin{align*}
{\hu}\sum_{i=0}^{{\hu}-1} \frac{1}{\tau ^i i!}  = {\hu}e^{\frac{1}{\tau}}- {\hu} \sum_{i={\hu}}^{\infty} \frac{1}{\tau ^i i!} \,,
\end{align*}
and 
 \begin{align*}
\sum_{i=0}^{{\hu}-1} \frac{i}{\tau ^i i!}  = \sum_{i=1}^{{\hu}-1} \frac{i}{\tau ^i i!} = \sum_{i=1}^{{\hu}-1} \frac{1}{\tau ^i (i-1)!} = \frac{1}{\tau}\sum_{i=0}^{{\hu}-2} \frac{1}{\tau ^i i!} =   \frac{1}{\tau}\left(e^{\frac{1}{\tau}}-\sum_{i={\hu}-1}^{\infty} \frac{1}{\tau ^i i!}\right) \,.
\end{align*}
Having simplified the summations, we now revisit Equation (\ref{eq:char_tau}):
\begin{align*}
\int_{1/{\hu}}^{\tau^{\star}}  \left({\hu}- e^{-\frac{1}{\tau}} \left(({\hu}-\frac{1}{\tau})e^{\frac{1}{\tau}}- ({\hu}-\frac{1}{\tau}) \sum_{i={\hu}}^{\infty} \frac{1}{\tau ^i i!} +  \frac{1}{\tau ^{{\hu}} ({\hu}-1)!}\right)\right)  d\tau = \frac{{\hu}^{\hu}}{{\hu}! e^{\hu}}  \\
\Rightarrow ~~ \ln({\hu}\tau^*) +\int_{1/{\hu}}^{\tau^{\star}}   \left(e^{-\frac{1}{\tau}}({\hu}-\frac{1}{\tau}) \sum_{i={\hu}}^{\infty} \frac{1}{\tau ^i i!} +  \frac{e^{-1/\tau}}{\tau ^{{\hu}} ({\hu}-1)!}\right)  d\tau = \frac{{\hu}^{\hu}}{{\hu}! e^{\hu}}
\,.
\end{align*}
Note that the integral in the l.h.s. of the above equation is positive. This implies that $\ln({\hu}\tau^*)<  \frac{{\hu}^{\hu}}{{\hu}! e^{\hu}}$, which is the desired result.}
$\blacksquare$
\end{proof}

\subsection{{Proof of Lemma \ref{lm:equal}} } \label{sec:equal}

\begin{oneshot} {\textbf{Lemma~\ref{lm:equal} (Lower Bound of Myersonian SPP Mechanisms in Multi-unit Settings).}}
Consider the $\hu$-unit setting. 
Let ${s^\star(\tau )} = \sum_{i =1}^n \P[ v_i~ \geq ~t_i ~\geq ~{\tau}]$.  
Then, $m(\tau)$, which is the expected number of units that the Myersonian SPP mechanism sells with a price of at least ${\tau}$, satisfies the following inequality.
$$m(\tau) \geq \hu- \sum_{i=0}^{\hu-1} (\hu-i)\left( \begin{array}{c}
n  \\
i \end{array} \right)\frac{s^\star(\tau)^i}{n^i}(1-\frac{s^\star(\tau)}{n})^{n-i}.$$
\end{oneshot}

\begin{proof}{Proof of Lemma \ref{lm:equal}} Let $m(\tau)$ be the expected number of units that the Myersonian SPP mechanism sells with a price of at least ${\tau}$. As earlier, we define $Z_{\tau}$ as the number of buyers with $v_i 
~\geq~ t_i' ~\geq ~ {\tau}$---that is, $Z_{\tau}= \sum_{i=1}^n\ind (v_i 
~\geq~  t_i' ~\geq ~ {\tau})$. Then, \[m(\tau) ~=~ \sum_{i =1}^{{\hu}-1} i\P[Z_{\tau} = i] +{\hu} \cdot\P[Z_{\tau}\ge {\hu}] ~=~\sum_{i =1}^{{\hu}-1} i\P[Z_{\tau} = i] +{\hu} (1-\sum_{i=0}^{{\hu}-1}\P[Z_{\tau}= i])\,.\]
Note that the second term of $m(\tau)$ is ${\hu}\cdot \P[Z_{\tau}\ge {\hu}]$ because we cannot serve more than $\hu$ buyers. Collecting common terms yields
\begin{align}m(\tau) ={\hu}- \sum_{i=0}^{{\hu}-1} ({\hu}-i) \P[Z_{\tau} =i]\,. \label{eq:m}\end{align}
 Given this, we begin with writing $\sum_{i=0}^{{\hu}-1}({\hu}-i)\P[Z_{\tau}=i]$ as a function of $s^\star_i (\tau)$, $i\in [n]$, where $s^\star_i(\tau) = \P[v_i~ \geq ~t_i ~\geq ~{\tau}]=\P[v_i~ \geq ~t_i' ~\geq ~{\tau}]$. 
In the proof, to simplify the notation, we denote $s^\star(\tau)$ and $s^\star_i(\tau)$, $i\in [n]$, with $s$ and $s_i$, respectively. 
Define polynomial $P_n(s_1, \ldots, s_n) := \sum_{i=0}^{{\hu}-1}({\hu}-i)\P[Z_{\tau}=i]$. We find an upper bound on $P_n(s_1, \ldots, s_n)$. By definition, $P_n(s_1, \ldots, s_n)$ is equal to
$${\hu}\prod_{i \in [n]} (1-s_i) + ({\hu}-1) \sum_{i\in [n]} s_i\prod_{j\neq i}(1-s_j)+\ldots +\sum_{S, S \subseteq [n], |S|={\hu}-1} \prod_{i\in S} s_i\prod_{j \in [n]-S}(1-s_j)\,.$$
We show that subject to $\sum_{i \in [n]} s_i = s$, the value of the polynomial $P_n$ is maximized when $s_1=s_2=\ldots=s_n = s/n$. This completes the proof.

To show this, consider a point $s=(s_1, \ldots, s_n)$, such that $\sum_{i \in [n]} s_i = s$. Select any pair of coordinates (without loss of generality, coordinates 1 and 2) and consider increasing one and decreasing the other. We show that 
\begin{align}\label{eq:difference}
P_n(s_1+\delta, s_2-\delta, s_3, \ldots, s_n)-P_n(s_1, s_2, \ldots, s_n)
\end{align} is quadratic in $\delta$, and the quadratic coefficient is negative. Then, considering the fact that $P_n$ is symmetric, it follows that the maximum in this direction is achieved at $\delta$, such that $s_1+\delta = s_2-\delta$---that is, when the two coordinates are equal. Since this argument holds for any pair of coordinates, it follows that the polynomial is maximized when all the coordinates are equal---that is, $s_i=s/n$ for $i \in [n]$. Showing this yields the desired result. 

Note than each term of polynomial $P_n(s_1, \ldots, s_n)$ is either a product of $s_i$, $i\in [n]$, or a product of $(1-s_i)$. For a given term in Equation (\ref{eq:difference}), we say $s_i$, $i \in [n]$, is in the first ``location" if this term is a product of $s_i$ and we say $s_i$ is in the second location if this term is a product of $1-s_i$. Then, we group the terms in expression (\ref{eq:difference}) based on the locations of $s_i$ for $i \in [n]-\{1,2\}$---that is, we put all the terms with the same location for all $i \in [n]-\{1,2\}$ in the same group. Now, consider a certain group. Note that any term in this group can be written as a product of $\prod_{i \in \text{Loc}_1} s_i \prod_{j\in \text{Loc}_2} (1-s_j)$ for $i,j \neq 1,2$, where Loc$_1$ and Loc$_2$ are the subsets of indices that are in the first and second locations, respectively, in the aforementioned group. Specifically, Loc$_2 = [n]-(\text{Loc}_1 \cup \{1, 2\})$. 
 Let us call this $\prod_{i \in \text{Loc}_1} s_i \prod_{j\in \text{Loc}_2} (1-s_j)$ a common sub-term of the group. We are interested in the multiplier of the common sub-term in $P_n(s_1+\delta, s_2-\delta, s_3, \ldots, s_n)-P_n(s_1, s_2, \ldots, s_n)$. 
 We will show that the multiplier of the common sub-term in $P_n(s_1+\delta, s_2-\delta, s_3, \ldots, s_n)-P_n(s_1, s_2, \ldots, s_n)$ is always zero, unless $L := |\loc_1| = \hu -1$. Further, we show that any non-zero multiplier of the common sub-term is quadratic and concave in $\delta$.
 
We consider the following three cases.
 
 \begin{itemize}
 \item Case 1 ($L \le \hu-3$): The multiplier of the common sub-term in $P_n(s_1+\delta, s_2-\delta, s_3, \ldots, s_n)-P_n(s_1, s_2, \ldots, s_n)$ depends on the location of $s_1$ and $s_2$ in $P_n$. Thus, the group associated with the common sub-term has four members, where each member corresponds to one particular location for $s_1$ and $s_2$. We consider each of these members separately.  
 \begin{itemize}
 \item Member 1: Both $s_1$ and $s_2$ are in the first location. In this case, the multiplier of the common sub-term is 
 \[({\hu}-(L+2))\cdot \big((s_1+\delta)(s_2-\delta) - s_1 s_2\big)=({\hu}-(L+2)) \cdot \left(-\delta s_1+\delta s_2- \delta^2\right)\,.\]
 The multiplier $({\hu}-(L+2))$ is due to the fact that the number $s_i$'s, $i\in [n]$, which is in the first location is $|\loc_1|+ 2 = L+2$.
 \item Member 2: Both $s_1$ and $s_2$ are in the second location. In this case, the multiplier of the common sub-term is 
  \begin{align}({\hu}-L)\cdot \big((1-s_1-\delta)(1-s_2+\delta) - (1-s_1) (1-s_2)\big)=   ({\hu}-L) \cdot \left(-\delta s_1+\delta s_2- \delta^2\right)\,. \label{eq:member_2}\end{align}
  \item Members 3 and 4: One of $s_i$, $i\in \{1, 2\}$, is in the first location and the other one is in the second location. In this case, the multiplier of the common sub-term is 
  \begin{align}\nonumber &({\hu}-(L+1))\big( (s_1+\delta)(1-s_2+\delta) - s_1(1-s_2) \big) \\
  &+({\hu}-(L+1))\big( (1-s_1-\delta)(s_2-\delta) - (1-s_1)s_2 \big)  = 2({\hu}-(L+1)) \big(\delta s_1-\delta s_2+ \delta^2\big)\,.\label{eq:member}\end{align}
 \end{itemize}
 Putting all these together, it is easy to see that the multiplier of the common sub-term with $|\loc_1|\le \hu-3$ is zero.
  \item Case 2: ($L = \hu-2$): Here, the group associated with the common sub-term has three members. Note that both $s_1$ and $s_2$ cannot be in the first location as the number of $s_i$s in the first location cannot exceed $\hu-1$.
   \begin{itemize}
 \item Member 1: Both $s_1$ and $s_2$ are in the second location. In this case, the multiplier of the common sub-term is 
  \[({\hu}-L)\cdot \big((1-s_1-\delta)(1-s_2+\delta) - (1-s_1) (1-s_2)\big)=   ({\hu}-L) \cdot \left(-\delta s_1+\delta s_2- \delta^2\right) = 2\left(-\delta s_1+\delta s_2- \delta^2\right)\,, \]
 where the last equality follows because $L = \hu-2$. 
 \item Members 2 and 3: One of $s_i$, $i\in \{1, 2\}$, is the first location and the other one is in the second location. In this case, by Equation (\ref{eq:member}), the multiplier of the common sub-term is 
  \begin{align*} 2({\hu}-(L+1)) \big(\delta s_1-\delta s_2+ \delta^2\big) =2\big(\delta s_1-\delta s_2+ \delta^2\big)\,, \end{align*}
 where the equation holds because $L = \hu-2$. 
 \end{itemize}
 Considering this, it is evident that the multiplier of the common sub-term with $|\loc_1|= \hu-2$ is zero.
\item Case 3: $L={\hu}-1$: In this case, the group has only one member for which both $s_1$ and $s_2$ are in the second location. Thus, by Equation (\ref{eq:member_2}), the coefficient of the common multiplier is $\left(-\delta s_1+\delta s_2- \delta^2\right)$. Observe that this term is quadratic and concave in $\delta$. This observation completes the proof.
  \end{itemize}
$\blacksquare$
\end{proof}
\subsection{{Proof of Lemma~\ref{lem:LP_multi}}} \label{sec:proof:lem:LP_multi}
 {
\begin{oneshot}
{\textbf{Lemma~\ref{lem:LP_multi} ({\color{black}Characterization} {\text{\ref{lp:spmk}}}).}}  Consider any positive integer $\hu>1$.  
 Let $\tau^{\star}> \frac{1}{{\hu}}$ be the unique solution of the following equation
 \begin{align*}
\int_{1/{\hu}}^{\tau^{\star}}  \left({\hu}- e^{-1/\tau} \sum_{i=0}^{{\hu}-1} \frac{({\hu}-i)}{\tau ^i i!}\right)  d\tau ~=~ \frac{{\hu}^{\hu}}{{\hu}! e^{\hu}}\,.\end{align*}
Then, \ref{lp:spmk}, defined in Theorem \ref{thm:pp_multi},  {is given by} $1 + \ln({\hu}\tau^{\star})$.
\end{oneshot}}

\medskip
\begin{proof}{Proof of Lemma \ref{lem:LP_multi}}
{We  rewrite the second constraint of Problem \eqref{lp:spmk} as
$\int_0^\infty  g_{\hu}(s(\tau)) s(\tau) d\tau ~\le~ 1$, where $g_{\hu}(x)=\frac{f_{\hu}(x)}{x}$. 
 Since by the last set of constraints of Problem \eqref{lp:spmk}, $s(\tau)$ is (weakly) decreasing in $\tau$ and $g_{\hu}(x)$ is decreasing in $x$ (see Lemma \ref{lm:monotonicity} stated at the end of this section),
function $g_{\hu}(s(\tau))$ is increasing in $\tau$. Thus, the optimal solution of Problem \eqref{lp:spmk} must satisfy that $s(\tau) = \min(\hu,1/\tau)$ whenever $\tau \leq \tau^{\star}$ and $s(\tau) = 0$ when $\tau > \tau^{\star}$.
This leads to
\begin{align*}
\int_0^\infty \min(1/\tau, {\hu}) g_{\hu}(\min(1/\tau, {\hu})) d\tau ~&= ~ \int_0^{1/{\hu}} \hu \cdot g_{\hu}({\hu}) d\tau  +\int_{1/{\hu}}^{\tau^{\star}} 
\frac{1}{\tau} g_{\hu}(1/\tau) d\tau \\
~&=~  \frac{{\hu}- e^{-{\hu}} \sum_{i=0}^{{\hu}-1} ({\hu}-i)\frac{{\hu}^i}{i!} }{{\hu}}
+\int_{1/{\hu}}^{\tau^{\star}}  \left({\hu}- e^{-1/\tau} \sum_{i=0}^{{\hu}-1} \frac{({\hu}-i)}{\tau ^i i!}\right)  d\tau ~=~ 1\,.
\end{align*}
Considering that $\frac{\sum_{i=0}^{{\hu}-1} ({\hu}-i)\frac{{\hu}^i}{i!}}{{\hu}}  = \frac{{\hu}^{\hu}}{{\hu}!}$, we have 
\begin{align*}
\int_0^\infty \min(1/\tau, {\hu}) g_{\hu}(\min(1/\tau, {\hu})) d\tau  ~&=~ 1-\frac{{\hu}^{\hu}}{{\hu}! e^{\hu}}
+\int_{1/{\hu}}^{\tau^{\star}}  \left({\hu}- e^{-1/\tau} \sum_{i=0}^{{\hu}-1} \frac{({\hu}-i)}{\tau ^i i!}\right)  d\tau ~=~ 1\,.
\end{align*}
Then, the optimal solution  
of Problem \eqref{lp:spmk}  {is given by} $1 + \ln({\hu}\tau^{\star})$. 
 This is so because 
\[\int_0^{\infty} s(\tau) d\tau ~=~  {\hu}\int_0^{1/{\hu}}  d\tau  +\int_{1/{\hu}}^{\tau^{\star}} 
\frac{1}{\tau}  d\tau  ~= ~1+ \ln({\hu}\tau^{\star})\,.\]

$\blacksquare$}
\end{proof}

{\color{black}\begin{lemma}[$\bf{g_{\hu}(x)}$ Is Monotone]\label{lm:monotonicity}
For any positive integer $\hu$, function $g_{\hu}(x)= \frac{{\hu}- e^{-x} \sum_{i=0}^{{\hu}-1} ({\hu}-i)\frac{x^i}{i!} }{x}$ is decreasing in $x$.
\end{lemma}}

\medskip
\begin{proof}{Proof of Lemma~\ref{lm:monotonicity}}
The plan is to take derivative of $g_{\hu}(x)$ w.r.t. $x$ and show that the derivative is non-positive. Function $g_{\hu}(x)$ is given by
$$g_{\hu}(x)=\frac{{\hu}-e^{-x}\sum_{i =0}^{{\hu}-1} ({\hu}-i)\frac{x^i}{i!}}{x}= \frac{\sum_{i=0}^{{\hu}-1} (1-e^{-x}({\hu}-i)\frac{x^i}{i!})}{x} =\sum_{i=0}^{{\hu}-1} \frac{ 1-e^{-x}({\hu}-i)\frac{x^i}{i!}}{x}\,.$$
By linearity of differentiation, the derivative of $g_{\hu}(\cdot)$ is the sum of derivative of $\frac{ 1-e^{-x}({\hu}-i)\frac{x^i}{i!}}{x}$ for $i=0, \ldots, {\hu}-1$. The derivative for term $i$---that is, $\frac{ 1-e^{-x}({\hu}-i)\frac{x^i}{i!}}{x}$, w.r.t. $x$---is 
$$\frac{\frac{x^i}{i!}e^{-x}({\hu}-i)(x-i)-1+e^{-x}({\hu}-i)\frac{x^i}{i!}}{x^2} ~=~ \frac{\frac{x^i}{i!}e^{-x}({\hu}-i)(x+1-i)-1}{x^2}\,.$$
Therefore, the derivative of $g_{\hu}(x)$ w.r.t. $x$ is given by
$$\frac{\sum_{i=0}^{{\hu}-1}[\frac{x^i}{i!}e^{-x}({\hu}-i)(x+1-i)-1]}{x^2} = \frac{-{\hu}+e^{-x}\sum_{i=0}^{{\hu}-1}[\frac{x^i}{i!}({\hu}-i)(x+1-i)]}{x^2}.$$
The derivative being non-positive is equivalent to 
\begin{align}e^{-x}\sum_{i=0}^{{\hu}-1}\frac{x^i}{i!}({\hu}-i)(x+1-i) \leq {\hu}.\label{eq:der}\end{align}
We divide the sum on the l.h.s. into more manageable terms. Note that
$$\sum_{i=0}^{{\hu}-1}\frac{x^i}{i!}({\hu}-i)(x+1-i)=\sum_{i=0}^{{\hu}-1}\frac{x^i}{i!}{\hu}(x-i)-\sum_{i=0}^{{\hu}-1}\frac{x^i}{i!}\cdot i(x-i) +  \sum_{i=0}^{{\hu}-1}\frac{x^i}{i!}\cdot {\hu} - \sum_{i=0}^{{\hu}-1}\frac{x^i}{i!}\cdot i .$$
We find the value of each of the four terms in the r.h.s. separately. The idea is to take advantage of telescopic sums.
For the first term, we have
\begin{align*}
\sum_{i=0}^{{\hu}-1}\frac{x^i}{i!}{\hu}(x-i)={\hu}\cdot\frac{x^{\hu}}{({\hu}-1)!}\,.
\end{align*}
For the second term, we have
\begin{align*}
\sum_{i=0}^{{\hu}-1}\frac{x^i}{i!}i(x-i)
&=\sum_{j=1}^{{\hu}-1}\sum_{i=j}^{{\hu}-1}\frac{x^i}{i!}(x-i)\\
&=\sum_{i=1}^{{\hu}-1}\frac{x^i}{i!}(x-i)
+\sum_{i=2}^{{\hu}-1}\frac{x^i}{i!}(x-i)
+\ldots
+\sum_{i={\hu}-1}^{{\hu}-1}\frac{x^i}{i!}(x-i)\\
&=\left[ \frac{x^{\hu}}{({\hu}-1)!}-\frac{x}{0!} \right]
+\left[ \frac{x^{\hu}}{({\hu}-1)!}-\frac{x^2}{1!} \right]
+\ldots
+\left[ \frac{x^{\hu}}{({\hu}-1)!}-\frac{x^{{\hu}-1}}{({\hu}-2)!} \right]\\
&=({\hu}-1)\frac{x^{\hu}}{({\hu}-1)!}-\left[ \frac{x^{{\hu}-1}}{({\hu}-2)!}+\ldots+\frac{x^2}{1!}+\frac{x}{0!}\right]\,.
\end{align*}
For the third term, we have
\begin{align*}
\sum_{i=0}^{{\hu}-1}\frac{x^i}{i!}{\hu}
={\hu}\left[ \frac{x^{{\hu}-1}}{({\hu}-1)!} + \frac{x^{{\hu}-2}}{({\hu}-2)!} +\ldots + \frac{x^0}{0!} \right]\,.
\end{align*}
And, finally, for the fourth term we have
\begin{align*}
\sum_{i=0}^{{\hu}-1}\frac{x^i}{i!}i
&=\sum_{i=1}^{{\hu}-1}\frac{x^i}{(i-1)!}=\frac{x^{{\hu}-1}}{({\hu}-2)!} + \frac{x^{{\hu}-2}}{({\hu}-3)!} +\ldots + \frac{x}{0!}\,.
\end{align*}
Putting everything together, we obtain
\begin{align*}
&\sum_{i=0}^{{\hu}-1}\frac{x^i}{i!}({\hu}-i)(x+1-i)\\
&={\hu}\cdot\frac{x^{\hu}}{({\hu}-1)!}-({\hu}-1)\frac{x^{\hu}}{({\hu}-1)!}+\left[ \frac{x^{{\hu}-1}}{({\hu}-2)!}+\ldots+\frac{x^2}{1!}+\frac{x}{0!} \right]\\
&+{\hu}\left[\frac{x^{{\hu}-1}}{({\hu}-1)!} + \frac{x^{{\hu}-2}}{({\hu}-2)!} +\ldots + \frac{x^0}{0!}\right]\\
&-\left[ \frac{x^{{\hu}-1}}{({\hu}-2)!} + \frac{x^{{\hu}-2}}{({\hu}-3)!} +\ldots + \frac{x}{0!} \right]\\
&=\frac{x^{\hu}}{({\hu}-1)!}+{\hu}\left[\frac{x^{{\hu}-1}}{({\hu}-1)!} + \frac{x^{{\hu}-2}}{({\hu}-2)!} +\ldots + \frac{x^0}{0!}\right]\,.
\end{align*}

Note that by the Taylor expansion, 
$e^x = \sum_{i=0}^\infty \frac{x^i}{i!}$. Therefore,
\begin{align*}
e^{-x}\sum_{i=0}^{{\hu}-1}\frac{x^i}{i!}({\hu}-i)(x+1-i)
&=e^{-x} \left[ \frac{x^{\hu}}{({\hu}-1)!} + {\hu}(e^x - \sum_{i={\hu}}^\infty \frac{x^i}{i!}) \right]\\
&=e^{-x} \left[{\hu}\cdot\frac{x^{\hu}}{{\hu}!} + {\hu}(e^x - \sum_{i={\hu}}^\infty \frac{x^i}{i!})\right]=e^{-x} \left[{\hu}(e^x - \sum_{i={\hu}+1}^\infty \frac{x^i}{i!})\right]\leq {\hu}\,. 
\end{align*}
This concludes the proof (see Equation (\ref{eq:der})).
$\blacksquare$ \end{proof}}
\subsection{Proof of  Lemma \ref{lemma:pa_structure}}\label{sec:proof:lem:pos}

{\begin{oneshot}
{\textbf{Lemma~\ref{lemma:pa_structure} (Optimal Mechanism in Position Auction Settings).}}
 For  $j\in [n]$ and $i\in [n]$, let $x_i^j({\bf{v}}) \in \{0,1\}$ and $\pi_i^j(\bf{v}) \in \R_+$ be the
  allocations and payments in the $j$-unit optimal mechanism when buyers' value is ${\bf v}=(v_1, \ldots, v_n)$. Then, the mechanism for the PA
  settings with the following rules is optimal:
  $$x_i({\bf{v}}) = \sum_{j\in [n]} (\alpha_j - \alpha_{j+1}) x_i^j({\bf{v}}) \quad  \text{and} \quad \pi_i({\bf{v}}) = \sum_{j\in [n]} (\alpha_j - \alpha_{j+1}) \pi_i^j(\bf{v})\,,$$  
  where $\alpha_{n+1} = 0$.  
\end{oneshot}}

\medskip
{\begin{proof}{Proof of Lemma \ref{lemma:pa_structure}}
As stated earlier, for any buyer $i$, there is $J_i$ such that
  $x_i^{j}({\bf{v}}) = 0$ for $1 \leq j < J_i$ and $x_i^{j}({\bf{v}}) = 1$ for $J_i \leq j \leq n$. (When buyer $i$ is not allocated in any of the $n$ multi-unit auctions, we set $J_i$ to $n+1$.) In this case,  $\sum_{j\in [n]} (\alpha_j - \alpha_{j+1}) x_i^j({\bf{v}}) = \alpha_{J_i}$, which is the click-through-rate of position $J_i$.   
Now, consider the optimal mechanism in the PA setting. 
In this auction, positions are assigned in a decreasing order of buyers' (ironed) virtual values---that is, the first position is allocated to the buyer with the highest non-negative virtual value, the second position is allocated to the buyer with the second highest non-negative (ironed) virtual value, and so on. This implies that in the optimal mechanism, position $J_i$ must be allocated to buyer $i$, as  buyer $i$  has the $J_i$-th highest virtual value. With regard to the payment, we note that the payment rule in the optimal Myersonian mechanism is a linear function of the allocation rule. Considering this and the fact that  $x_i^j({\bf{v}})$ results in payment of $\pi_i^j({\bf{v}})$, then $\sum_{j\in [n]} (\alpha_j - \alpha_{j+1}) x_i^j({\bf{v}})$ results in an expected payment of $\sum_{j\in [n]} (\alpha_j - \alpha_{j+1}) \pi_i^j(\bf{v})$ for buyer $i$. \
$\blacksquare$\end{proof}}

\subsection{Proof of Theorem  \ref{thm:pp2}}\label{sec:proof:claim}
\begin{oneshot}
{\textbf{Theorem~\ref{thm:pp2} (Revenue Bound of ESP Auctions in Single-unit Settings).}}
	In a single-unit $n$-buyer setting with independent private values, there {exists a vector of prices} $\mathtt{\bp} =(\mathtt{p}_1, \mathtt{p}_2, \ldots, \mathtt{p}_n)$ such that \begin{itemize}\item \textbf{{Non-discretized Bound.}} $\EG(\mathtt{\bp}) ~\geq~
	 \Opt \cdot \frac{1}{{\text{\sf 
FR-ESP}}}$, and 
\item \textbf{{Discretized Bound.}} $\EG(\mathtt{\bp}) ~\geq~\Opt \cdot \frac{1}{{\text{\sf \small{FR-ESP-d}}}(k)}$ for any positive integer $k$,
\end{itemize}
where $\Opt$ is the expected optimal revenue, $\EG(\mathtt{\bp})$ is the expected revenue of an ESP auction with personalized reserve prices $\mathtt{\bp}$, and $\text{\sf 
FR-ESP}$ and ${{\text{\sf \small{FR-ESP-d}}}(k)}$ are defined as 
 \begin{multicols}{2} \setlength{\columnseprule}{1pt}
\noindent
\begin{align*} 
&\text{\sf{FR-ESP}} ~=~\\ &{\max_{\{{s(\tau), \tau\ge 0\}}}} ~\int_0^\infty 
s(\tau) d\tau\quad \text{s.t.} \nonumber\\[8pt]
&  \begin{aligned}
 &\int_{\T_x}^\infty  (2-2e^{-s(\tau)} - s(\tau)e^{-s(\tau)}) d\tau \\[8pt]
 &~~+\int_0^{\T_x} ( x +  (1-e^{-s(\tau)})) d\tau \le2~~~~  \forall x\in [0,1] \nonumber \\
& \int_0^\infty  f(s(\tau)) d\tau ~\le~ 1\\[8pt]
&\text{$s(\cdot)$ is weakly decreasing}\,,~~&\nonumber
\end{aligned}
 \label{lp:esp} \tag{\small{\sf{FR-ESP}}}
\end{align*}
\begin{align*}
&{\text{\sf \small{FR-ESP-d}}}(k)= \\
& \max_w \sum_{i\in [k]} w_i \quad \text{s.t.} \nonumber \\
&\begin{aligned}   &\sum_{i=1}^j  w_i 
\frac{2(1-e^{-\mathtt{s}_i}) -
 \mathtt{s}_ie^{-\mathtt{s}_i}}{\mathtt{s}_i} \\ 
\quad &~~~+ \sum_{i=j+1}^{k} w_i \frac{ \mathtt{s}_j+(1-e^{-\mathtt{s}_i}) }{\mathtt{\mathtt{s}}_i}
    ~\leq~ 2,  &\forall j \in [k] \\
   & \sum_{i\in [k]} w_i \frac{1-e^{-\mathtt{s}_i}}{\mathtt{s}_i}~ \leq ~ 1 & \\
   & w_i \geq 0\,. &\forall i \in [k]
\end{aligned}
\label{lp:esp:d} \tag{\small{\sf FR-ESP-d}}
\end{align*}
\end{multicols}
Here, $f(x) = ({1 - e^{-x}})$, for any $x\in [0,1]$,  $\T_x~=~  
\inf\{\tau: s^\star(\tau) \le  x\}$, and  $\mathtt{s}_i = i/k$, for $i\in [k]$. 
Further, setting $k=3200$, the approximation factor is 
$\frac{1}{\text{\small{\sf FR-ESP-d}}(3200)} = 0.6620$. 
\end{oneshot}
\begin{proof}{Proof of Theorem  \ref{thm:pp2}} 
The proof has two parts. First, we show the non-discretized bound and then we verify the discretized one.

\textbf{Non-discretized Bound.} The proof is similar to the proof of Theorem \ref{thm:pp1}. The main difference is showing that $s^\star(\cdot)$---corresponding to the optimal mechanism---satisfies the first set of constraints in Problem \eqref{lp:esp}. (Note that the only thing that differentiates Problems \eqref{lp:esp} and \eqref{lp:spm} is their first sets of constraints.) Thus, here, we only focus on the main difference and exclude the remainder of the proof. 

We begin with a few definitions.

\textbf{Myersonian ESP Auction.}  We run the $\EG$ auction with personalized reserve prices for each buyer, with buyer $i$ facing the re-sampled threshold $t_i'$ as his reserve price (see the definition in 
Section~\ref{prelim:defn}). Let $\ME$ denote the expected revenue of this auction.

\textbf{Uniform ESP Auction.} We run the $\EG$ auction with a uniform reserve price of $p^{\star}_E = \argmax_p \E_{\mathbf{v}}\left[\max(p, 
v_{(2)})\cdot\ind(\max_{i\in [n]} 
v_i \geq p)\right]$, 
where $v_{(2)}$ is the second-highest bid (which is also equal to the second-highest value in a 
truthful auction). We denote the revenue of this auction by $\UE$.

Now, we show that $s^{\star}(\cdot)$ satisfies the first set of constraints. 

\textbf{First Set of Constraints.} 
Let $\T_x~=~  \inf\{\tau: 
s^\star(\tau) \le  x\}$, $x\in [0,1]$.  
In addition, with a slight abuse of notation, let $\UE_x$ be the revenue of the ESP auction that posts a 
uniform price of $\T_x$ for all buyers. By the definition of the uniform ESP auction, we obtain $\UE~=~ \max_{x\in 
[0,1]} ~ \UE_x$.  We now bound $\ME+\UE_x$ for any $x\in [0,1]$. As usual, without loss of generality, we assume that  $\max(\ME, \UE)=1$.

 We begin by bounding $\UE_x$. 
 Define $u_x(\tau)$ as the probability that the ESP auction with the uniform price $\T_x$ sells with a price of at 
least
$\tau$. Then, $\UE_x = \int_{\tau =0}^{\infty} u_x(\tau) d\tau$.  
Next, we bound $\UE_x$ by bounding $u_x(\tau)$. As we argued in the proof of Theorem~\ref{thm:pp_n}, for any $\tau \le \T_x$, we have $u_x(\tau)$ by 
$u_x(\tau) ~ \geq~ s^\star(\T_x)~ \geq ~ x$ for $\tau \le \T_x$; see Equation \eqref{eq:u_x_1_1}. 
For $\tau >\T_x$, we bound $u_x(\tau)$ by noting that the  ESP auction with uniform price $\T_x$ can sell at a price of at least $\tau$ only if there are at
least two buyers bidding above $\tau$.  Let $\widehat Z_{\tau}=  \sum_{i=1}^n\ind (v_i 
~\geq ~ {\tau})$ and $Z_{\tau}= \sum_{i=1}^n\ind (v_i 
~\geq~  t_i' ~\geq ~ {\tau})$.

Then, we have
\begin{align}u_x(\tau) ~=~ \P[\widehat Z_{\tau} \geq 2] ~\geq~ \P[Z_{\tau} \geq 
2] ~=~ 1 - \P[Z_{\tau} = 0] - 
\P[Z_{\tau} = 1]\,, \quad \tau >\T_x\,. \label{eq:u_x_2}\end{align}
Combining these two bounds, we obtain
\begin{align} \label{eq:uniform_eager}
\UE_x ~=~ \int_{\tau =0}^{\infty} u_x(\tau) d\tau ~\ge ~ \int_0^{\T_x} x d\tau + \int_{\T_x}^\infty (1 -  \P[Z_{\tau} = 0] - \P[Z_{\tau} = 1]) d\tau\,.
\end{align}

We next bound $\ME$.  With a slight abuse of notation, let $\m(\tau)$ be the probability that the Myersonian ESP auction sells at a price greater than or equal to $\tau$. 
Then, by the definition of $Z_{\tau}$, we have 
\begin{align}\label{eq:m_bound}\m(\tau)~ =~ \P[Z_{\tau} \geq 1] ~ \geq 
~1-\P[Z_{\tau} = 0]\,.
\end{align}
Then, considering  that  $\ME = \int_{\tau =0}^{\infty} \m(\tau) d\tau 
$ and by using  {Inequalities  (\ref{eq:uniform_eager}) and (\ref{eq:m_bound})}, we obtain
\begin{align} \label{eq:bound_sum}
2 ~\ge ~\UE_x+ \ME ~\ge ~  \int_0^{\T_x} \big(x+1-\P[Z_{\tau} = 0]\big) d\tau + \int_{\T_x}^\infty (2 - 2 \P[Z_{\tau} = 0] - \P[Z_{\tau} = 1]) d\tau\,, 
\end{align}
where the first inequality follows from our assumption that $\max(\ME, \UE) 
=1$. To simplify the r.h.s. of ~\eqref{eq:bound_sum}, we utilize 
Lemma~\ref{lem:prob}, which says $\P[Z_{\tau} = 0] ~\le~ e^{-s^\star(\tau)}$ and  
$2\P[Z_{\tau} = 0] + \P[Z_{\tau} = 1] ~\le~ (2 + s^\star(\tau)) e^{-s^\star(\tau)}$. This yields
 \begin{align}  \int_0^{\T_x} ( x +  (1-e^{-s^\star(\tau)})) d\tau + \int_{\T_x}^\infty  (2-2e^{-s^\star(\tau)} - s^\star(\tau)e^{-s^\star(\tau)}) d\tau~\le ~ 2\,. \label{eq:ineq_UE_ME}
  \end{align}
  The above equation holds for any $x \in [0,1]$ and it confirms that $s^\star(\cdot)$ satisfies the first set of constraints of Problem \eqref{lp:esp}.
  
  \textbf{Discretized Bound.} Thus far, we showed the non-discretized bound. The proof of the discretized bound here is similar to the proof of the discretized bound in Theorem \ref{thm:pp_n}. However, showing that the $w_i^\star$'s associated with the optimal mechanism satisfy the first set of constraints of Problem \eqref{lp:esp:d} does not directly follow from the proof of Theorem \ref{thm:pp_n}. (This is not the case for its second constraint.) Thus, here we focus on the first set of constraints and exclude the remainder of the proof.

As we already established in the first part of the proof that, for any $x\in [0,1]$, we have
 \begin{align*} 2 ~\ge~ \int_0^{\T_x} ( x +  (1-e^{-s^\star(\tau)})) d\tau + \int_{\T_x}^\infty  (2-2e^{-s^\star(\tau)} - s^\star(\tau)e^{-s^\star(\tau)}) d\tau\,,
  \end{align*}
  where $\T_x=\inf\{\tau: 
s^\star(\tau) \le  x\}$, $x\in [0,1]$. Set $x = j/k$. Then, we have
\begin{align}
2 &~\ge~  \int_0^{\T_x} ( x +  (1-e^{-s^\star(\tau)})) d\tau + \int_{\T_x}^\infty  (2-2e^{-s^\star(\tau)} - s^\star(\tau)e^{-s^\star(\tau)}) d\tau \nonumber \\
  &~=~ \sum_{i=j+1}^{k} \int_{\uptau_{i}}^{\uptau_{i-1}}  \frac{( x +  
  (1-e^{-s^\star(\tau)}))}{s^\star(\tau)} s^\star(\tau) d\tau +  \sum_{i=1}^j 
  \int_{\tau_{i}}^{\tau_{i-1}} \frac{(2-2e^{-s^\star(\tau)} - 
  s^\star(\tau)e^{-s^\star(\tau)})}{s^\star(\tau)}
  s^\star(\tau) d\tau \nonumber \\ 
  &{~\ge~    \sum_{i=j+1}^{k} w_{i}^\star \frac{ x + (1-e^{-\mathtt{s}_i})}{\mathtt{s}_i} + \sum_{i=1}^j  w_{i}^\star 
  \frac{(2-2e^{-\mathtt{s}_i} - \mathtt{s}_i e^{-\mathtt{s}_i})}{\mathtt{s}_i}\,,}   \label{eq:second_const}
\end{align}
where the  equality follows from the definitions of $\uptau_j$'s and $\T_x$ and the 
fact that at $x =j/k$,  $\T_x = \uptau_j$.
Recall that $0 = \uptau_k \leq \uptau_{k-1} \leq \ldots \leq \uptau_1 \leq
\uptau_0 = \infty$ such that $\uptau_j =\inf\{\tau: s^\star(\tau) \le  j/k\}$, $j\in [k-1]$, and
$\Opt ~=~ \sum_{i\in [k]} w_{i}^\star$, where $w_{i}^\star ~=~ \int_{\uptau_{i}}^{\uptau_{i-1}} s^\star(\tau) d\tau$. 
The second inequality follows from the definitions of $w_{i}^\star$'s, and $\mathtt{s}_i$'s 
and the facts that $y\mapsto \frac{1}{y}(2-(2+y)e^{-y})$ and $y\mapsto 
\frac{1}{y} (x + 1-e^{-y}) $ are decreasing in $y \in [0,1]$ (for proof, see Lemma \ref{lem:monotone_2}) and that $s^\star(\tau)$ itself is a decreasing function.   
$\blacksquare$
\end{proof}

\begin{lemma} \label{lem:monotone_2}
Functions $y\mapsto \frac{1}{y}(2-(2+y)e^{-y})$ and $y\mapsto \frac{1}{y} (x + 
1-e^{-y}) $ are decreasing in $y \in [0,1]$  and $y \in
  [x,1]$. Further, function
  $y \mapsto \frac{1}{y}(2-\p_n(y))$ 
  is decreasing\footnote{The monotonicity of $\frac{1}{y}(2-\p_n(y))$ in $y$ is used in the proof of Theorem \ref{thm:esp_n}, which is the n-dependent version of Theorem \ref{thm:pp2}. This proof of Theorem~\ref{thm:esp_n} is omitted because of its similarity to the proof for Theorem~\ref{thm:pp2}.} in $y \in [0,1]$ for every positive integer $n$ and every $x \geq 0$, where 
   $\p_n(y) = 2(1-\frac{y}{n})^n + y(1-\frac{y}{n})^{n-1}$.
\end{lemma}
\begin{proof}{Proof of Lemma \ref{lem:monotone_2}}
For the first function, note that $\frac{d\left(
     \frac{1}{y}(2-(2+y)e^{-y}) \right) }{dy} = \frac{1}{y^2}(2ye^{-y}+y^2e^{-y}+2e^{-y}-2)
   \leq 0 $ due to the inequality $e^y \ge 1+y+\frac{y^2}{2}$. For the second
   function, \[\frac{\partial \left(  \frac{1}{y} (x + 1-e^{-y}) \right)}{\partial y}  =
   \frac{ye^{-y}-x-1+e^{-y}}{y^2} \le 0\,,\quad \]
    where the inequality holds because $1+y \leq e^y$ and $x \geq 0$. 
    
  Next, we show that function
  $y \mapsto \frac{1}{y}(2-\p_n(y))$  
  is decreasing in $y \in [0,1]$ for every positive integer $n$ and every $x \geq 0$.
 By definition of $\p_n(\cdot)$, we have 
\begin{align}\frac{d \left(\frac{1}{y}(2-\p_n(y))\right)}{dy} ~=~ 
\frac{(1-\frac{y}{n})^{n-2}\left(y^2\frac{n-1}{n}+2y(1-\frac{y}{n})+2(1-\frac{y}{n})^2-\frac{2}{(1-\frac{y}{n})^{n-2}}\right)}{y^2}\,.
\end{align}
Note that the derivative is non-positive if 
\begin{align}
y^2\frac{n-1}{2n}+y(1-\frac{y}{n})+(1-\frac{y}{n})^2 ~=~ 1+\frac{n-2}{n}y+\frac{(n-1)(n-2)}{2}\frac{y^2}{n^2}~\le~ \frac{1}{(1-\frac{y}{n})^{n-2}}\,,
\end{align}
where the equality follows from simple algebra. Below, we verify the inequality. This reveals that $\frac{1}{y}(2-\p_n(y))$ is decreasing in $y$. 
For any $y\in[0,1)$, we have
\begin{align*}
\dfrac{1}{(1-\frac{y}{n})^{n-2}}&~\ge~ \left( 1+\frac{y}{n}+\frac{y^2}{n^2}  \right)^{n-2}\\
&~\ge~ 1+(n-2)\frac{y}{n}+\left( \frac{(n-2)(n-3)}{2}+ (n-2) \right)\frac{y^2}{n^2}\\
&~=~1+\frac{n-2}{n}y+\frac{(n-1)(n-2)}{2}\frac{y^2}{n^2}\,.
\end{align*} 
The last equation is the desired result.
    
$\blacksquare$
\end{proof}
 \footnotesize{
\bibliographystyle{informs2014}
\bibliography{sigproc.bib}}
\end{document}